\documentclass[11pt]{article}
\usepackage{graphicx}
\usepackage{pgfplots}
\usepackage{subcaption}
\usepackage[font=scriptsize]{caption} 

\usepackage[colorlinks=true,citecolor=blue]{hyperref}
\usepackage{natbib}
\usepackage{graphicx}
\usepackage{amsfonts}
\usepackage{amsmath}
\usepackage{amssymb}
\usepackage{url}
\usepackage{fancyhdr}
\usepackage{indentfirst}

\usepackage{enumerate}
\usepackage{titlesec}
\usepackage{amsthm}
\usepackage{dsfont}
\usepackage[misc]{ifsym}


\theoremstyle{definition}

\newtheorem{example}{Example}

\newtheorem{definition}{Definition}
\newtheorem{assumption}{Assumption}

\theoremstyle{plain}
\newtheorem{theorem}{Theorem}

\newtheorem{proposition}{Proposition}

\theoremstyle{remark}
\newtheorem{remark}{Remark}

\usepackage{cases}

\theoremstyle{definition}

\def\N{\mathbb{N}}

\def\p{\mathbb{P}}
\def\E{\mathbb{E}}

\def\R{\mathbb{R}}

\def\M{\mathcal{M}}

\def\X{\mathcal{X}}

\def\var{\mathrm{Var}}

\def\X{\mathcal{X}}

\def\d{\,\mathrm{d}}
\DeclareMathOperator*{\esssup}{ess\text{-}sup}

\newcommand{\GD}{\mathrm{GD}}
\newcommand{\GC}{\mathrm{GC}}

\DeclareMathOperator*{\argmin}{arg\,min}

\usepackage[onehalfspacing]{setspace}

\usepackage{bm}
\usepackage{tikz-qtree}
\usepackage{tikz}

\setlength{\bibsep}{1pt}


\topmargin -1.30cm \oddsidemargin -0.00cm \evensidemargin 0.0cm
\textwidth 16.56cm \textheight 23.20cm

\pgfplotsset{compat=1.18}

\title{Higher-order Gini indices: An axiomatic approach}
\author{Xia Han\thanks{School of Mathematical Sciences, LPMC and AAIS, Nankai University, China.  \Letter~{\url{xiahan@nankai.edu.cn}}}  \and Ruodu Wang\thanks{Department of Statistics and Actuarial Science, University of Waterloo,  Canada. \Letter~{\url{wang@uwaterloo.ca}}}\and  Qinyu Wu\thanks{Department of Statistics and Actuarial Science, University of Waterloo, Canada.  \Letter~{\url{q35wu@uwaterloo.ca}}}}
\date{\today}

\begin{document}
	\maketitle
	\begin{abstract}

Via an axiomatic approach, we  characterize the family of $n$-th order Gini deviation, defined as the expected range over $n$ independent draws from a distribution, to quantify joint dispersion across multiple observations. This family extends the classical Gini deviation, which relies solely on pairwise comparisons.
The normalized version is called a high-order Gini coefficient. 
The generalized indices grow increasingly sensitive to tail inequality as $n$ increases, offering a more nuanced view of distributional extremes.  The higher-order Gini deviations admit a Choquet integral representation, inheriting the desirable properties of coherent deviation measures. Furthermore, we show that both the $n$-th order Gini deviation and the $n$-th order Gini coefficient are statistically $n$-observation elicitable, allowing for direct computation through empirical risk minimization. Data analysis using World Inequality Database data reveals that higher-order Gini coefficients  capture disparities that the classical Gini coefficient may fail to reflect, particularly in cases of extreme income or wealth concentration. 
\end{abstract}

\textbf{Keywords}: Gini coefficient,    elicitability,   inequality measurement, Choquet integrals,
risk measures

 \noindent\rule{\textwidth}{0.5pt}

\section{Introduction}

The Gini deviation (GD) and Gini coefficient (abbreviated as GC, also called the Gini index) are fundamental tools for measuring inequality and dispersion in economic research, owing to its intuitive interpretation, broad comparability, and robust statistical properties.
Originally introduced by \cite{G12, G21}, the classical GC is defined through its geometric definition based on the Lorenz curve.  Its intuitive interpretation and straightforward quantile representation have established it as a standard summary statistic (\cite{G72} and \cite{LY84}).  Beyond its theoretical appeal, GC is widely used in practice to summarize and compare income or wealth distributions across countries and over time (\cite{KSS10}). Policymakers rely on it to monitor inequality trends, evaluate the effects of tax and transfer systems, and design social welfare programs (\cite{D00} and \cite{KP05}). For example, an increase in GC is often interpreted as a signal of rising income concentration, motivating debates over progressive taxation or redistribution. In international development, organizations such as the World Bank and the Organisation for Economic Co-operation and Development  use the Gini index as a headline indicator to track progress toward reducing inequality and promoting inclusive growth.  
The measurement of economic inequality has been extensively treated in numerous monographs and handbooks, with GC and its various extensions occupying a central position in this literature; see, e.g., \cite{AP07}, \cite{BD11} and \cite{AB14}. Beyond public policy, Gini-type indices, especially GD, are employed in actuarial risk measurement (\cite{D90}, \cite{FZ17} and \cite{FWZ17}),  decision analysis (\cite{EL21}, who called GD as maxiance), portfolio optimization (\cite{Y82} and \cite{RGC04}), and statistical scoring rules (\cite{ZDT00} and \cite{G11}).

Both GD and GC can be formulated as functions of random variables. For a random variable $X$ that represents a distribution of wealth or income,
GD averages relative dispersion through pairwise differences:
\begin{equation}
    \label{eq:intro-GD}
\GD(X) = \frac{1}{2}\, \E[|X - X'|] , 
\end{equation} 
where $X'$ is an independent and identically distributed (iid) copy of $X$. 
The GC of $X$ is defined as
$$
\GC(X) = \frac{\GD(X)}{\E[X]} = \frac{\E[|X - X'|]}{\E[X+X']},
$$
where we assume $X\ge 0$ and $\E[X]>0$. Both GD and GC use two iid observations to capture the inequality via all possible pairwise contrasts in the distribution.

\begin{figure}[t!]
\centering
 \includegraphics[width=12cm]{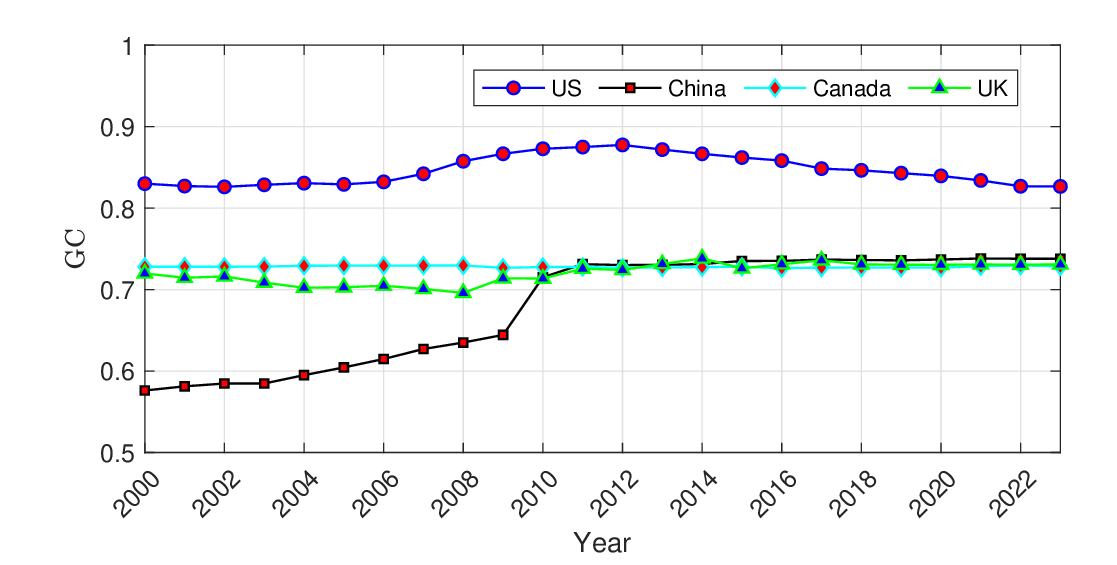}
 \captionsetup{font=small}
 \caption{\small Values of $\mathrm{GC}$ based on household wealth for the United States, China, Canada, and the United Kingdom from 2000 to 2023
}\label{fig:GC2_country}
\end{figure}
Despite its popularity, GC is only one measure of distributional inequality or dispersion, and it may fail to report some practically important features of the underlying distribution.\footnote{The measurement of inequality is a complex task. For instance, \cite{CK25} provided a comprehensive discussion on various concepts and definitions in measuring income inequality. Our paper focuses on the choice of the indices as measurement tool, implicitly assuming that the quantity to be measured (e.g., income or wealth)  is well defined.} 
Figure \ref{fig:GC2_country} presents the GC curves of wealth distributions for four economies---US, China (mainland), UK, and Canada---over the period of 2000 to 2023, with data from the World Inequality Database (WID).\footnote{WID (\url{http://wid.world}) provides the full wealth distribution for each country, but not the full income distribution for each country. Therefore, we take the wealth distributions as the main example.} As we can see from the figure, 
US has the largest GC among the four countries, and UK, Canada and China have similar GC curves
since 2010, although the GC curve of China is substantially smaller than the others before 2010.  
From these curves, 
one may be tempted to make the following conclusions.
\begin{enumerate}
    \item  
China has a similar wealth distribution to Canada and UK since 2010. 
    \item The wealth distribution in China has a similar level of equality to Canada and UK since 2010. 
    \item The wealth inequality of China has been increasing up to 2010, and it is stabilized after 2010.  
\end{enumerate}
To better understand these statements, we plot in 
Figure~\ref{fig:topshare-intro}  the wealth shares held by the top 1\% and top 10\% of the population in China, Canada, and UK. The data clearly show that both the top 1\% and top 10\% populations in China have consistently held a larger share of total wealth than their counterparts in Canada and the UK after 2010. This provides evidence against the first two  statements, indicating that China’s wealth distribution has been notably more unequal than UK and Canada. 
On the other hand, both Figures \ref{fig:GC2_country} and \ref{fig:topshare-intro} agree that the wealth distribution in China may be stabilized after 2010. 

\begin{figure}[t!]
\centering
 \includegraphics[width=17cm]{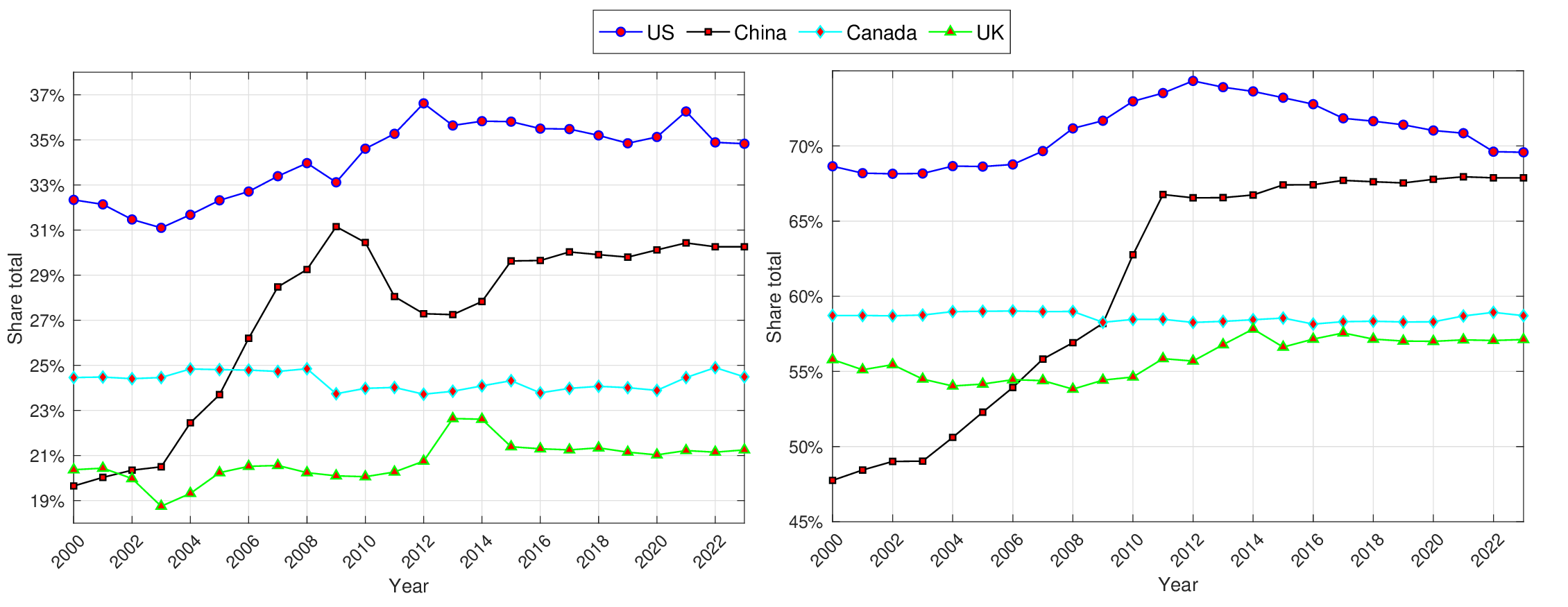}
 \captionsetup{font=small}
 \caption{\small Wealth shares held by the top 1\% (left) and top 10\% (right) in  the United States, China, Canada, and the United Kingdom from 2000 to 2023.
 }\label{fig:topshare-intro}
\end{figure}

The above discussion suggests that GC may not provide the most relevant information about wealth inequality, although it is the most popular index for this purpose. 
To address this limitation of GC, we consider generalizing 
GC to a family of indices. We begin by examining  key properties satisfied by GD and GC, which we hope to keep.   A key insight of GD in \eqref{eq:intro-GD} is that it can be expressed as the expected value of a function of two independent copies of the underlying variable.  Motivated by this observation, we consider natural extensions that involve expectations of functions of $n$ iid copies, with $n \geq 2$.
Building on this idea,  together with several other natural properties satisfied by GD---including symmetry, comonotonic additivity, and continuity---we develop an axiomatic characterization showing that any measure satisfying these axioms can be expressed as  a linear combination of functionals in the following family, which we call the \emph{higher-order Gini deviations},
$$
\GD_n(X) = \frac{1}{n}\, \E \bigl[\max\{X_1,\dots,X_n\} - \min\{X_1,\dots,X_n\}\bigr],
$$
where $X_1, \dots, X_n$ are iid copies of $X$. 
The proof of our axiomatic characterization goes by first establishing a signed Choquet integral representation and then identifying those that can be represented by expected values of functions of iid observations. 
Many classes of  decision models, such as the dual utility of \cite{Y87} and the Choquet expected utility model of \cite{Schmeidler89}, are formulated based on Choquet integrals;  see the more recent works of \cite{StrackWambach17}, \cite{GMSZ25} and \cite{HK25}.  The main differences here is that our functionals $\GD_n$ do not satisfy monotonicity, thus they are called ``signed".

Observing from \eqref{eq:intro-GD} that the Gini deviation admits the alternative representation $$\GD(X) = \frac{1}{2} \E[\max \{X,X'\} - \min \{X,X'\}],$$ it follows that $\GD_n$ with 
 $n=2$  coincides with GD. Thus, the $n$-th order Gini deviation $\GD_n$ replaces the pairwise absolute difference with the expected range of 
$n$ independent draws, thus quantifying the average spread within larger groups. As $n$ increases, $\GD_n$ becomes increasingly sensitive to tail behavior and concentration,  since the range over
$n$ iid draws amplifies the impact of extreme values. Based on $\GD_n$,
the corresponding \emph{$n$-th order  Gini coefficient} is defined by   $$\GC_n(X) = \frac{\GD_n(X) }{\mathbb{E}[X]}.$$ The quantity $\mathrm{GC}_n$ as a generalization of GC  was recently proposed by \cite{GROG24}, but a comprehensive study is still lacking, particularly with respect to an axiomatic framework and empirical analysis.

Figure~\ref{fig:GC10_country} depicts $\GC_{10}$ for the four countries. 
The values of $\mathrm{GC}_{10}$ in China increase rapidly during the considered period, surpassing those of Canada and UK, and gradually approaching the levels observed in US. This upward trend underscores the ability of $\mathrm{GC}_{10}$ to capture shifts in upper-tail wealth concentration, and it is consistent with the observations from Figure \ref{fig:topshare-intro} regarding statements 1--3 above.
Therefore, $\GC_n$ provides interpretable information that is complementary to the classic $\GC$, better capturing more nuanced features of the wealth distributions. 
A more comprehensive analysis in Section \ref{sec:8} illustrates $\GC_n$ with other values of $n$ and other datasets.

\begin{figure}[t!]
\centering
 \includegraphics[width=12cm]{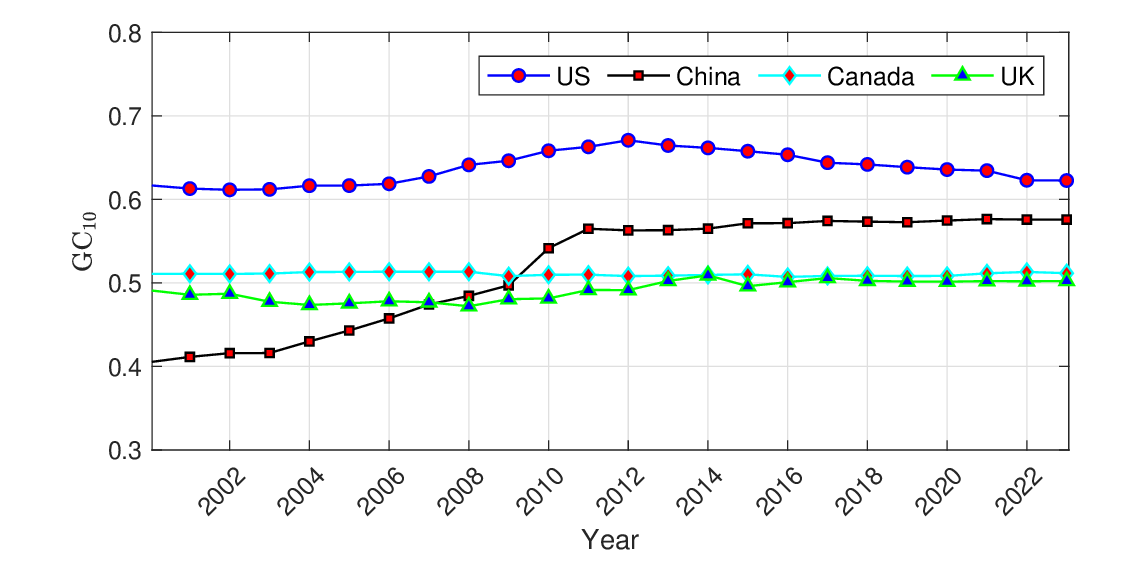}
 \captionsetup{font=small}
 \caption{\small Values of $\mathrm{GC}_{10}$ based on household wealth for the United States, China, Canada, and the United Kingdom from 2000 to 2023
}\label{fig:GC10_country}

\end{figure}

 
 Many other variants of GD and GC exist in the literature; see \cite{CV12} for several formulations of GC proposed by  Gini, and  \cite{Y98,Y03}, \cite{YS12} and \cite{VS25} for further developments.

After reviewing the classical Gini deviation and Gini coefficient in Section \ref{sec:2}, 
we first present the axioms satisfied by the Gini indices and then  develop a rigorous axiomatic framework for their higher-order generalizations  in Section \ref{sec:3}.  In Section \ref{sec:property}, we explore the key properties of the higher-order Gini indices in detail.   First, we examine the economic properties of the higher-order Gini coefficient, emphasizing its interpretability and practical relevance. Second,  we show that  $\GD_n$ admits a representation as a signed Choquet integral with a concave distortion function. 
Moreover, $\GD_n$ belongs to the class of deviation measures in the sense of \cite{RUZ06} and the class of coherent measures of variability in the sense of \cite{FWZ17}.
This integral formulation guarantees several desirable properties, including symmetry, convexity, and comonotonic additivity, and leads to economically useful properties of $\GC_n$, such as quasi-convexity and mixture-quasi-concavity.     
Finally, we  present a comparative analysis between the $n$-th order GD and the standard deviation. 

We study the multi-observation elicitability of $\GD_n$ and $\GC_n$ in Section \ref{sec:5}. 
Elicitability refers to the property of a statistical functional  being the unique minimizer of an expected scoring function, which is central to forecast evaluation and model comparison (\cite{G11} and \cite{FK21}). Many standard risk measures, including general deviation measures, are not elicitable in the classical (single-observation) framework (\cite{WW20}). To address this, the notion of multi-observation elicitability—where the score function depends on multiple independent observations—was introduced by \cite{CFMW17}; see also \cite{FMMW17}.  
Following from  their definitions as the expected value of functions of iid observations or their ratio,  
we show that both $\GD_n$ and $\GC_n$ are $n$-observation elicitable, thereby enabling rigorous comparative backtesting in risk and inequality measurement contexts. Section~\ref{sec:6} contains  results on the statistical inference of  $\GD_n$ and $\GC_n$ using its standard empirical estimators, including their consistency and asymptotic variance.

We do not intend to argue that the higher-order Gini indices $\mathrm{GC}_n$ necessarily outperform the classical case $\mathrm{GC}$ in all settings. Instead, we argue that higher-order indices can uncover features that $\mathrm{GC}$ fails to detect, thus providing complementary information, as explained in the example above.  
We further provide 
 more comprehensive empirical analyses using data from WID in Section \ref{sec:8}, spanning post-tax national income and household wealth across countries and regions between 2000 and 2023.  
 We find that higher-order Gini indices  such as $\mathrm{GC}_5$ or $\mathrm{GC}_{10}$  provide  additional insights in settings where  $\mathrm{GC}$ fails to distinguish some distributions.  These findings underscore the value of $\mathrm{GC}_n$ as a complementary tool for inequality assessment, especially in the context of evaluating tax policies, monitoring ultra-high-net-worth individuals, or studying long-term distributional trends.


Finally, Section~\ref{sec:9} concludes the paper. All proofs are put into Appendix \ref{sec:proofs}. Appendix~\ref{sec:7} contains explicit expressions for the higher-order Gini indices under several commonly used parametric distributions. The monotonicity properties of ${\rm GD}_n$ and ${\rm GC}_n$ with respect to the order $n$, along with simulation results of comparisons across different distributions and orders, are presented in Appendix \ref{sec:monotone}.

 \section{The Gini deviation and Gini coefficient}\label{sec:2}
We fix an atomless probability space $(\Omega,\mathcal F,\p)$.  
For $q\in [1,\infty)$, let
$L^q$ be the set of random variables with a   finite $q$th moment, 
and $L^\infty$ is the set of essentially bounded random variables.
Almost surely equal random variables are treated as identical.  
 For a random variable $X$, we use $F_X$ for its distribution function, and let $F^{-1}_X(t)$ be its left $t$-quantile, given by $F^{-1}_X(t)=\inf \{x\in \R: \p(X\le x)\ge t\}$ for $t\in (0,1)$.   Let 
$\Delta_{n}$ denote the standard $n$-simplex, that is,
$  \Delta_{n}=\left\{(x_1,\dots,x_n) \in[0,1]^{n}: x_{1} +\dots+x_{n}=1\right\}.$

Let  
$\M^q$ be the distributions of elements of $L^q$ for $q\in [1,\infty]$.
Let
$L^q_+=\{X\in L^q:X\ge 0,~ X\ne 0\}$
and 
$\M^q_+$ be the distributions of elements of $L^q_+$.  
For a set $\X$ of random variables and a   mapping $\rho:\X \to \R^k$ that is law invariant (i.e., $\rho(X)=\rho(Y)$ if $X$ and $Y$ are identically distributed),
we abuse the notation by treating it also as a mapping from a set $\mathcal M_\rho$ of distributions to $\R^k$,  that is, $\rho(F_X)=\rho(X)$. 
In this case, $\mathcal M_\rho$ is the set of   distributions of  random variables in $\X $.
Since we focus on law-invariant mappings throughout the paper,
this abuse of notation will be helpful in many places.  

The Gini deviation (GD) and the Gini coefficient  (GC) are two very important indices of dispersion, deviation, and economic inequality, and they have wide applications in finance,  economics and machine learning.
The Gini deviation $\GD:L^1\to \R$ is defined as
\begin{align}\label{eq:0}
\GD(X) = \frac 12\E[|X-X'|],
\end{align}
where $X'$ is an iid copy of $X$. 
Alternatively, we can represent $\GD$  in terms of quantile functions or distribution functions, via (see e.g., \citet[Example 1]{WWW20})
$$
\GD(X) = \int_0^1 F_X^{-1}(t) (2t-1)\d t  =\int_\R F_X(x) (1-F_X(x))\d x. 
$$
A  (law-invariant) signed Choquet integral, also called a distortion riskmetric,  is a mapping of the form
 $$
\rho_h:
X\mapsto  \int_{0}^\infty h(\p(X>x))\d x +\int_{-\infty}^0 (h(\p(X>x))-h(1) )\d x,
 $$
 where    $h:[0,1]\to \R$ is a function of bounded variation satisfying $h(0)=0$, called a  distortion function.
 The functional $\GD$ belongs to this class and has a distortion function $h:[0,1]\to \R$ given by $h(t)=t(1-t)$.

The Gini coefficient $\GC: L^1_+\to [0,1]$ is defined as 
$$
\GC (X) =\frac{\GD(X)}{\E[X]} = \frac{\int_0^\infty F_X(t) (1-F_X(x))\d x }{\int_0^\infty (1-F_X(x))\d x}.
$$
The range of $\GC$ is $[0,1)$.
An alternative way of defining $\GC$ is through the Lorenz curve (\cite{G71}),
$$L_{F_X}(p) = \frac{\int_0^p F_X^{-1}(t) \d t}{\int_0^1 F_X^{-1}(t)\d t},~~p\in [0,1],$$
and GC is twice the area between $L_{F_X}$ and the identity on $[0,1]$,
$$
\GC(X)= 2\int _0^1 (p - L_{F_X}(p)) \d p = 1-2\int_0^1 L_{F_X}(p)\d p.
$$

We generally refer to both GD and GC as Gini indices. Since GC and GD are connected by the simple relation
$\GC=\GD/\E$,  our axiomatic framework in the next section will mainly focus on properties satisfied by GD.  

\section{An axiomatic framework for higher-order Gini indices}\label{sec:3}

\subsection{Axioms of the Gini indices}

Our aim is to generalize GD and GC to a family of flexible indices, while keeping the nice properties of GD and GC intact.
For this purpose, we will propose several axioms satisfied by GD, and then look for functionals that also satisfy these axioms. We assume  that the domain of the indices is $\X=L^q$ for some $q\ge 1$, and denote by $\X_+=L^q_+$.

The first observation, clearly from \eqref{eq:0},  is that GD 
can be written as the expected value of a function of two iid copies $X_1,X_2$ from $X$. That is, there exists a function $f:\R^2\to \R$ such that 
$$
\GD(X) = \E[f(X_1,X_2)]. 
$$
A natural generalizing of  this property is to use $n\ge 3$ iid copies, which leads to the following axiom of Gini indices.
We use $\rho:\X\to \R$ for a general functional as a candidate for a generalization of GD. 

\begin{enumerate}
\item[{[A1]}] Sample representability: There exists $f:\R^n\to \R$ for some $n\in \N$ such that $\rho(X)=\E[f(X_1,\dots,X_n)]$ for all $X\in \X$, where $X_1,\dots,X_n$ are iid copies of $X$. 
\end{enumerate}

Besides [A1], GD satisfies many other natural axioms, which we list below.  

{%
\renewcommand{\labelenumi}{[A\arabic{enumi}]}
\begin{enumerate} \setcounter{enumi}{1}
\item Symmetry: $\rho(X)=\rho(-X)$ for $X \in\X$.
\item Comonotonic additivity: $\rho(X+Y)=\rho(X)+\rho(Y)$ if $X$ and $Y$ are comonotonic.\footnote{Random variables $X$ and $Y$ are comonotonic if there exists $\Omega_0 \in \mathcal{A}$ with $\mathbb{P}\left(\Omega_0\right)=1$ such that for all  $\omega, \omega^{\prime} \in \Omega_0$,
$
\left(X(\omega)-X\left(\omega^{\prime}\right)\right)\left(Y(\omega)-Y\left(\omega^{\prime}\right)\right) \geq 0.
$}

\item Uniform norm continuity: For any $\epsilon>0$, there exists $\delta>0$ such that $|\rho(X)-\rho(Y)|\le \epsilon $  whenever $\esssup |X-Y|\le \delta$, where $\esssup$ means the essential supremum.
\item Nonnegativity: $\rho(X)\ge 0$ for $X\in \X$ and $\rho(X)=0 $ if and only if $X$ is a constant.
\item Location  invariance :  $\rho(X+c)=\rho(X)$ for all  $c\in \R$ and $X\in \X$.
\item Positive homogeneity:  $\rho(\lambda X)=\lambda \rho(X)$ for all $\lambda \in (0, \infty)$ and $X\in \X$.

\item Convexity: $\rho(\lambda X+(1-\lambda)Y)\le \lambda\rho(X)+(1-\lambda)\rho(Y)$ for all $\lambda\in[0,1]$ and  $X,Y\in \X$. 
\item Subadditivity:  $\rho(X+Y)\le \rho(X)+\rho(Y)$ for all $X,Y\in \X$.
\item Convex-order consistency: $\rho(X) \leq \rho(Y)$ for all $X,Y\in \X$ whenever  $X \leq_{\rm cx} Y$.\footnote{A random variable $X$ is said to be smaller than a random variable $Y$ in convex order, denoted by $X \leq_{\rm cx} Y$, if $\mathbb{E}[\phi(X)] \leq \mathbb{E}[\phi(Y)]$ for all convex $\phi: \mathbb{R} \rightarrow \mathbb{R}$, provided that both expectations exist.}
\item Mixture-concavity:      $ \rho$ is concave as a mapping from $\M_\rho$ to $\R$, that is, $\rho(\lambda F+(1-\lambda)G)\ge \lambda\rho(F)+(1-\lambda)\rho(G)$ for all  $F,G\in \M_\rho$ and $\lambda\in[0,1]$. 
\item Normalization: $\{\rho(X)/\E[X]:X\in \X_+\}=[0,1)$.

\end{enumerate}} 
Axioms [A1]--[A12] are all satisfied by GD, and we will  refer them  to as the \emph{Gini axioms}. Some axioms imply the others. For instance,   positive homogeneity [A7] and convexity [A8]   imply subadditivity [A9].

Note that [A12] is made so that $\rho/\E$ can be readily used as the corresponding relative index playing the role of GC.
Therefore, the Gini axioms are made for the pair $(\rho,\rho/\E)$ in place of $(\GD,\GC)$, although the mathematical statements  for $\rho$ are simpler to present, except for [A12]. The economic interpretation of some of these axioms will be discussed in Section \ref{sec:property}.


\subsection{Axiomatic characterization}
 Our aim is to generalize the Gini deviation  and the Gini coefficient to higher order.
In what follows, always assume $n\ge 2$. 
To establish a representation with as few axioms as possible, we will only assume [A1]--[A4] in the main direction of the result.

\begin{theorem}\label{th:chara}
For a mapping $\rho: \X \to \R$,   it satisfies {\rm [A1]--[A4]} if and only if there exist an integer $n\in \N$ 
and $(a_1,\dots,a_n)\in  \R^n$ such that 
\begin{align}\label{eq-affineGD}
\rho(X)=\sum_{i=1}^n a_i {\rm GD}_i(X),~~ X\in \X,
\end{align}
where $\GD_1(X)=\GD(X)$, and for $i=2,3,\dots,n$, 
\begin{align}\label{eq:0c}
\GD_i(X) = \frac 1 i \E \left[\max\{X_1,\dots,X_i\}-\min\{X_1,\dots,X_i\}\right],
\end{align}
with  $X_1,\dots,X_n$ being iid copies of $X$.
Moreover, if $(a_1,\dots,a_n)\in \Delta_{n}$, then 
$\rho(X)$ in \eqref{eq-affineGD}  satisfies all of the Gini axioms {\rm [A1]--[A12]}. 
\end{theorem}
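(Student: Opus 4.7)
The plan is to separate the argument into three parts: the only-if direction of the main equivalence, the if direction, and the supplementary claim about the Gini axioms when $(a_1,\dots,a_n)\in\Delta_n$.

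For the only-if direction, I would first note that [A1] already forces law invariance of $\rho$, since $\rho(X)$ depends on $X$ only through the joint distribution of the iid tuple $(X_1,\dots,X_n)$, which is determined by $F_X$. Combined with [A3] (comonotonic additivity) and [A4] (uniform norm continuity), this places $\rho$ in the scope of standard characterizations of signed Choquet integrals (distortion riskmetrics), which I would invoke to obtain a representation of the form $\rho(X)=\int_0^1 F_X^{-1}(t)\,\phi(t)\,\d t$ for some function $\phi$ of bounded variation on $[0,1]$. The symmetry axiom [A2] applied to this representation translates directly into the condition $\phi(1-t)=-\phi(t)$, so $\phi$ is antisymmetric about $1/2$.

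The crucial use of [A1] is to force $\phi$ to be a polynomial. Testing both representations on Bernoulli variables $X=\mathds{1}_A$ with $\p(A)=p$, the sample side $\E[f(X_1,\dots,X_n)]$ is a polynomial in $p$ of degree at most $n$ (since $f$ takes only finitely many values on $\{0,1\}^n$ and each joint outcome has probability of the form $p^k(1-p)^{n-k}$), while the spectral side equals $\int_{1-p}^1\phi(t)\,\d t=\Phi(1)-\Phi(1-p)$ for an antiderivative $\Phi$ of $\phi$. Matching these expressions on $(0,1)$ forces $\Phi$ to agree with a polynomial of degree at most $n$ on $(0,1)$, and hence $\phi$ is a polynomial of degree at most $n-1$. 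Next, a direct calculation using the density $i\,t^{i-1}$ of the maximum (and $i(1-t)^{i-1}$ of the minimum) of $i$ iid uniforms yields the spectral identity $\GD_i(X)=\int_0^1 F_X^{-1}(t)\bigl(t^{i-1}-(1-t)^{i-1}\bigr)\,\d t$. The functions $\phi_i(t):=t^{i-1}-(1-t)^{i-1}$ are antisymmetric about $1/2$, and a short linear algebra argument in the change of variable $u=t-1/2$ (where odd powers $u,u^3,u^5,\dots$ form a natural basis for antisymmetric polynomials) shows that $\{\phi_i:i=1,\dots,n\}$ spans every antisymmetric polynomial of degree at most $n-1$. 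Hence any such $\phi$ can be written as $\sum_{i=1}^n a_i\phi_i$, which yields \eqref{eq-affineGD}.

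The if direction is verified axiom by axiom on $\GD_i$: sample representability [A1] holds by definition (a common $n$ is chosen and the smaller $\GD_i$ are padded by auxiliary iid copies, which do not change their values); symmetry [A2] follows from the identity $\max\{-X_1,\dots,-X_i\}-\min\{-X_1,\dots,-X_i\}=\max\{X_1,\dots,X_i\}-\min\{X_1,\dots,X_i\}$; and [A3]--[A4] follow from the signed Choquet representation with the concave Lipschitz distortion $h_i(t)=\tfrac{1}{i}\bigl(1-(1-t)^i-t^i\bigr)$. For the supplementary claim, when $(a_1,\dots,a_n)\in\Delta_n$, I would check that each $\GD_i$ satisfies [A5]--[A12] and that these properties are inherited by convex combinations. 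Nonnegativity [A5], location invariance [A6], and positive homogeneity [A7] are immediate from the $\max$--$\min$ structure. Convexity [A8], subadditivity [A9], convex-order consistency [A10], and mixture-concavity [A11] all follow from concavity of $h_i$ via standard results on distortion riskmetrics. Normalization [A12] is obtained by computing $\GC_i$ on the indicators $\mathds{1}_A$ with $\p(A)=p$, which yields the value $(1-(1-p)^i-p^i)/(ip)\in[0,1)$ and approaches $1$ as $p\downarrow 0$; a convex combination of such ratios therefore also ranges in $[0,1)$ and approaches $1$ along the same sequence.

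The main technical obstacle is the passage from [A1] to the polynomial structure of $\phi$: this step requires both the signed Choquet representation (so that $\phi$ exists and is unique up to an additive constant) and enough regularity of $\phi$ to conclude that if its antiderivative agrees with a polynomial on $(0,1)$ then $\phi$ itself is a polynomial. Extending the Choquet characterization from bounded random variables to the full domain $\X=L^q$, and verifying that the supremum in [A12] is approached but not attained for arbitrary convex combinations, will also require careful bookkeeping.
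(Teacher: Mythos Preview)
Your proposal is correct and follows essentially the same route as the paper: both arguments obtain a signed Choquet representation from law invariance (implied by [A1]) together with [A3]--[A4], test on Bernoulli variables to force the distortion to be a polynomial of degree at most $n$, use [A2] to impose the appropriate symmetry, and then expand in the basis associated with the $\GD_i$. The only cosmetic difference is that you parametrize via the spectral density $\phi$ (an antisymmetric polynomial of degree $\le n-1$) and a spanning argument in the variable $u=t-1/2$, whereas the paper parametrizes via the distortion function $h$ (a symmetric polynomial of degree $\le n$ with $h(0)=h(1)=0$) and a direct algebraic rewriting $h(t)=\tfrac12\sum c_i(t^i+(1-t)^i)$; the verification of [A5]--[A12], including the Bernoulli computation for [A12], is likewise parallel.
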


The most important part of the proof of 
Theorem \ref{th:chara} is 
 the necessity of \eqref{eq-affineGD}. This involves two steps. The first step establishes a  representation via signed Choquet integrals. This  follows from results in  \cite{WWW20a}, which uses [A3], [A4], and law invariance (implied by [A1]) to establish the representation. The second step identifies symmetric ([A2]) signed Choquet integrals that can be represented by expected values of functions of iid observations ([A1]), a new property to the literature that requires sophisticated analysis.

Inspired by the representation \eqref{eq-affineGD} in Theorem \ref{th:chara},  we define the  \emph{$n$-th order Gini deviation} for $n\ge 2$ as the functional $\GD_n$ in \eqref{eq:0c}, and without loss of generality its domain is chosen as $L^1$. That is,
\begin{align*} 
\GD_n(X) = \frac 1 n \E \left[\max\{X_1,\dots,X_n\}-\min\{X_1,\dots,X_n\}\right], ~~~~ X \in L^1. 
\end{align*}
Note that \eqref{eq:0} can be rewritten as
 $$
\GD(X) = \frac 12\E \left[\max\{X,X'\}-\min\{X,X'\}\right],
  $$
and therefore, $\GD_2=\GD$. 
Following this,   the
 \emph{$n$-th order Gini coefficient} $\GC_n$ for $n\ge 2$, is defined by, as in \cite{GROG24}, 
$$
\GC_n (X)= \frac{\GD_n(X)}{\E[X]}, ~~~~ X \in L^1_+. 
$$
Note that $\GC_n$ has a natural range of $[0,1)$ due to [A12], and this fact has  been observed by \cite{GROG24}. 
The following example explains the extreme distributions, and also justifies that
  $\GD_n$ and $\GC_n$ bear the same interpretation as $\GD$ and $\GC$ when it comes to measuring wealth  or income inequality, as the worst-case scenario is attained asymptotically by the distribution in which a tiny ($\epsilon$ in the example) proportion of the population has all the wealth or income.

\begin{example}\label{ex:1}
Suppose that $X_\epsilon$ follows a Bernoulli distribution with mean $\epsilon>0$.  
From the definition of $\GD_n$, we can compute 
$$\GC_n(X_\epsilon) =\frac{\GD_n(X_\epsilon)}{\E[X_\epsilon]}=\frac 1 \epsilon \int_{1-\epsilon}^1  ( t^{n-1} -(1-t)^{n-1} )\d t
=\frac{1}{n\epsilon} \left( 1-\epsilon^n -(1-\epsilon)^n \right),
$$and by sending $\epsilon\downarrow0$,
 we get
$$\GC_n(X_\epsilon)  
=\frac{1}{n\epsilon} \left(  n \epsilon +O(\epsilon^2) \right)\to 1.
$$ 
\end{example}

Some members in the family of high-order Gini deviations coincide. 
In addition to $\GD_2=\GD$, we can also check   $\GD_2=\GD_3$ by the equality
\begin{align*}
 \max\{X_1,X_2,X_3\}-\min\{X_1,X_2,X_3\} = \frac{1}{2} (|X_1-X_2|+|X_1-X_3|+|X_2-X_3|),
  \end{align*}
which implies 
  \begin{align*}
\GD_3(X) =  \frac{1}{6} \E[|X_1-X_2|+|X_1-X_3|+|X_2-X_3|] = \frac 12 \E[|X_1-X_2|]=\GD_2(X).
  \end{align*}
Nevertheless, $\GD_4\ne \GD_3$ in general, and $\GD_n$ decreases in $n$ (see Appendix \ref{sec:monotone}). 

Inspired by the axiomatic characterization of indices satisfying the Gini axioms in Theorem \ref{th:chara}, 
we focus on the families $(\GD_n)_{n\ge2}$ and $(\GC_n)_{n\ge 2}$ of higher-order Gini indices in the rest of the paper. 
In particular, $\GC_n$ for some different choices of $n\ge 2$ will be our main tool for measuring inequality as a complement to the classic $\GC$, 
while we should keep in mind that their convex combinations also satisfy the Gini axioms and can be used for the same purpose.

\begin{remark}
In Theorem \ref{th:chara}, we have shown that if $(a_1,\dots,a_n)$ is in $\Delta_n$, then all the Gini axioms are satisfied. One may wonder whether the reverse holds true, that is, whether for $\rho$ in \eqref{eq-affineGD},  [A1]--[A12] jointly imply $(a_1,\dots,a_n)\in \Delta_n$.
It turns out that this is not true, even removing the redundant terms (e.g., $\GD_1=\GD_2=\GD_3$). For an example, define $\rho(X)=2 {\rm GD}_2(X)-{\rm GD}_4(X)$. By Proposition \ref{th:basic}, we know that $\rho$ is a signed Choquet integral whose distortion function is given by \begin{align*}
h(t)= (1-t^2-(1-t)^2)-\frac{1}{4}(1-t^4-(1-t)^4).
\end{align*}
One can check that $h$ is concave on $[0,1]$, and thus, $\rho$ is convex. Using this fact, we can check all of [A5]--[A12] by standard computations.
\end{remark}


\section{Properties of the higher-order Gini indices}\label{sec:property}

\subsection{Economic properties of higher-order GC}

The next theorem gives some useful and economically interpretable properties of $\GC_n$, similar to the properties of $\GD_n$ in Theorem \ref{th:chara}, but with different forms. 
  
\begin{theorem}
\label{prop:GD-p}
    The mapping $\GC_n:L^1_+\to \R$ 
    satisfies 
    \begin{enumerate}[(i)]  
\item Scale invariance:  $\GC_n(\lambda X)= \GC_n(X)$ for all $\lambda \in (0, \infty)$ and $X\in L^1_+$; 
\item Convex-order consistency: $\GC_n (X) \leq \GC_n (Y)$ for all $X,Y\in L^1_+$ whenever  $X \leq_{\rm cx} Y$; 
\item Quasi-convexity: $\GC_n(\lambda X+(1-\lambda)Y)\le \max\{  \GC_n(X), \GC_n(Y)\}$ for all $\lambda\in[0,1]$ and  $X,Y\in L^1_+$;   
\item Mixture-quasi-concavity:   $ \GC_n$ is quasi-concave as a mapping from $\M^1_+$ to $\R$, that is, $\GC_n(\lambda F+(1-\lambda)G)\ge \min\{ \GC_n(F), \GC_n(G)\}$ for all  $F,G\in \M^1_+$ and $\lambda\in[0,1]$. 
\end{enumerate}
\end{theorem}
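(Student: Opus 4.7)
The plan is to bootstrap all four properties from the corresponding properties of $\GD_n$ established in Theorem \ref{th:chara}, combined with the elementary \emph{mediant inequality}: for positive reals, $(p_1 x_1 + p_2 x_2)/(p_1+p_2)$ lies between $x_1$ and $x_2$ whenever $p_1,p_2\ge 0$ with $p_1+p_2>0$. Since $\GD_n$ corresponds in \eqref{eq-affineGD} to the tuple $(0,\dots,0,1)\in\Delta_n$, Theorem \ref{th:chara} guarantees that $\GD_n$ itself satisfies all of [A1]--[A12]. Throughout, I use that $\E[X]>0$ on $L^1_+$, so the denominator in $\GC_n$ is well-defined and positive.

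For (i), positive homogeneity [A7] of $\GD_n$ gives $\GD_n(\lambda X)=\lambda\GD_n(X)$, and $\E[\lambda X]=\lambda\E[X]$, so the factor $\lambda$ cancels in $\GC_n(\lambda X)$. For (ii), convex order preserves mean, so $X\lcx Y$ implies $\E[X]=\E[Y]$; then $\GC_n(X)\le\GC_n(Y)$ reduces directly to [A10] applied to $\GD_n$.

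For (iii), set $Z=\lambda X+(1-\lambda)Y$. Convexity [A8] of $\GD_n$ gives $\GD_n(Z)\le\lambda\GD_n(X)+(1-\lambda)\GD_n(Y)$, while $\E[Z]=\lambda\E[X]+(1-\lambda)\E[Y]$. Hence
$$
\GC_n(Z)\le\frac{\lambda\E[X]\cdot\GC_n(X)+(1-\lambda)\E[Y]\cdot\GC_n(Y)}{\lambda\E[X]+(1-\lambda)\E[Y]},
$$
and the right-hand side, being a weighted average of $\GC_n(X)$ and $\GC_n(Y)$ with positive weights $\lambda\E[X]$ and $(1-\lambda)\E[Y]$, is bounded above by $\max\{\GC_n(X),\GC_n(Y)\}$ by the mediant inequality. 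For (iv), work on the distribution level; since $\E$ is linear in the distribution and $\GD_n$ is mixture-concave by [A11], for $H=\lambda F+(1-\lambda)G$ one has $\GD_n(H)\ge\lambda\GD_n(F)+(1-\lambda)\GD_n(G)$ and $\E_H=\lambda\E_F+(1-\lambda)\E_G$. Dividing yields $\GC_n(H)$ greater than or equal to the same convex combination $(\lambda\E_F\cdot\GC_n(F)+(1-\lambda)\E_G\cdot\GC_n(G))/(\lambda\E_F+(1-\lambda)\E_G)$, which is at least $\min\{\GC_n(F),\GC_n(G)\}$ again by the mediant inequality.

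There is no serious obstacle here; the whole proof is really one observation (mediant) applied twice, modulo the facts that $\GD_n$ is convex, mixture-concave, positively homogeneous, and convex-order consistent, all of which are supplied by Theorem \ref{th:chara}. The only point that warrants care is checking that positivity of the means on $L^1_+$ makes the mediant step valid, and that convex order indeed preserves the mean (needed so the denominators agree in (ii)).
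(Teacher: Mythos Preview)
Your proof is correct and follows essentially the same approach as the paper. The paper's proof is a one-liner that invokes the corresponding properties of $\GD_n$ from Theorem \ref{th:chara} together with the fact that the ratio of a convex function to a positive affine function is quasi-convex (citing \cite{BV04}); your mediant-inequality computation is exactly the elementary verification of that fact, so the two arguments coincide.
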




The four properties of $\GC_n$ in Theorem \ref{prop:GD-p} have natural economic interpretations as measures of inequality, similarly to the case of $\GC$. To explain these interpretations, we use measuring   income inequality as the main example. 
\begin{enumerate}
    \item Scale invariance means that a rescaling of the income distribution, e.g., multiplying by a currency exchange rate, does not affect the measurement. Therefore, $\GC_n$ can be evaluated in either the local currency or a standard one, such as the US dollars.
    \item Convex-order consistency of $\GC_n$ means that a more spread-out distribution has a larger $\GC_n$, confirming the role of $\GC_n$ as a measure of distributional dispersion. 
    In particular, if the income process (with respect to time) in a country forms a martingale (i.e., the expected income in the next year of an individual is equal to their income this year), then $\GC_n$ will increase in time.  
    \item 
Quasi-convexity of $\GC_n$ means that if we aggregate income $X$ and income $Y$ from two populations of equal size in pair, for instance, forming households from pairs of individuals, 
then the overall value of $\GC_n $ is smaller than the maximum of $\GC_n(X)$ and $\GC_n(Y)$.
This means that aggregating groups of individuals with similar level of inequality generally reduces inequality. 

\item   Mixture-quasi-concavity of $\GC_n$ means that if we merge two populations with different income distributions $X$ and $Y$ to make a bigger population (for instance, forming  a larger region from two small  regions), then the overall value of $\GC_n $ is larger than the minimum of $\GC_n(X)$ and $\GC_n(Y)$. 
This suggests that a larger country likely has a larger $\GC_n$.
This is a natural condition for measuring inequality. For example, two populations that are perfectly equal among themselves (e.g., one rich group and one poor group)
can exhibit significant inequality when assessed as a combined population. 
\end{enumerate}

Convex-order consistency is also used to define strong risk aversion, that is, aversion to mean-preserving spreads in the sense of \cite{RS70}. 
Strong risk aversion of preference models is widely studied in decision theory. For instance, \cite{SZ08} provided a characterization of strong risk aversion within the framework of cumulative prospect theory.
 
The above point 3 (quasi-convexity) and point 4 (mixture-quasi-concavity) should not be seen as conflicting, as summing two income random variables and 
merging two populations are very different operations and have opposite effects. Indeed, there is a  fundamental conflict between convexity and mixture-convexity (see \citet[Proposition 6]{WW25}), whereas convexity and mixture-concavity are equivalent for Signed Choquet integrals (\citet[Theorem 3]{WWW20}).

\subsection{Representation as signed Choquet integrals}
\label{sec:GDrepresentaton}

In the next proposition, we represent $\GD_n$ as a signed Choquet integral. This is useful in proving several properties of $\GD_n$ in Theorem \ref{th:chara}.  In what follows, let $U_X$ be
a uniform random variable such that $F_X^{-1}(U_X)=X$ almost surely, and its existence is guaranteed by \citet[Lemma A.32]{FS16}. 

\begin{proposition}\label{th:basic}
The mapping $\GD_n:L^1\to \R$ satisfies the following  properties.
\begin{enumerate}[(i)]
\item It has the quantile representation
\begin{equation}
\GD_n(X) = \int_0^1 F_X^{-1}(t) ( t^{n-1} -(1-t)^{n-1} )\d t,~~~~X\in L^1. \label{eq:choquet-1}
\end{equation}
\item It is  a signed Choquet integral with a concave distortion function
\begin{align}\label{eq:h}
h_n(t) = \frac{1}n \left( 1-t^n - (1-t)^n\right),~~~~t\in[0,1].
\end{align}
%
\end{enumerate} 
\end{proposition}



Figure \ref{fig:h_n} depicts $h_n(t)$ in \eqref{eq:h} and its derivative $h_n'(t)$ for various values of $n$. As $n$ increases, $h_n(t)$ becomes flatter over the interval $[0,1]$, exhibiting increasingly sharp transitions near the endpoints $t=0$ and $t=1$. Note that $h_n'(t)$ serves as the “probability weight” assigned to $t$ in the calculation of the Choquet expectation; see, e.g.  \cite{Q82} and \cite{GS89}. From this perspective,  $\GD_n$ also reflects an increasing sensitivity to tail behavior and concentration effects.

\begin{figure}[t!]
\centering
 \includegraphics[width=16cm]{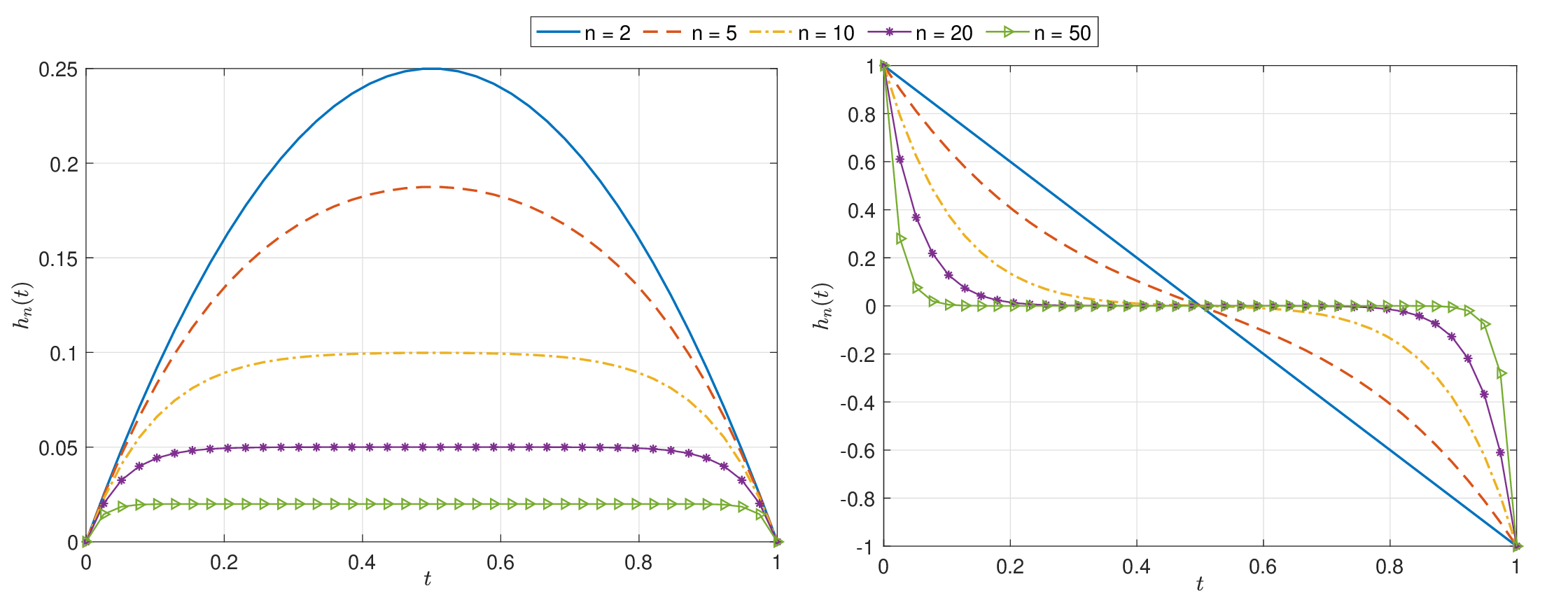}
 \captionsetup{font=small}
 \caption{\small The distortion function $h_n(t) = \left(1 - t^n - (1-t)^n\right)/n$ (left panel)  and  its derivative $h_n'(t)=(1-t)^{n-1}-t^{n-1}$ (right panel) for $n = 2, 5, 10, 20, 50$.
}\label{fig:h_n}

\end{figure}


\begin{remark}  Let $\rho_n$ be the \emph{power distortion risk measure} defined by
  $$
  \rho_n(X) = \int_0^1 n t^{n-1}F^{-1}_X(t) \d t,~~~X\in L^1.
  $$
  It is well known that $\rho_n$ is a coherent risk measure (\cite{ADEH99}). 
  The functional $\GD_n$  is connected to  $\rho_n$ via 
  $$
  \GD_n(X)= \frac {\rho_n(X)+\rho_n(-X)}n.
  $$
\end{remark}

\begin{remark}
    From \eqref{eq:choquet-1}, we can easily generalize the definition of $\mathrm{GD}_n$ to real numbers $n> 1$, and these indices are also signed Choquet integrals. Similarly, one can define $\mathrm{GC}_n$ for all real numbers $n>1$. 
    Nevertheless, for a non-integer $n$, sample representability [A1] is no longer satisfied by $\mathrm{GD}_n$ in  \eqref{eq:choquet-1}.  For this reason, we focus integer values of $n$. 
\end{remark}

\subsection{Comparing   Gini deviations and the standard deviation} \label{sec:4}

The inequality $
0\le \GD(X)/\mathrm{SD}(X) \le 3^{-1/2}
$ for all nonconstant  $X\in L^2$ is known as Glasser's inequality (\cite{G62}), where $\mathrm{SD}(X)$ denotes the standard deviation of $X$. 
For the general $\GD_n$, we establish the following bounds. The proof follows a similar approach to Theorem 5 of \cite{PWW25}.

\begin{proposition}\label{prop:comSD}
For $n\geq2$ and nonconstant $X\in L^2$, we have    \begin{equation}
    \label{eq:prop-SD}
0\leq\frac{\GD_n(X)}{\mathrm{SD}(X)}\leq\sqrt{\frac{2}{2n-1}-\frac{2((n-1)!)^2}{(2n-1)!}},
\end{equation}
and these bounds are sharp.
\end{proposition}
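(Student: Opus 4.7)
The plan is to exploit the covariance representation of $\GD_n$ from Theorem \ref{th:basic}(iii) and then reduce the upper bound to a Cauchy--Schwarz estimate. Starting from
\[
\GD_n(X) = \mathrm{Cov}\bigl(X,\, U_X^{n-1}-(1-U_X)^{n-1}\bigr),
\]
the Cauchy--Schwarz inequality yields
\[
\GD_n(X) \le \mathrm{SD}(X)\cdot \mathrm{SD}\bigl(U_X^{n-1}-(1-U_X)^{n-1}\bigr),
\]
so the entire upper bound reduces to computing the variance of $Z := U^{n-1}-(1-U)^{n-1}$ where $U$ is uniform on $[0,1]$. By the symmetry $U\laweq 1-U$, we get $\E[Z]=0$, and
\[
\E[Z^2] = 2\,\E[U^{2(n-1)}] - 2\,\E[U^{n-1}(1-U)^{n-1}] = \frac{2}{2n-1} - 2B(n,n) = \frac{2}{2n-1}-\frac{2((n-1)!)^2}{(2n-1)!},
\]
which matches the claimed bound exactly. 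The lower bound $0$ is immediate from Theorem \ref{th:chara}, since $\GD_n\ge 0$ by [A5].

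For sharpness of the upper bound, Cauchy--Schwarz attains equality when $X$ is an affine function of $Z = U_X^{n-1}-(1-U_X)^{n-1}$; equivalently, when $F_X^{-1}(t) = a\,(t^{n-1}-(1-t)^{n-1}) + b$ for some $a>0$, $b\in\R$. Since $t\mapsto t^{n-1}-(1-t)^{n-1}$ is non-decreasing on $[0,1]$, this is a legitimate left-continuous quantile function, so a random variable $X$ with this distribution attains the upper bound; then for every $X\in L^2$ obtained in this way the ratio in \eqref{eq:prop-SD} equals the claimed value.

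For sharpness of the lower bound, I would exhibit a sequence for which the ratio tends to $0$. A natural choice is Bernoulli $X_p \sim \text{Ber}(p)$ with $p \downarrow 0$: a direct computation via the quantile formula in Theorem \ref{th:basic}(i) gives
\[
\GD_n(X_p) = \tfrac{1}{n}\bigl(1-p^n-(1-p)^n\bigr) = p + O(p^2),
\qquad \mathrm{SD}(X_p) = \sqrt{p(1-p)},
\]
so $\GD_n(X_p)/\mathrm{SD}(X_p)\sim \sqrt{p} \to 0$ as $p\downarrow 0$, showing the lower bound cannot be improved.

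The computations themselves are routine; the only step requiring care is the variance calculation for $Z$, where one must recognize $\E[U^{n-1}(1-U)^{n-1}]$ as the Beta function $B(n,n) = ((n-1)!)^2/(2n-1)!$ and handle the combinatorial simplification. The conceptual content lies entirely in invoking the covariance representation from Theorem \ref{th:basic}(iii), which reduces an apparently nonlinear functional inequality to a one-line Cauchy--Schwarz argument, mirroring the approach of Theorem 5 in \cite{PWW25}.
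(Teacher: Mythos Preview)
Your proof is correct and essentially the same as the paper's. The paper applies the H\"older inequality directly to the quantile integral $\GD_n(X)=\int_0^1 h_n'(t)\,F_X^{-1}(1-t)\,\mathrm{d}t$ to obtain $\GD_n(X)\le\sigma\,\|h_n'\|_2$, while you use the covariance form from Theorem~\ref{th:basic}(iii) and Cauchy--Schwarz; since $h_n'(t)=(1-t)^{n-1}-t^{n-1}$ these are the same inequality, and your computation of $\E[Z^2]$ is exactly the computation of $\|h_n'\|_2^2$. The sharpness arguments (the quantile function $F_X^{-1}(t)=a\,(t^{n-1}-(1-t)^{n-1})+b$ for the upper bound, and Bernoulli with small parameter for the lower bound) also coincide with the paper's.
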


Note that the right-hand side of \eqref{eq:prop-SD} decreases in $n$ and it is very close to $n^{-1/2}$ for $n$ large. Its maximum value is $3^{-1/2}$ when $n=2,3$.


Next, we consider the bounds on 
$\GD_n(X)/\mathrm{GD}_m(X)$.  In fact, we can derive  a more general result for  bounds on the ratio of any two deviation measures defined via the Choquet integrals.   Let  $$\mathcal{H}=\{h: h \mbox{~maps $[0,1]$ to $\R$ with $h(0)=h(1)=0$ and $h(t)>0$ for $t\in(0,1)$}\}.$$
 For  $h,g\in\mathcal H$  and a nonconstant $X$ such that $\rho_{h}(X)$ and $\rho_{g}(X)$ are finite,
 we have \begin{equation}\label{eq:r}\begin{aligned} \frac{\rho_{h}(X)}{\rho_{g}(X)}= 
 \frac{\int_{-\infty}^\infty h(\p(X>x))\d x}{\int_{-\infty}^\infty g(\p(X>x))\d x}
 \in\left[\inf_{t\in(0,1)} \frac{h(t)}{g(t)}, \sup_{t\in(0,1)} \frac{h(t)}{g(t)}\right].\end{aligned}\end{equation} 
The proof of \eqref{eq:r} follows directly from the definition of Choquet integrals. Moreover, we can see that these bounds are sharp as the value $h(t)/g(t)$ for $t\in (0,1)$ is attainable by a  Bernoulli random variable with mean $t$.
A specialization of \eqref{eq:r} to $\GD_n$ leads to the following bounds. 

\begin{proposition}\label{prop:3}
Let $L^1_*$ be the set of nonconstant elements of $L^1$. For $2\leq m\leq n$, we have    \begin{equation}\label{eq:B1}
\min_{X\in L^1_*} \frac{\GD_n(X)}{\mathrm{GD}_m(X)} = 
\frac{m(1-2^{1-n}) }{n(1-2^{1-m}) };~~~~~
\sup_{X\in L^1_*}
\frac{\GD_n(X)}{\mathrm{GD}_m(X)}=1.\end{equation} 
\end{proposition}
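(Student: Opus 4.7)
The plan is to invoke the general ratio bound \eqref{eq:r} with the two concave distortion functions furnished by Theorem \ref{th:basic}, namely $h_k(t)=(1-t^k-(1-t)^k)/k$ for $k\in\{m,n\}$. Since \eqref{eq:r} is sharp and attained asymptotically by Bernoulli inputs, the proposition reduces to computing $\inf_{t\in(0,1)} h_n(t)/h_m(t)$ and $\sup_{t\in(0,1)} h_n(t)/h_m(t)$.

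For the supremum, I would invoke the monotonicity $\GD_n\le \GD_m$ for $n\ge m$ (Appendix \ref{sec:monotone}), which specialized to a Bernoulli$(t)$ random variable gives $h_n(t)\le h_m(t)$ and hence ratio $\le 1$. The Taylor expansion $h_n(t)=t-\tfrac{n-1}{2}t^2+O(t^3)$ as $t\to 0^+$ yields $h_n(t)/h_m(t)\to 1$, so the supremum equals $1$ and is not attained.

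The infimum is the serious step. By symmetry of $h_n,h_m$ about $t=1/2$, the ratio $r(t):=h_n(t)/h_m(t)$ is symmetric about $1/2$, so it suffices to show $r$ is monotonically decreasing on $(0,1/2]$, with minimum $r(1/2)=m(1-2^{1-n})/(n(1-2^{1-m}))$ attained by a Bernoulli$(1/2)$ random variable. Because $h_n(0)=h_m(0)=0$ and $h_m'>0$ on $(0,1/2)$, the monotone L'Hopital rule reduces this to showing $h_n'/h_m'$ is decreasing on $(0,1/2)$. Writing $a=1-s$, $b=s$, $u=ab=s(1-s)$, the factorization $a^{n-1}-b^{n-1}=(a-b)(a^{n-2}+a^{n-3}b+\cdots+b^{n-2})$ gives $h_n'(s)=(1-2s)\,U_{n-2}(u)$, where the polynomials $U_k$ are defined by $U_0=U_1=1$ and the Chebyshev-type recurrence $U_k=U_{k-1}-uU_{k-2}$. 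Consequently $h_n'(s)/h_m'(s)=U_{n-2}(u)/U_{m-2}(u)$ is a function of $u\in(0,1/4)$ alone, and since $u$ is strictly increasing in $s$ on $(0,1/2)$, it remains to prove that $u\mapsto U_{n-2}(u)/U_{m-2}(u)$ is decreasing on $(0,1/4)$.

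This final monotonicity is the main obstacle. I would proceed via the hyperbolic substitution $\cosh\theta=1/(2\sqrt u)$ for $u\in(0,1/4)$, which gives $U_k(u)=u^{k/2}\sinh((k+1)\theta)/\sinh\theta$; a short manipulation using $\sinh((k+1)\theta)=\sinh(k\theta)\cosh\theta+\cosh(k\theta)\sinh\theta$ together with $u^{1/2}\cosh\theta=1/2$ and $u^{1/2}\sinh\theta=\tfrac12\tanh\theta$ yields
\[
\frac{U_k(u)}{U_{k-1}(u)}=\frac{1}{2}\left(1+\frac{\tanh\theta}{\tanh(k\theta)}\right).
\]
The elementary inequality $\sinh(kx)\ge k\sinh x$ for $k\ge 1$, $x\ge 0$ (proven by comparing derivatives from $x=0$ using $\cosh(kx)\ge\cosh x$) shows that $\tanh\theta/\tanh(k\theta)$ is increasing in $\theta$, hence decreasing in $u$ since $\theta$ is a decreasing function of $u$. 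Each factor in the telescoping product $U_{n-2}/U_{m-2}=\prod_{k=m-1}^{n-2} U_k/U_{k-1}$ is therefore positive and decreasing in $u$, so the product is decreasing in $u$, completing the proof. Without this Chebyshev/hyperbolic ansatz, the direct algebraic analysis of the polynomial ratio becomes cumbersome for general $n,m$.
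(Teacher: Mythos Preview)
Your proof is correct and follows the same overall framework as the paper: both invoke the ratio bound \eqref{eq:r} for the distortion functions $h_n,h_m$, use the monotonicity result from Appendix~\ref{sec:monotone} for the upper bound, and confirm sharpness via Bernoulli inputs (the paper uses $p\downarrow 0$ for the supremum and $p=1/2$ for the infimum, matching your Taylor expansion and your Bernoulli$(1/2)$ respectively).

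The substantive difference is in the treatment of the infimum. The paper simply asserts ``by direct computation, we find that this infimum is attained at $t=0.5$,'' without further justification. For general $m\le n$ this is not an obvious calculation, and you fill this gap with a genuine argument: symmetry reduces to $(0,1/2]$; the monotone L'H\^opital rule reduces $h_n/h_m$ to $h_n'/h_m'$; the factorization $a^{k+1}-b^{k+1}=(a-b)U_k(u)$ with $u=s(1-s)$ and the recurrence $U_k=U_{k-1}-uU_{k-2}$ reduce the problem to monotonicity of $U_{n-2}/U_{m-2}$ in $u$; and the hyperbolic parametrization $\cosh\theta=1/(2\sqrt{u})$ yields the closed form $U_k/U_{k-1}=\tfrac12(1+\tanh\theta/\tanh(k\theta))$, whose monotonicity follows from $\sinh(kx)\ge k\sinh x$. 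All of these steps check out (in particular, $a=e^{\theta}/(2\cosh\theta)$, $b=e^{-\theta}/(2\cosh\theta)$ gives $U_k=u^{k/2}\sinh((k{+}1)\theta)/\sinh\theta$ exactly). Your argument thus supplies a rigorous proof of the key step that the paper leaves to the reader, at the cost of some additional machinery; the paper's version is shorter but, as you note, the ``direct'' verification is not transparent for arbitrary $m,n$.
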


The same bounds in Proposition \ref{prop:3} also apply to 
$\GC_n(X)/\GC_m(X)$
for  $X\in L^1_+$. 
In particular, these bounds imply
$1\le \GC_3(X)/\GC_2(X) \le 1$, thus $\GC_3=\GC_2$,
and
$7/8\le \GC_4(X)/\GC_3(X) \le 1$.

   \section{Elicitability and multi-observation elicitability}\label{sec:5}
 Statistical  elicitability is an important
 property for statistical functionals, popular in forecasting, risk management, and machine learning;  see \cite{G11} for a thorough study.   It 
 refers to the existence of a scoring function for the forecasted value of a risk functional and realized value of future observations, so that the mean of the scoring function attains its minimum value if and only if the value of the risk functional is truly forecasted.  Comparative backtests, for which elicitability is a necessary condition, are discussed by \cite{NZ17} as an alternative to the traditional backtests.

In this section, we mainly treat $\rho$ as mappings from $\M^q$ to $\R$ for some $q\ge 1$, instead of $L^p$ to $\R$ for some $q\ge 1$. 
The notation ``$f:  A \rightrightarrows  B$" means 
  that $f$ maps each element of $A $ to a subset of  $B$.
  Moreover, we identify a singleton in $\R^k$ with its element. For instance, if $\rho:  \mathcal{M} \rightrightarrows  \mathbb{R}^k$ takes value only in singletons, then we treat $\rho : \mathcal{M} \to  \mathbb{R}^k$.

\begin{definition}\label{def:1} For $\mathcal M\subseteq \mathcal M_\rho$, the mapping $\rho: \mathcal{M}_\rho \rightrightarrows  \mathbb{R}^k$ is 
\emph{$\mathcal{M}$-elicitable} if there exists a   function $S: \mathbb{R}^{k+1} \rightarrow \mathbb{R}$ such that for all $F \in \mathcal{M}$,
$$
\rho(F)=\argmin_{\mathbf x\in \R^k}  \int_{-\infty}^{\infty} S(\mathbf{x}, y) \mathrm{d} F(y) .
$$
We omit ``$\mathcal M$" in ``$\mathcal M$-elicitability" when $\M$ is the domain $\mathcal M_\rho$ of the mapping $\rho$.
The function $S$ is called a score function.
\end{definition}
  It is easy to check that none of GD or GC is elicitable. In fact, as shown by \cite{WW20}, all deviation measures (including GD) are not elicitable. 
  
  Inspired by the definition of GD, $\GD_n$ is connected to another version of elicitability. This idea of multi-observation elicitability appeared  first in \cite{CFMW17} and it is further studied by \cite{FMMW17}.

\begin{definition}\label{def:2}
For $\mathcal M\subseteq \mathcal M_\rho$, a mapping $\rho: \mathcal{M}_\rho \rightrightarrows \mathbb{R}^k$ is \emph{$n$-observation $\mathcal{M}$-elicitable} if there exists a function $S: \mathbb{R}^{k+n} \rightarrow \mathbb{R}$ such that
$$
\rho(F)=\argmin_{\mathbf x\in \R^k}\int_{\mathbb{R}^n} S(\mathbf{x}, \mathbf{y}) \mathrm{d} F^n(\mathbf{y}), \quad F \in \mathcal{M},
$$
where $F^n\left(y_1, \ldots, y_n\right)=\prod_{j=1}^n F\left(y_j\right)$.
\end{definition}

Two prominent examples of 2-observation elicitable functionals 
are the variance (Example \ref{ex:variance} below) and the Gini deviation (Theorem \ref{thm:2} below).

\begin{example}
\label{ex:variance}
Similar to the Gini deviation in \eqref{eq:0}, the variance functional can be written as
$$
\var(X) = \frac 12 \E[(X-X')^2],
$$
where $X'$ is an iid copy of $X$.
The variance is $2$-observation $\M^4$-elicitable with the  score function  (e.g., \cite{CFMW17})
$$
S(x, y_1,y_2) = (2x-(y_1-y_2)^2)^2.
$$
This can be checked directly from
$$
 \argmin_{  x\in \R  }\E\left[(2x-(X-X')^2)^2 \right] 
 = \frac 12 \E[ (X-X')^2] = \var(X).
$$
\end{example}
Clearly, $1$-observation elicitability coincides with elicitability in Definition \ref{def:1}.
Moreover, the property of $n$-observation elicitability gets weaker as $n$ grows. 
\begin{proposition}\label{prop:2}
If $m\ge n$, then $n$-observation elicitability implies $m$-observation elicitability.
\end{proposition}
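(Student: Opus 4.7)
The plan is to transfer an $n$-observation score function to an $m$-observation one by simply ignoring the extra $m-n$ arguments. Concretely, suppose $\rho:\mathcal M_\rho \rightrightarrows \R^k$ is $n$-observation $\mathcal M$-elicitable with some score function $S:\R^{k+n}\to\R$, i.e., for every $F\in\mathcal M$,
\[
\rho(F)=\argmin_{\mathbf x\in\R^k}\int_{\R^n} S(\mathbf x,y_1,\dots,y_n)\,\d F^n(y_1,\dots,y_n).
\]
I would then define $\tilde S:\R^{k+m}\to\R$ by
\[
\tilde S(\mathbf x,y_1,\dots,y_m):=S(\mathbf x,y_1,\dots,y_n),
\]
which is a legitimate score function in the sense of Definition~\ref{def:2}, as no integrability or regularity beyond measurability is required there.

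Next, for any $F\in\mathcal M$, using Fubini's theorem and the fact that $F^m = F^n \otimes F^{m-n}$ is a product measure on $\R^m$,
\[
\int_{\R^m} \tilde S(\mathbf x,\mathbf y)\,\d F^m(\mathbf y)
=\int_{\R^{n}} S(\mathbf x,y_1,\dots,y_n)\,\d F^n(y_1,\dots,y_n)\cdot\int_{\R^{m-n}}\d F^{m-n},
\]
and the second factor is $1$. Hence the $m$-variate expected score under $\tilde S$ coincides pointwise in $\mathbf x$ with the $n$-variate expected score under $S$, so their $\argmin$ sets over $\mathbf x\in\R^k$ are identical and both equal $\rho(F)$. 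This gives the required $m$-observation $\mathcal M$-elicitability.

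There is no real obstacle here; the only thing to verify is that ``$\argmin$'' is preserved, which follows because the two objective functions differ by an $\mathbf x$-independent constant (in fact they are equal). I would include one short sentence noting that the construction does not depend on any symmetry assumption on $S$, and that iterating the argument (or invoking transitivity of the implication) delivers the result for all $m\ge n$.
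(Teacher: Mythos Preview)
Your proof is correct and follows essentially the same approach as the paper: define the $m$-observation score by ignoring the last $m-n$ coordinates, $\tilde S(\mathbf x,y_1,\dots,y_m)=S(\mathbf x,y_1,\dots,y_n)$, and observe that the expected scores coincide. The paper states this in one sentence without the Fubini computation, but the idea is identical.
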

 This proposition follows  directly  by noticing that the function $S_m$ given by $S_m(x,y_1,\dots,y_m)=S_n(x,y_1,\dots,y_n)$, where $S_n$ is the $n$-observation score function, is indeed an $m$-observation score function.

Next, we establish two score functions for $\GD_n$.

\begin{theorem}\label{thm:2}
The mapping $\mathrm{GD}_n$ is $n$-observation $\mathcal{M}^2$-elicitable with  the score function $$S\left(x, y_1, \dots, y_n\right)=\left(n x- \max\{y_1,\dots, y_n\} + \min\{y_1,\dots, y_n\}  \right)^2.$$ 
Moreover, $\mathrm{GD}_n$ is $n$-observation $\mathcal{M}^1$-elicitable with  the score function  $$S^*\left(x, y_1, \dots, y_n\right)= n x^2-2x \left(\max\{y_1,\dots, y_n\}-\min\{y_1,\dots, y_n\} \right).$$ 
 In particular,    $\mathrm{GD}_2$ is $2$-observation $\M^2$-elicitable by $ (x, y_1, y_2 )\mapsto (2 x- |y_1-y_2 | )^2$  
 and $2$-observation  $\M^1$-elicitable by $ (x, y_1, y_2 )\mapsto x^2-x|y_1-y_2|$.
\end{theorem}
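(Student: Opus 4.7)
The plan is to reduce both claims to the elementary fact that the mean functional is elicited by a quadratic loss, applied to the range statistic $R := \max\{Y_1,\dots,Y_n\} - \min\{Y_1,\dots,Y_n\}$ when $Y_1,\dots,Y_n$ are iid copies drawn from $F$. By the very definition \eqref{eq:0c}, we have $\E[R] = n\,\mathrm{GD}_n(F)$, so minimizing any strictly convex quadratic in $x$ whose linear coefficient is $-2\E[R]$ (up to scaling) will pin down the unique minimizer at $x = \E[R]/n = \mathrm{GD}_n(F)$.

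First I would handle the score $S$. Writing out
\[
\E[S(x,Y_1,\dots,Y_n)] = \E\bigl[(nx - R)^2\bigr] = n^2 x^2 - 2nx\,\E[R] + \E[R^2],
\]
I would note that $R^2 \le 2(\max_i Y_i^2 + \min_i Y_i^2) \le 2\sum_{i=1}^n Y_i^2$, so $F\in\mathcal M^2$ is exactly the assumption needed to make $\E[R^2]<\infty$ and hence the expected score finite. The displayed expression is strictly convex in $x$, and its unique minimizer is $x^\ast = \E[R]/n = \mathrm{GD}_n(F)$, which delivers the first claim.

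Next I would treat $S^\ast$, whose expected score
\[
\E[S^\ast(x,Y_1,\dots,Y_n)] = nx^2 - 2x\,\E[R]
\]
drops the $R^2$ term entirely. Hence only $\E[R]<\infty$ is required, which holds whenever $F\in\mathcal M^1$ (by $|R|\le\sum_i|Y_i|$); strict convexity of the quadratic again yields the unique minimizer $x^\ast = \E[R]/n = \mathrm{GD}_n(F)$, proving the $\mathcal M^1$-elicitability statement. The $n=2$ specializations are immediate from the identity $\max\{y_1,y_2\}-\min\{y_1,y_2\}=|y_1-y_2|$, after dividing the score by the harmless constant factor $4$ for $S$ and $2$ for $S^\ast$ (strictly consistent scoring rules are invariant under positive rescaling and additive constants depending only on $y$).

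There is really no hard step here; this is essentially the Bregman argument that the mean is elicited by squared loss, transported through the nonlinear statistic $R$. The only point worth emphasizing in writing is the contrast in moment hypotheses: dropping the $R^2$ term in passing from $S$ to $S^\ast$ is what allows the enlargement of the domain from $\mathcal M^2$ to $\mathcal M^1$, explaining why the theorem is stated with two different score functions over two different domains rather than as a single assertion.
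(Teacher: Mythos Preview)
Your proposal is correct and follows essentially the same approach as the paper's proof: expand the expected score as a strictly convex quadratic in $x$ and read off the unique minimizer $\E[R]/n=\mathrm{GD}_n(F)$, then observe that $S^\ast$ differs from $S$ only by the $x$-free term $\E[R^2]$, whose removal relaxes the moment requirement from $\mathcal{M}^2$ to $\mathcal{M}^1$. Your write-up is in fact more explicit than the paper's about why the respective moment hypotheses are needed.
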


 
Similarly to the classic notion of elicitability, the score function 
for an $n$-observation elicitable functional 
is not unique in general, as suggested by Theorem \ref{thm:2}.  

The next result shows $n$-observation elicitability of signed Choquet integrals with  a polynomial distortion function.
\begin{theorem}\label{th-n-elicitability-distortion}
 A signed Choquet integral, with   distortion function  $h$ that is a polynomial  of degree $n$ or less, 
is $n$-observation $\mathcal{M}^1$-elicitable.
\end{theorem}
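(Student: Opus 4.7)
The plan is to reduce the signed Choquet integral $\rho_h$ to an expected value of a function of $n$ iid copies, and then apply a squared-loss score similar to the second score $S^*$ of Theorem \ref{thm:2}. Since $h$ is a polynomial of degree at most $n$ and signed Choquet integrals require $h(0)=0$, we can write
\begin{equation*}
h(t)=\sum_{k=1}^{n} c_k\, t^k
\end{equation*}
for some coefficients $c_1,\dots,c_n\in\R$. By linearity of $X\mapsto \rho_h(X)$ in $h$, we get $\rho_h=\sum_{k=1}^{n} c_k\, \rho_{t^k}$, so it suffices to find a sample representation for each monomial distortion $t\mapsto t^k$.

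Next, I would identify the Choquet integral with distortion $t^k$ explicitly. For iid copies $X_1,\dots,X_k$ of $X$, one has $\p(\min\{X_1,\dots,X_k\}>x)=\p(X>x)^k$ for every $x\in\R$, and therefore
\begin{equation*}
\rho_{t^k}(X)=\int_0^{\infty}\p(X>x)^k\,\d x+\int_{-\infty}^0\bigl(\p(X>x)^k-1\bigr)\,\d x=\E\!\left[\min\{X_1,\dots,X_k\}\right].
\end{equation*}
Combining with the previous step yields the sample representation
\begin{equation*}
\rho_h(X)=\E\!\left[T(X_1,\dots,X_n)\right],\qquad T(y_1,\dots,y_n):=\sum_{k=1}^{n} c_k\min\{y_1,\dots,y_k\},
\end{equation*}
where $X_1,\dots,X_n$ are iid copies of $X$. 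Since $|T(y_1,\dots,y_n)|\le\bigl(\sum_k|c_k|\bigr)\max_i|y_i|$, the expectation is finite for every $X\in L^1$.

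Finally, mimicking the construction of $S^*$ in Theorem \ref{thm:2}, define
\begin{equation*}
S(x,y_1,\dots,y_n)=x^2-2x\,T(y_1,\dots,y_n),\qquad (x,y_1,\dots,y_n)\in\R^{1+n}.
\end{equation*}
For any $F\in\M^1$ and iid $X_1,\dots,X_n\sim F$,
\begin{equation*}
\int_{\R^n} S(x,\mathbf{y})\,\d F^n(\mathbf{y})=x^2-2x\,\E[T(X_1,\dots,X_n)]=x^2-2x\,\rho_h(F),
\end{equation*}
whose unique minimizer in $x$ is $\rho_h(F)$. This establishes $n$-observation $\M^1$-elicitability.

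The only genuinely non-routine step is the sample representation $\rho_{t^k}(X)=\E[\min\{X_1,\dots,X_k\}]$; everything else is linear algebra in the distortion and a standard quadratic elicitation. The potential obstacle is the book-keeping for $X\in L^1$ (making sure the signed-Choquet expressions for negative and positive parts are finite), but this follows directly from the $L^1$ domination by $\sum_i|X_i|$.
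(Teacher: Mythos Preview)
Your proof is correct and follows essentially the same route as the paper: expand the polynomial distortion in a monomial basis, identify each monomial Choquet integral as an expected order statistic of iid copies, and then use a quadratic-type score. The only cosmetic difference is that the paper expands $h$ in the basis $\{(1-t)^k\}$ and hence works with $\E[\max\{X_1,\dots,X_k\}]$, whereas you expand in $\{t^k\}$ and use $\E[\min\{X_1,\dots,X_k\}]$; you also go directly to the $\M^1$ score $S^*$ rather than first writing the full squared loss on $\M^2$ and then dropping the $x$-free term.
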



 When $n$ is an odd number,  we can check that the $n$-th degree term  $t^n$ in the distortion function $h_n$ in \eqref{eq:h}
get canceled. Therefore, in this case, by using Theorem \ref{th-n-elicitability-distortion}, $\GD_n$ is $(n-1)$-observation elicitable, which is a stronger condition than $n$-observation elicitability, as shown in Proposition \ref{prop:2}. 
Note that the score function for $n-1$ observations is different from the one obtained in Theorem \ref{thm:2} for $n$ observations.

The functional $\GC_n$ is also $n$-observation elicitable, as the ratio of expectations, following the same idea as in  Theorem 3.2 of \cite{G11}. 
\begin{proposition}\label{prop:elicitability-GC}
The mapping $\mathrm{GC}_n$ is $n$-observation $\mathcal{M}^1$-elicitable with  the score function $$S\left(x, y_1, \dots, y_n\right)=x^2 y_1- \frac{2x}{n}(\max\{y_1,\dots, y_n\} - \min\{y_1,\dots, y_n\}).$$  In particular,    $\mathrm{GC}_2$ is $2$-observation $\M^1$-elicitable by $S (x, y_1, y_2 )=  x^2y_1- x|y_1-y_2 | $.
\end{proposition}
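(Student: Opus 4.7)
The plan is to verify the claimed score function $S$ directly by computing the expected score under $F^n$ and identifying its unique minimizer as $\GC_n(F)$. This mirrors the classical quotient-of-expectations strategy from Gneiting's Theorem 3.2, where a ratio $\E[g(Y)]/\E[h(Y)]$ is elicited via a quadratic-in-$x$ score.

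First I would use that $Y_1,\dots,Y_n$ are iid with distribution $F$ to compute
$$\int_{\R^n} S(x,y_1,\dots,y_n)\,\d F^n(\mathbf y)
= x^2\,\E[Y_1] - \frac{2x}{n}\,\E\bigl[\max\{Y_1,\dots,Y_n\}-\min\{Y_1,\dots,Y_n\}\bigr].$$
Recognizing the second expectation as $n\,\GD_n(X)$ by its definition in \eqref{eq:0c}, and noting $\E[Y_1]=\E[X]$, this simplifies to the quadratic
$$\int_{\R^n} S(x,\mathbf y)\,\d F^n(\mathbf y) = \E[X]\,x^2 - 2\,\GD_n(X)\,x.$$

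Next, since the domain is $\M^1_+$, every $X\in L^1_+$ satisfies $X\ge 0$ and $X\ne 0$, hence $\E[X]>0$ and the quadratic is strictly convex in $x$. Elementary calculus yields a unique minimizer
$$x^\star = \frac{\GD_n(X)}{\E[X]} = \GC_n(X),$$
which establishes the required identity
$$\GC_n(F)=\argmin_{x\in\R}\int_{\R^n} S(x,\mathbf y)\,\d F^n(\mathbf y), \qquad F\in\M^1_+.$$
For the particular case $n=2$, substituting $\max\{y_1,y_2\}-\min\{y_1,y_2\}=|y_1-y_2|$ into $S$ immediately produces $S(x,y_1,y_2)=x^2 y_1 - x|y_1-y_2|$, recovering the announced special case.

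There is no substantive obstacle: everything reduces to linearity of expectation, the definition of $\GD_n$, and the minimization of a one-dimensional quadratic. The only point that deserves a line of justification is the strict positivity $\E[X]>0$, which ensures the minimizer exists and is unique; this follows from the standing convention $X\in L^1_+$, i.e., $X\ge 0$ with $X\ne 0$.
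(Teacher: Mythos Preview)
Your proof is correct and follows essentially the same approach as the paper: compute the expected score as the quadratic $\E[X]\,x^2 - 2\,\GD_n(X)\,x$ and minimize over $x$ to obtain $\GC_n(X)$. Your version is in fact slightly more careful than the paper's, since you explicitly justify the strict positivity $\E[X]>0$ needed for uniqueness of the minimizer.
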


The elicitability results allow for obtaining $\GD_n $ and $\GC_n$ through empirical risk minimization. In the next section, we will study estimators of $\GD_n$ and $\GC_n$ through their Choquet integral representation.


\section{Estimation and inference}\label{sec:6}

In this section, we study empirical estimators of $\GD_n$ and $\GC_n$ and establish their consistency and asymptotic normality, which can be used for point estimation and hypothesis testing.

Let $X_1, \ldots, X_N$ be a sample of independent observations, and denote by $X_{(1)} \leq \cdots \leq X_{(N)}$ their order statistics. The empirical distribution function is given by  $\widehat{F}_N(x) = \frac{1}{N} \sum_{j=1}^N \mathbf{1}_{\{X_j \leq x\}},$  and its associated quantile function satisfies  $\widehat{F}_N^{-1}(i/N) = X_{(i)}$ for $i = 1, \ldots, N$. 
As a standard approach for estimating  L-statistics and  distortion risk measures   (e.g., \cite{HR09} and \cite{MFE15}), 
the empirical estimator of  $\GD_n(X)$ is then defined as
$$
\widehat{\GD}_n(N) = \frac{1}{N} \sum_{i=1}^N X_{(i)} \left( \left( \frac{i}{N} \right)^{n-1} - \left( 1 - \frac{i}{N} \right)^{n-1} \right),
$$
while the empirical estimator of  $\GC_n(X)$ takes the form

$$
\widehat{\GC}_n(N) = \frac{1}{\overline{X}} \cdot \widehat{\GD}_n(N), \quad \text{where } \overline{X} = \frac{1}{N} \sum_{i=1}^N X_i.
$$
 We make  following standard regularity assumption on the distribution of the random variable $X$.
\begin{assumption}\label{assump:1}
The distribution $F$ of $X \in L^1$ has a  support that is a convex set and has a positive density function $f$ on the support. Denote by $\tilde f=f \circ F^{-1}$. 
\end{assumption}
\begin{theorem}\label{thm:5}Suppose that $X_1,\dots, X_N \in L^1_+$ {are} an iid sample from  $X\in L^1_+$ and Assumption \ref{assump:1} holds. Then,  we have
$
 \widehat{\GD}_n(N) \stackrel{\mathbb{P}}{\rightarrow} \GD_n(X)$ and $
 \widehat{\GC}_n(N) \stackrel{\mathbb{P}}{\rightarrow} \GC_n(X)$  as $n \to \infty.
$
 Moreover, if  $X  \in L^{\gamma}_+$ for some $\gamma>2$, then 
we have   
\begin{align*}
\sqrt{N}\left(\widehat{\GD}_n(N)-\GD_n(X)\right) &\stackrel{\mathrm{d}}{\rightarrow} \mathrm{N}\left(0, \sigma^2_{\GD_n}\right); \\
\sqrt{N}\left(\widehat{\GC}_n(N)-\GC_n(X)\right) &\stackrel{\mathrm{d}}{\rightarrow} \mathrm{N}\left(0, \sigma^2_{\GC_n}\right),
\end{align*}
where
\begin{align}\label{eq:sigma_GD} 
\sigma^2_{\GD_n}&= \int_0^1 \int_0^1 \frac{\phi_n(s) \phi_n(t) (s \wedge t-s t)}{\tilde f(s) \tilde f(t)} \mathrm{d} t \mathrm{d} s, 
\\ \label{eq:sigma_GC} 
\sigma^2_{\GC_n}&= \int_0^1 \int_0^1 \frac{(\phi_n(s)-\GC_n(X) )( \phi_n(t)-\GC_n(X) )(s \wedge t-s t)}{(\E[X])^2\tilde f(s) \tilde f(t)} \mathrm{d} t \mathrm{d} s,
\end{align}
and $ \phi_n(s)=s^{n-1} -(1-s)^{n-1}$ for $s\in [0,1]$.
\end{theorem}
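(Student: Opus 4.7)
The plan is to identify $\widehat{\GD}_n(N)$ as an $L$-statistic with a smooth polynomial weight $\phi_n$, so that consistency and asymptotic normality follow from the classical asymptotic theory of $L$-statistics (in the spirit of the references \cite{HR09} and \cite{MFE15} cited just before the theorem), and then to transfer these statements to $\widehat{\GC}_n(N)$ via the multivariate delta method for the ratio functional.

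For consistency, I would use Theorem~\ref{th:basic}(i) to write $\GD_n(X) = \int_0^1 F_X^{-1}(t)\phi_n(t)\,dt$ and note that the estimator admits the representation $\widehat{\GD}_n(N) = \int_0^1 \widehat{F}_N^{-1}(t)\,\phi_n(\lceil Nt\rceil/N)\,dt$. Since $\phi_n$ is continuous and bounded on $[0,1]$, since $\widehat{F}_N^{-1}(t)\to F_X^{-1}(t)$ almost surely at every $t\in(0,1)$ by the Glivenko--Cantelli theorem together with Assumption~\ref{assump:1}, and since $X\in L^1$ provides an integrable dominating envelope for $F_X^{-1}$, dominated convergence yields $\widehat{\GD}_n(N)\to \GD_n(X)$ in probability. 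Combining this with the strong law $\overline X\to \E[X]>0$ and the continuous mapping theorem gives consistency of $\widehat{\GC}_n(N)$.

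For asymptotic normality of $\widehat{\GD}_n(N)$, I would invoke the classical CLT for $L$-statistics with smooth weight functions: because $\phi_n$ is a polynomial on $[0,1]$ (so all its derivatives are bounded) and $X\in L^\gamma_+$ for some $\gamma>2$, the standard regularity conditions are satisfied under Assumption~\ref{assump:1}, producing $\sqrt{N}(\widehat{\GD}_n(N)-\GD_n(X))\to \mathrm{N}(0,\sigma^2_{\GD_n})$ with asymptotic variance initially of the form $\iint \phi_n(F(x))\phi_n(F(y))(F(x\wedge y)-F(x)F(y))\,dx\,dy$; the change of variables $s=F(x)$, $t=F(y)$, using $(F^{-1})'(s)=1/\tilde f(s)$ from Assumption~\ref{assump:1}, converts this into \eqref{eq:sigma_GD}. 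For $\widehat{\GC}_n(N)=\widehat{\GD}_n(N)/\overline X$, I would apply the bivariate delta method to $g(u,v)=u/v$ at $(\GD_n(X),\E[X])$, invoking joint asymptotic normality of $(\widehat{\GD}_n(N),\overline X)$; both are smooth empirical functionals with influence functions $\psi_1(x)=-\int \phi_n(F(u))(\mathbf{1}_{\{x\le u\}}-F(u))\,du$ and $\psi_2(x)=x-\E[X]=-\int(\mathbf{1}_{\{x\le u\}}-F(u))\,du$. The gradient of $g$ at $(\GD_n(X),\E[X])$ is $(1/\E[X],-\GD_n(X)/\E[X]^2)$, and the resulting influence function for $\GC_n$ consolidates into a single integrand with centered weight $\phi_n(F(u))-\GC_n(X)$; the same change of variables then yields \eqref{eq:sigma_GC}.

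The main obstacle I anticipate is verifying the regularity conditions of the $L$-statistic CLT near the endpoints of the support of $X$ under Assumption~\ref{assump:1} and $X\in L^\gamma_+$ with $\gamma>2$; these are what control the integrability of $1/\tilde f$ against $\phi_n^2$ in \eqref{eq:sigma_GD}. Once this is in hand, the remainder---the change of variables, the joint central limit theorem, and the delta-method algebra that produces the centered integrand---is essentially routine.
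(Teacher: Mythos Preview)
Your proposal is correct and follows a standard but genuinely different route from the paper's proof. For consistency, the paper invokes Theorem~2.6 of \cite{KSZ14} on consistency of empirical plug-in estimators for finite convex risk measures (applied to $\GD_n+\E$), whereas you argue directly via Glivenko--Cantelli and dominated convergence; your approach is more elementary, though the dominating-envelope step needs a little care (e.g., a truncation or uniform-integrability argument, since $\int_0^1|\widehat F_N^{-1}(t)|\,dt=\overline{|X|}$ does not by itself furnish a pointwise dominating function). For asymptotic normality, the paper works with the empirical quantile process $A_N(t)=\sqrt N(\widehat F_N^{-1}(t)-F^{-1}(t))$, uses its weak convergence to $B/\tilde f$ on compact sub-intervals (\cite{DGU05}), and then handles the endpoints by showing that $X\in L^\gamma$ implies $1/\tilde f(t)\le C(t(1-t))^{-1/\gamma-1}$, so that $(B_t/\tilde f(t))\,\phi_n(t)=o_{\p}((t(1-t))^{\eta-1})$ for some $\eta>0$; the variance is then read off from the Brownian-bridge covariance. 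Your black-box invocation of the $L$-statistic CLT leads to the same place and, as you correctly anticipate, the only substantive work is exactly this tail-regularity verification---whichever $L$-statistic framework you cite, you will need to reproduce that bound. For $\widehat{\GC}_n$, both proofs use the delta method; the paper expresses the joint limit of $(\widehat{\GD}_n(N),\overline X)$ as a pair of Brownian-bridge integrals, while your influence-function bookkeeping is equivalent and arguably tidier in arriving at the centered weight $\phi_n-\GC_n(X)$.
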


The computational complexity of the empirical estimator $\widehat{\GD}_n(N)$ is dominated by the sorting step required to compute the empirical distribution function. Sorting the sample $X_1, X_2, \dots, X_N$ takes $O(N \log N)$ time using efficient sorting algorithms such as quicksort or mergesort. After sorting, the calculation of the empirical distribution and the estimator itself requires only $O(N)$ operations. Therefore, the overall time complexity of $\widehat{\GD}_n(N)$ is $O(N \log N)$.

In what follows, we present some simulation results based on Theorem \ref{thm:5}. 
Simulation results are presented in the case of log-normal distribution $\mathrm{LN}(0,1)$ and $\text{Pareto}(3,2)$ (with tail index $3$). Let the sample size  $N = 5000$, and we repeat the procedure 2000 times. We assume  $n = 5 $. 
\begin{figure}[t!]
\centering
 \includegraphics[width=16cm]{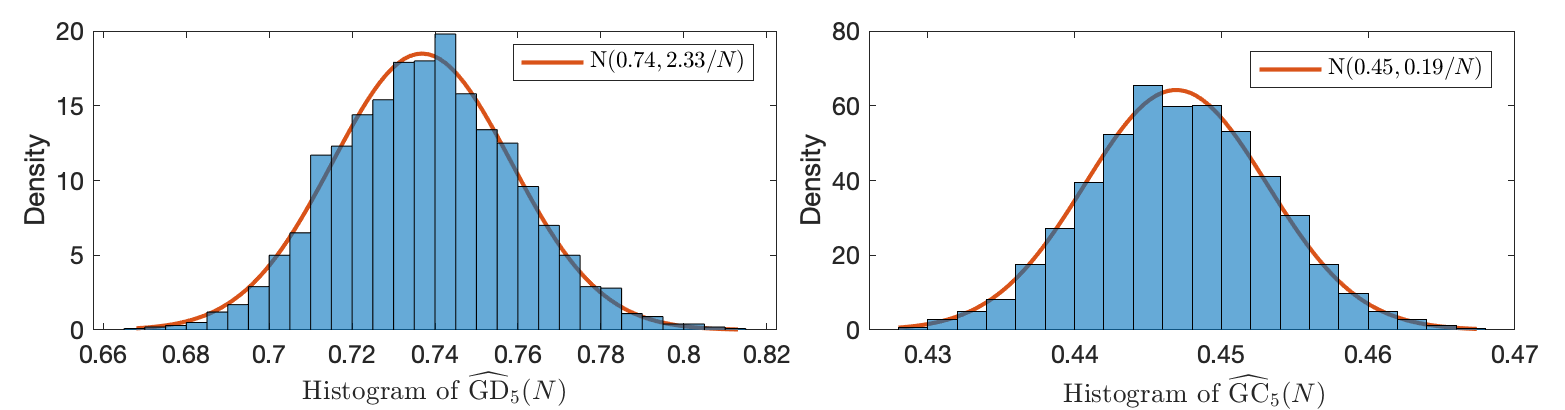}
 \captionsetup{font=small}
 \caption{ \small Values of  $\widehat{\GD}_5(N)$ (left panel)  and $\widehat{\GC}_5(N)$ (right panel) for 
$\mathrm{LN}(0,1)$ }\label{fig:asy_LN}
\end{figure}

\begin{figure}[t!]
\centering
 \includegraphics[width=16cm]{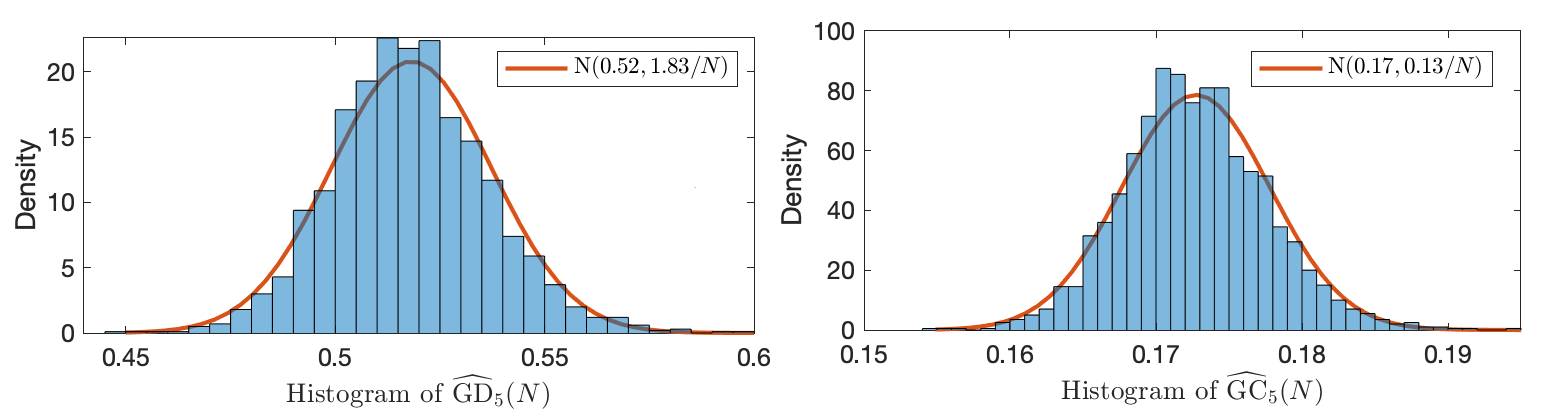}
 \captionsetup{font=small}
 \caption{ \small Values of  $\widehat{\GD}_5(N)$ (left panel)  and $\widehat{\GC}_5(N)$ (right panel) for 
 $\mathrm{Pareto}(3,2)$}\label{fig:asy_pareto}
\end{figure}

In Figure~\ref{fig:asy_LN}, the sample is drawn from the log-normal distribution $\mathrm{LN}(0,1)$. We observe that the empirical estimates of $\GD_5(X)$ align quite well with the density of $\mathrm{N}(0.74,2.33/N)$, while the empirical estimates of $\GC_5(X)$ match the density of $\mathrm{N}(0.45,0.19/N)$. Notably, the asymptotic variance of $\GC_5(X)$ is smaller than that of $\GD_5(X)$.
In Figure~\ref{fig:asy_pareto}, the sample is drawn from the Pareto distribution $\mathrm{Pareto}(3,2)$. We observe that the empirical estimates of $\GD_5(X)$ closely match the density of $\mathrm{N}(0.52,1.83/N)$, and the empirical estimates of $\GC_5(X)$ align with the density of $\mathrm{N}(0.17,0.13/N)$.

\begin{figure}[t!]
\centering
 \includegraphics[width=16cm]{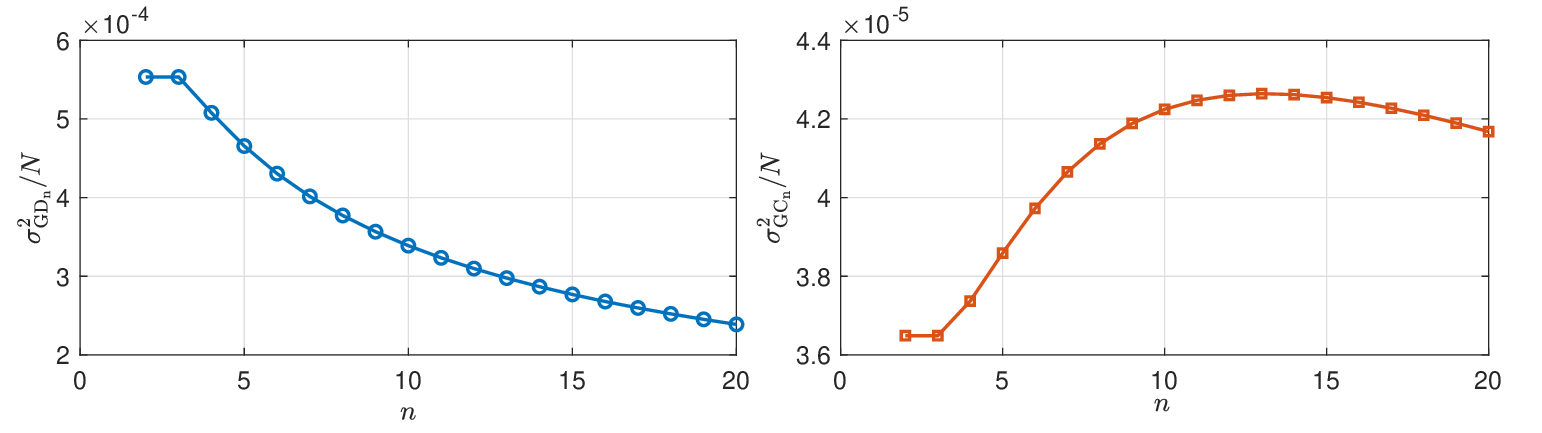}
 \captionsetup{font=small}
 \caption{ \small  Asymptotic variance  for $\GD_n(X)$ (left panel)
 and $\GC_n(X)$ (right panel) for $X\sim\mathrm{LN}(0,1)$}\label{fig:variance1}
\end{figure}
\begin{figure}[t!]
\centering
 \includegraphics[width=16cm]{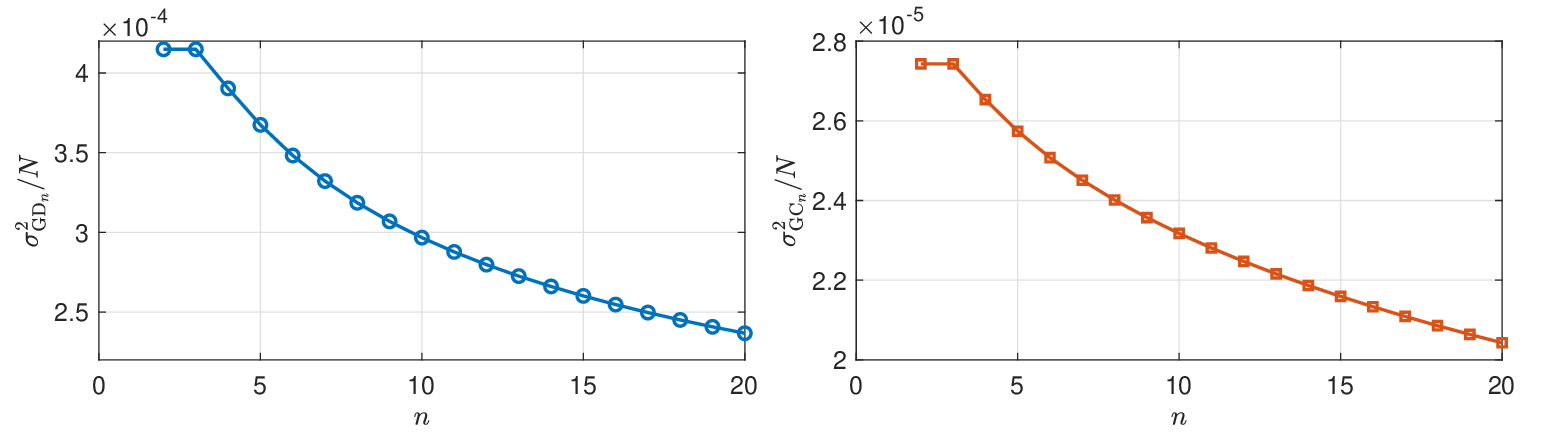}
 \captionsetup{font=small}
 \caption{\small Asymptotic variance  for $\GD_n(X)$ (left panel)
 and $\GC_n(X)$ (right panel) for $X\sim\mathrm{Pareto}(3,2)$}\label{fig:variance2}
\end{figure}

In addition, we illustrate how the value of asymptotic variance varies as $n$ ranges from 2 to 20. As shown in Figures \ref{fig:variance1} and \ref{fig:variance2}, the asymptotic variance of $\GD_n(X)$ decreases as $n$ increases. However, the asymptotic variance of $\GC_n(X)$ does not necessarily exhibit such monotonic behavior. For example, when  $X\sim \mathrm{LN}(0,1)$, the asymptotic variance of $\GC_n(X)$ initially increases and then decreases.

\section{Empirical analysis}\label{sec:8}

In this section, we present an empirical analysis by computing high-order Gini indices, $\mathrm{GC}_n$, for both wealth and income across countries and regions over time, and by comparing these results with the conventional case of $n = 2$. 
As suggested by Theorem \ref{th:chara}, one could also consider linear combinations of $\GC_n$, but this will lead to similar results, which we omit.

Our analysis is based on data from the World Inequality Database (WID), with all monetary figures converted to United States dollars to ensure cross-country comparability.

 \begin{figure}[t!]
\centering
 \includegraphics[width=16cm]{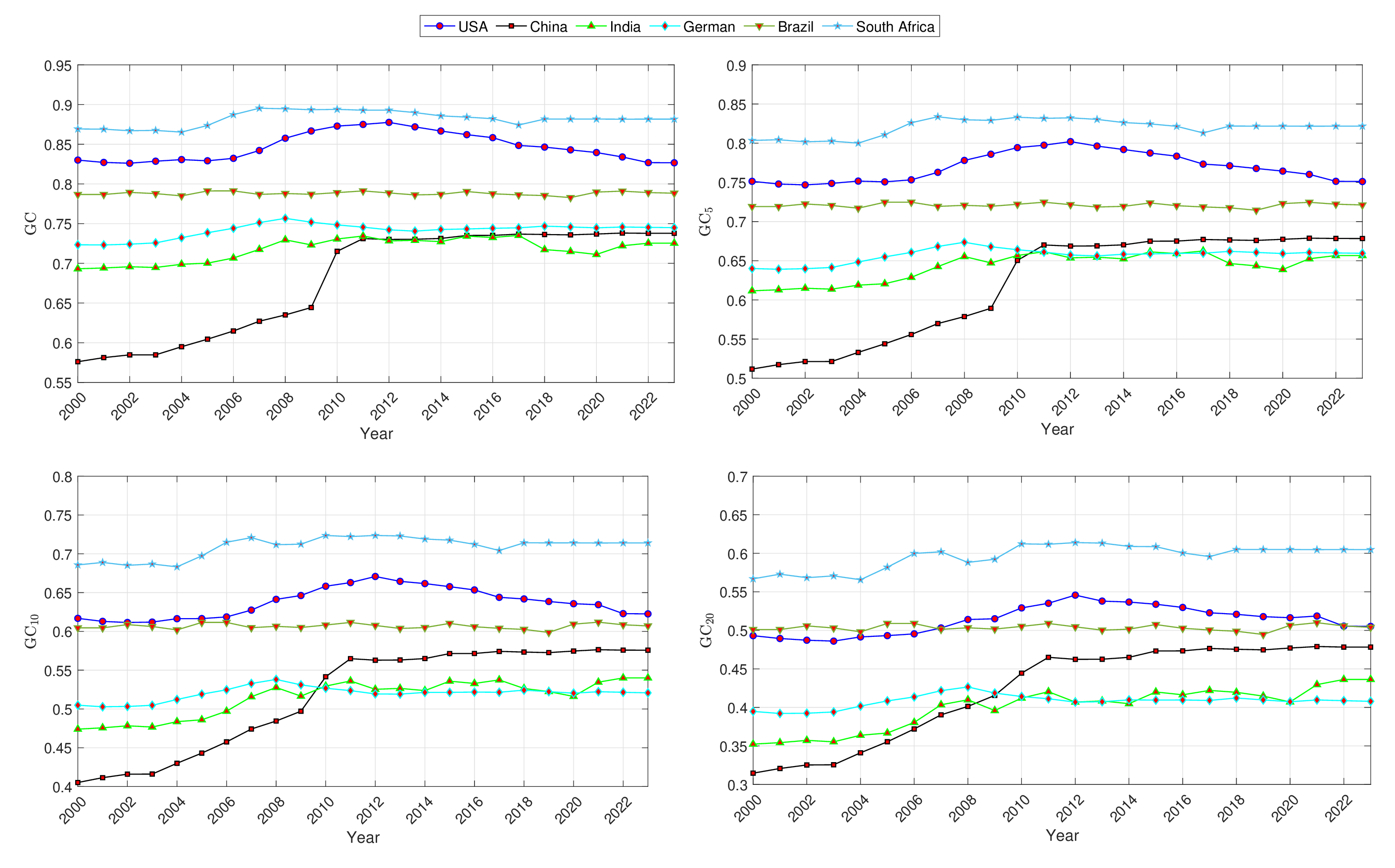}
 \captionsetup{font=small}
 \caption{\small Values of $\mathrm{GC}_n$ based on household wealth for the United States, China, India, German, Brazil and the South Africa from 2000 to 2023, for $n = 2$, 5, 10, and 20.
}\label{fig:GCn}

\end{figure}

\begin{figure}[t!]
\centering
 \includegraphics[width=16cm]{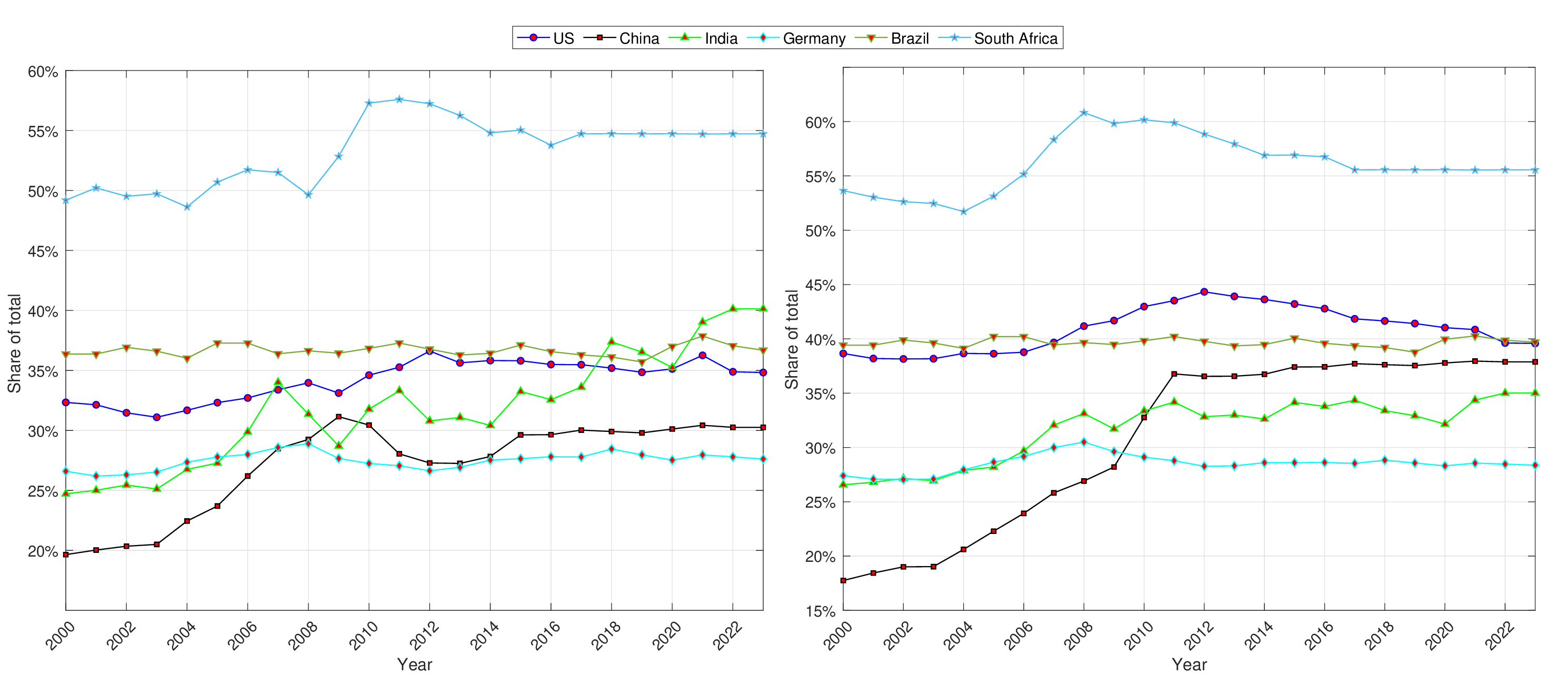}
 \captionsetup{font=small}
 \caption{\small Wealth shares held by the top 1\% (left) and top 10\% (right) in  the  United States, China, India, German, Brazil and the South Africa  from 2000 to 2023.
 }\label{fig:topshare}
\end{figure}

We begin by examining the values of $\mathrm{GC}_n$ for six major economies---the US, China (mainland), India, German, Brazil and South Africa---over the period 2000 to 2023.  The plots for Canada and the UK have already been presented in the Introduction and are therefore omitted here for clarity.  Figure~\ref{fig:GCn} reveals several notable patterns, similar to those observed in Firgure \ref{fig:GC10_country}. For $n = 2$, the values of  $\mathrm{GC}_2$ for China, India, and German are quite close after 2010, indicating broadly similar levels of overall wealth inequality. In contrast,  the South Africa and US consistently displays a much higher $\mathrm{GC}_2$, reflecting its well-documented concentration of wealth. The value for Brazil lies between these two groups, remaining below that of the  US.  As $n$ increases to 5, 10, and 20,   the values of Brazil and the US  become  increasingly similar.
Additionally, China’s $\mathrm{GC}_n$ values rise sharply, surpassing those of India  and  German  and gradually approaching those of the United States. While the trajectories for India and Germany remain closely aligned across all values of $n$, a change emerges after 2010: the valuse of  $\mathrm{GC}_{10}$ and $\mathrm{GC}_{20}$ for  India slightly exceed those of Germany, reversing the pattern seen for $\mathrm{GC}_2$ and $\mathrm{GC}_5$.

To better interpret these results, we plot in Figure~\ref{fig:topshare}  the wealth shares held by the top 1\% and top 10\% in each country. South Africa maintains both high top-share levels and persistently elevated $\mathrm{GC}_n$ values, reaffirming its status as a highly unequal economy.  China's top 10\% share  exceeds those of India and Germany, which is consistent with the sharp rise  in line with the pronounced increase in its $\mathrm{GC}_n$ values as $n$ increases.  We also observe that the top 1\% and top 10\% wealth shares in Brazil are quite close to those in the US, helping to explain the behavior of its $\mathrm{GC}_n$: as $n$ increases, Brazil's index rises significantly and the difference between Brazil and US almost vanishes.
Furthermore, India's top 1\% and top 10\% shares surpass those of Germany after 2004, which explains the shift in their relative $\mathrm{GC}_n$ patterns---while Germany initially has higher values at smaller $n$, India overtakes it as $n$ grows.

\begin{figure}[t!]
\centering
 \includegraphics[width=16cm]{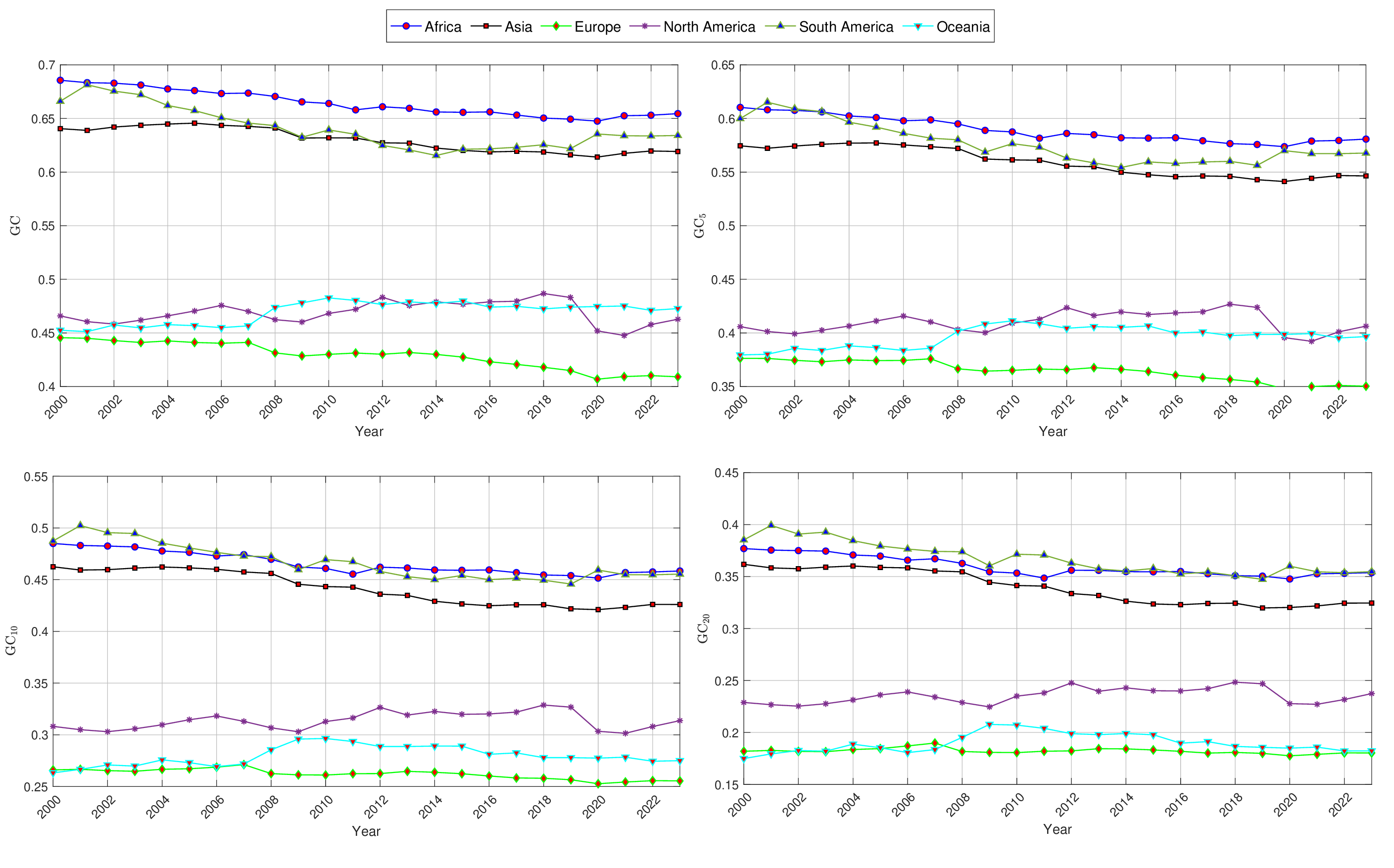}
 \captionsetup{font=small}
 \caption{\small Values of $\mathrm{GC}_n$ based on post-tax national income for Africa, Asia, Europe, North America, South America, and Oceania from 2000 to 2023, for $n = 2$, 5, 10, and 20.
}\label{fig:GCn_region}
\end{figure}

\begin{figure}[t!]
\centering
 \includegraphics[width=16cm]{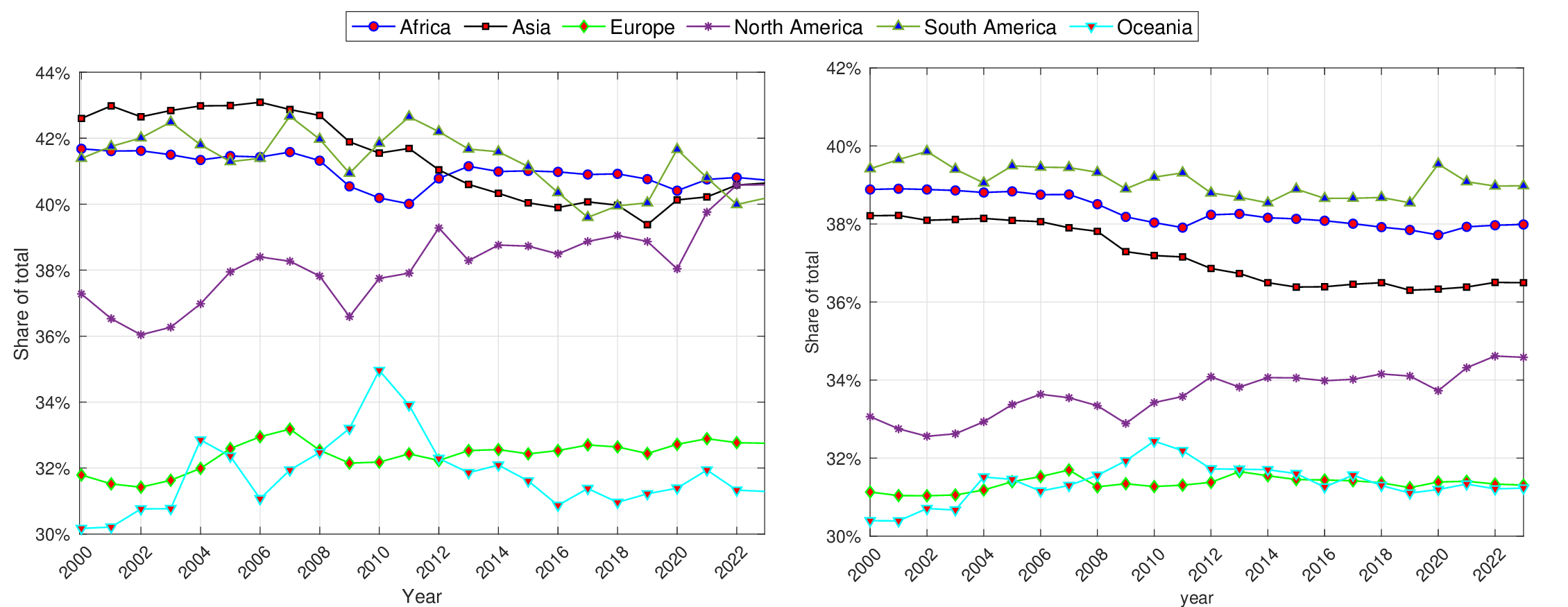}
 \captionsetup{font=small}
 \caption{\small Income shares held by the top 1\% (left) and top 10\% (right) in Africa,  Asia, Europe, North America, South America and Oceania  from 2000 to 2023.}\label{top_share_region}
\end{figure}

Traditionally, GC is  used for measuring income inequality. 
We present  the values of $\GC_n$  for post-tax national income across six major world continents---Africa, Asia, Europe, North America, South America, and Oceania---over the period 2000 to 2023, with $n\in\{2,5,10,20\}$.\footnote{WID provides the full income distribution for each continent, but not for each country.} By Figure \ref{fig:GCn_region}, we can see that  for all $n \geq 2$, $\GC_n$  for Africa, Asia, and South America consistently exceed those of North America, Europe, and Oceania. While $\GC_2$ shows only minor differences among North America, Europe, and Oceania,  the gap  between North America  and Oceania widens progressively with   higher $n$, reflecting greater top-end income concentration consistent with its relatively large shares held by the top 1\% and 10\% income groups as shown in Figure \ref{top_share_region}.  Further,  from the perspective of  $\GC_2$, Asia’s inequality level is close to South America’s and lower than Africa’s. However, when $n \geq 5$, Asia’s $\GC_n$ falls below those of both Africa and South America. South America’s $\GC_n$ values at higher $n$ surpass those of the other two regions, marking it as the region with the highest concentration of top incomes. This aligns with evidence showing that South America’s top 10\% income shares exceed those of both Asia and Africa in Figure \ref{top_share_region}.

These findings highlight the key advantage of $\GC_n$: it uncovers differences that the traditional GC cannot detect. When overall inequality appears similar, higher-order $\GC_n$ values reveal significant disparities in top-tail income concentration. Thus, $\GC_n$ provides a more nuanced tool for assessing income concentration and its evolution over time, offering particular value to policymakers monitoring the impacts of tax reforms or addressing the accumulation of wealth among ultra-high-net-worth individuals.

From the empirical results, if one needs to pick $\GC_n$ for a fixed value of $n$ in complement to the standard $\GC_2$, then  we generally recommend using $\GC_{n}$ with $n=10$, as using a small value of $n$ does not provide much different information than $\GC_2$, and using $n>10$ leads to qualitatively similar results. Nevertheless, in practice it may be helpful to compute $\GC_n$ for several different $n$, or an average of them, to get a more comprehensive picture on the inequality to be measured.

\section{Conclusion}\label{sec:9}

We provided an axiomatic study for a generalization of the classical Gini deviation and Gini coefficient
 as tools for inequality and dispersion measurement. 
We considered an axiom involving expectations of functions of $n \geq 2$ independent and identically distributed copies. By imposing the properties of symmetry, comonotonic additivity, and continuity, we obtained an axiomatic characterization showing that any such measure can be represented as an affine combination of the higher-order Gini deviations.
By transitioning from pairwise to groupwise comparison structures, the generalized indices capture aspects of tail concentration and extremal behavior, different from the traditional two-observation formulation. 

We  analyze the economic properties of the higher-order Gini coefficient, and show that higher-order Gini deviations admit Choquet integral representations, satisfying key properties of coherent deviation and risk measures.  The relationship between the 
high order Gini deviation and the standard deviation is explored. Furthermore, we establish their $n$-observation elicitability,  bridging the gap between descriptive indices and decision-theoretic evaluation.
Empirical analyses using global income and wealth data illustrate the strength of this framework: while the classical GC often masks nuanced disparities across regions or subpopulations, higher-order indices reveal meaningful differences—particularly among top earners. These findings highlight the value of $\mathrm{GC}_n$ as a diagnostic tool for policymakers, regulators, and data scientists facing increasingly complex distributional challenges.

As data environments grow more intricate, the flexibility  of higher-order Gini indices make them promising instruments for capturing refined patterns of inequality, concentration, and systemic risk. In risk management, these indices provide coherent, tail-sensitive alternatives to variance-based risk measures. Their emphasis on upper-tail risks aligns naturally with the objectives of solvency analysis, reinsurance pricing, and regulatory capital assessment. Meanwhile, in machine learning, the multi-observation structure supports the design of loss functions that better capture group-level dispersion, promoting robustness and fairness in algorithmic decision-making—especially in settings involving skewed or heavy-tailed data.


\appendix

\section{Proofs}\label{sec:proofs}


\begin{proof}[Proof of Theorem \ref{th:chara}]
We first prove the equivalence statement. Some statements rely on properties of $\GD_n$ in Proposition \ref{th:basic} in Section \ref{sec:property}, which will be proved later separately. 

\underline{Sufficiency.} 
By the definition of $i$-th order Gini deviation in \eqref{eq:0c}, we have
\begin{align*}
\rho(X)=\sum_{i=1}^n a_i {\rm GD}_i(X)
=\E\left[\sum_{i=2}^n\frac{a_i}{i}\left(\max\{X_1,\dots,X_i\}-\min\{X_1,\dots,X_i\}\right)\right] + \E\left[\frac{a_1}{2}|X_1-X_2|\right].
\end{align*}
This implies that $\rho$ satisfies [A1].
It follows from Proposition \ref{th:basic} that $\rho$ defined by \eqref{eq-affineGD} is a signed Choquet integral, and thus, it follows from Theorem 1 of \cite{WWW20a} that $\rho$ satisfies symmetry, comonotonic additivity, and uniform norm continuity. This completes the proof of sufficiency.

\underline{Necessity.} According to Theorem 1 in \cite{WWW20a}, any mapping $\rho$ satisfying law invariance (implied by [A1]), [A3] and [A4] can be characterized as a signed Choquet integral with an associated distortion function denoted by $h$. We claim  that $h$ is a polynomial of degree at most $n$. To see this, by [A1], there exists a measurable function  $f:\R^n\to\R$ such that  
\begin{align*}
\rho(X)=\E[f(X_1,\dots,X_n)]=\int_0^{\infty} h(\p(X>x))\d x+\int_{-\infty}^0 (h(\p(X>x))-h(1))\d x,~~X\in \X,
\end{align*}
where $X_1,\dots,X_n$ are iid copies of $X$. Let $X$ be a Bernoulli random variable with $\p(X=0)=1-\p(X=1)=1-t$. Define
\begin{align*}
A_i=\{\mathbf z\in \{0,1\}^n: \Vert  \mathbf z\Vert_1 =i\},~~~~i=0, 1,\dots,n,
\end{align*}
where $\Vert  \mathbf z\Vert_1$ represents the $1$-norm of $\mathbf z$.
By straightforward calculation, we have 
\begin{align*}
&h(t)= \rho(X)=\E[f(X_1,\dots,X_n)]
=\sum_{i=0}^n \left(t^{i}(1-t)^{n-i}\sum_{\mathbf z\in A_i}f(\mathbf z)\right),
\end{align*}
which means that $h$ is a polynomial of degree at most $n$. Let $(c_0,c_1,\dots,c_n)\in \R^{n+1}$ satisfy
\begin{align}\label{eq-polynomialdistortion}
h(t)=\sum_{i=0}^n c_i t^i,~~t\in[0,1].
\end{align}
The symmetry [A2] implies that (see \citet[Proposition 2]{WWW20a})
\begin{align*}
h(t)=h(1-t)-h(1),~~t\in[0,1].
\end{align*}
This yields $h(0)= h(1)=0$, and thus $c_0=0$. Moreover,  
\begin{align*}
h(t)
&=\frac{h(t)+h(1-t)}{2}\\
&=\frac12\sum_{i=1}^n c_i(t^i+(1-t)^i)
=\sum_{i=1}^n \left(\frac{-i c_i}{2}\right)\left(\frac{1}{i}\right)(1-t^i-(1-t)^i).
\end{align*} 
By the representation result of ${\rm{GD}}_i$ in terms of signed Choquet integral in Proposition \ref{th:basic} of Section \ref{sec:GDrepresentaton},
$\rho$ admits the form of \eqref{eq-affineGD} with $a_1+a_2=-c_2$ and $a_i=-ic_i/2$ for $i=3,\dots,n$.

\underline{Sufficiency for other axioms.}   
Nonnegativity [A5] follows from \eqref{eq:0c} and $a_1,\dots,a_n\ge 0$.
  [A6]--[A7]
follow directly from properties of signed Choquet integrals in \cite{WWW20a}. 
  [A8]--[A11] follow  the fact that each $\GD_i$ has a concave distortion function in Proposition \ref{th:basic} and $a_1,\dots,a_n\ge 0$; see Theorem 3 of \cite{WWW20a}.  
  To show 
[A12], note that $\E[\max\{X_1,\dots,X_n\}]< \E[X_1+\dots+X_n]= n\E[X_1]$,
and hence $\GD_i(X)/\E[X]\le 1$ for all $X\in \X_+$.
Example \ref{ex:1} below illustrates that  the functionals $\GD_i(X)/\E[X]$ share the same sequence $(X_\epsilon)$ of random variables such that $\GD_i(X_\epsilon)/\E[X_\epsilon]\to 1$ as $\epsilon \downarrow 0$. Therefore,  as their convex combination, $\rho$ satisfies 
$\rho(X_\epsilon)/\E[X_\epsilon]\to 1$, and hence [A12] holds. 
\end{proof}

    
\begin{proof}[Proof of Theorem \ref{prop:GD-p}.]
The properties follow from the corresponding properties of $\GD_n$ in Theorem \ref{th:chara} and the fact that the ratio of a convex function to a positive affine function is quasi-convex (see e.g., \citet[Section 3.4.4]{BV04}).
\end{proof}

\begin{proof}[Proof of Proposition \ref{th:basic}.]
(i) Since $$ \mathbb{P}\left(\max \left\{X_1, \ldots, X_n\right\}  \leq x\right) =(\mathbb{P}(X \leq x))^n=(F_X(x))^n,$$ and  $$ 
\mathbb{P}\left(\min \left\{X_1, \ldots, X_n\right\} \leq x\right)
=1-(\mathbb{P}(X>  x))^n=1-(1-F_X(x))^n,
 $$ we have
\begin{equation}\label{eq:0e}\begin{aligned} \GD_n(X)& = \frac 1n\int_\R  x \d ( F_X(x)^n- 1+ (1-F_X(x))^n)\\& = \int_0^1 F^{-1}_X(t) ( t^{n-1} -(1-t)^{n-1} )\d t.
\end{aligned}\end{equation}
(ii) It follows from \eqref{eq:0e} that  $\GD_n$ is a signed Choquet integral with a concave distortion function
in \eqref{eq:h}. 
\end{proof}

\begin{proof}[Proof of Proposition \ref{prop:comSD}.]
For $\mu\in \R$ and $ \sigma>0$, consider the set $$\mathcal{M}(\mu, \sigma)=\{X \in L^2: \mathbb{E}[X]=\mu, ~ \mathrm{SD}(X)=\sigma\}.$$  Using the H\"older inequality, we obtain, with $h_n$ defined in \eqref{eq:h}, 
$$
\begin{aligned}
\sup _{X \in \mathcal{M}(\mu,\sigma)}\GD_n(X)&= \sup _{X \in \mathcal{M}(0,\sigma)}\GD_n(X)\\&= \sup _{X \in \mathcal{M}(0,\sigma)}  \int_0^1 h_n^{\prime}(t) F^{-1}_X(1-t) \mathrm{d} t \\
& \leqslant \sup _{X\in \mathcal{M}( 0,\sigma)}\left\|h_n^{\prime}\right\|_2\left(\int_0^1\left|F^{-1}_X(1-t)\right|^2 \mathrm{~d} t\right)^{1 / 2}=\sigma\left\|h_n^{\prime}\right\|_2.
\end{aligned} 
$$
Equality is attained if  and only if  $|F^{-1}_X(1-t)|^2$ is  a multiple of $(h'_n)^{1/2}$, leading to $F^{-1}_X(1-t)=h'(t) / \left\|h_n^{\prime}\right\|_2$.
Finally, computing the norm gives
$$\left\|h_n^{\prime}\right\|_2=\sqrt{\frac{2}{2n-1}-\frac{2((n-1)!)^2}{(2n-1)!}},$$ which shows that the upper bound holds and it is sharp.
To show that the lower bound is sharp, consider $X_\epsilon $ in Example \ref{ex:1}, 
for which we have $\GD_n(X_\epsilon)\le \epsilon$, 
and $\mathrm{SD}_n(X_\epsilon)\approx \epsilon^{1/2}$ for $\epsilon>0$ small. 
\end{proof}

\begin{proof}[Proof of Proposition \ref{prop:3}.]
By Proposition \ref{pro:1} in Appendix \ref{sec:monotone}, since $n\leq m$,  we   know $\GD_n(X)/\mathrm{GD}_m(X)\leq 1$ for $X\in L^1_*$.   For the lower bound,    \eqref{eq:r} gives, with $h_n$ defined in \eqref{eq:h},   $$\frac{\GD_n(X)}{\mathrm{GD}_m(X)}\geq \inf_{t\in(0,1)} \frac{h_n(t)}{h_m(t)}$$
   By direct computation, we find that this infimum is attained at $t=0.5$,  yielding
    $$\inf_{t\in(0,1)} \frac{h_n(t)}{h_m(t)}=\frac{h_n(0.5)}{h_m(0.5)}=\frac{m}{n}\frac{1-2^{1-n}}{1-2^{1-m}}.$$
To show that the bound is sharp, consider the two-point distribution
  $$F=(1-p)\delta_x+p\delta_y, ~~~ \text{where}~  x<y ~\text{and}~ p\in(0,1),$$ 
  where $\delta_x$ is the point-mass at $x$. We can compute $$\frac{\GD_n(X)}{\mathrm{GD}_m(X)}=\frac{h_n(p)(y-x)}{h_m(p)(y-x)} =\frac{m(1-p^n-(1-p)^n)}{n(1-p^m-(1-p)^m)}.$$ Taking the limit as $p\downarrow0$, we have $$\lim_{p\downarrow0}\frac{m(1-p^n-(1-p)^n)}{n(1-p^m-(1-p)^m)}= \lim_{p\downarrow0}\frac{-p^{n-1}+(1-p)^{n-1}}{-p^{m-1}+(1-p)^{m-1}}=1.$$
  Hence, the upper bound in \eqref{eq:B1} holds. 
 The lower bound is attained by setting $p=0.5$.
\end{proof}



\begin{proof}[Proof of Theorem \ref{thm:2}.]
It is straightforward to check
$$\begin{aligned}
&\argmin_{x\in \R} \int_{\mathbb{R}^n}S\left(x, y_1, \dots, y_n\right) \mathrm{d} F(y_1) \cdots\mathrm{d} F(y_n)\\&=\argmin_{x\in \R} \mathbb{E}\left[\left(n x-\left(\max\{X_1,\dots,X_n\}-\min\{X_1,\dots,X_n\}\right)\right)^2\right]\\&=\frac{1}{n}  \E \left[\max\{X_1,\dots,X_n\}-\min\{X_1,\dots,X_n\}\right].\end{aligned}
$$
Therefore, $\mathrm{GD}_n$ is $n$-observation $\mathcal{M}^2$-elicitable.
Using the fact that $S$ and $S^*$ differs by a constant term in $x$,  we can see that $S^*$ also elicits $\GD_n$, and it is finite on $\M^1$. 
\end{proof}

\begin{proof}[Proof of Theorem \ref{th-n-elicitability-distortion}.]
Any polynomial $h$ of degree $n$ or less can be written as 
$$ h(t) = a_n (1-t)^n + a_{n-1} (1-t)^{n-1} + \dots + a_1 (1-t) + a_0,~~~t \in [0,1]  , $$
for some  $a_0,a_1,\dots,a_n \in \mathbb{R}$ with $a_0 = -(a_1 + a_2 + \dots + a_n)$. For iid  random variables $X,X_1,\dots,X_n$ with distribtion $F\in \mathcal M^2$,  we have 
\begin{align*} \rho(X)=&\int_0^1 F_X^{-1}(1-t)\d h(t) \\=& \int_0^1 F_X^{-1}(1-t)\d (a_n (1-t)^n+\dots+a_1(1-t)+a_0)   =  -
\sum_{i=1}^n a_i \E[\max\{X_1,\dots,X_i\}] .\end{align*}
For the function $S$ defined by $$S\left(x, y_1, \dots, y_n\right)=\left( x+\sum_{i=1}^n a_i  \max\{y_1,\dots,y_i\}   \right)^2,$$
it is easy to check
\begin{align*}
  &  \argmin_{{x \in \mathbb{R}}} \int_{\mathbb{R}^n}S\left(x, \mathbf y \right) \mathrm{d} F^n(\mathbf y)  =\argmin_{{x \in \mathbb{R}}}\mathbb{E}\left[\left( x+\sum_{i=1}^n a_i  \max\{X_1,\dots,X_i\} \right)^2\right] =\rho(X), 
\end{align*}  
and hence $\rho$ is $n$-observation $\mathcal{M}^2$-elicitable. Removing the square terms from $S$ which does not involve $x$, we get a function $S^*$ that elicits $\rho$ on $\M^1$. 
\end{proof}

\begin{proof}[Proof of Proposition \ref{prop:elicitability-GC}.]
It is straightforward to check
$$\begin{aligned}
&\argmin_{x\in \R} \int_{\mathbb{R}^n}S\left(x, y_1, \dots, y_n\right) \mathrm{d} F(y_1) \cdots\mathrm{d} F(y_n)\\&=\argmin_{x\in \R} \mathbb{E}\left[x^2 X_1-\frac{2x}{n}\max\{X_1,\dots,X_n\}-\min\{X_1,\dots,X_n\}\right]\\&=\frac{1}{n}  \E \left[\max\{X_1,\dots,X_n\}-\min\{X_1,\dots,X_n\}\right]/\E[X_1]= \mathrm{GC}_n(X).\end{aligned}
$$
Therefore, $\mathrm{GC}_n$ is $n$-observation $\mathcal{M}^1$-elicitable.
\end{proof}

\begin{proof}[Proof of Theorem \ref{thm:5}]
The Law of Large Numbers yields $\widehat x_N\stackrel{\mathbb{P}}{\rightarrow} \E[X]$. 
 By Theorem 2.6 of \cite{KSZ14}, 
 the empirical estimator for a finite convex risk measure on $L^1$ is consistent, that is,
 $\widehat \GD_n (N)  + \widehat x_N \stackrel{\mathbb{P}}{\rightarrow} \GD_n(X) +\E[X]$, and this gives  $\widehat \GD_n (N)  \stackrel{\mathbb{P}}{\rightarrow} \GD_n(X)$. 
Moreover,  by the Continuous Mapping Theorem, since the mapping $(a,b)\to\frac{a}{b}$ is continuous at $(\GD_n(X),\E[X])$ with $\E[X]>0$,  we have $\widehat{\GC}_n(N) \stackrel{\mathbb{P}}{\rightarrow} \GC_n(X)$. 

Next, we will show the asymptotic normality.
Let $B=(B_t)_{t\in [0,1]}$ be a standard Brownian bridge, and let $d_N= \sqrt{N} (\widehat{\GD}_n(N)-\GD_n(X) ) $,
  $e_N =\sqrt{N} (\widehat x_N-\E[X])  $, 
  and $g_N= \sqrt{N} (\widehat{\GC}_n(N)-\GC_n(X) )) $. 
Note that $\GD_n$
can be written as an integral of the quantile, that is, 
$$
\GD_n(X) = \int_0^1 F^{-1}(t) ( t^{n-1} -(1-t)^{n-1} )\d t.
$$
Denote by $A_N$ 
the empirical quantile process, that is,
$$
A_N(t)= \sqrt{N} (\widehat F^{-1}_N(t) - F^{-1}(t)), ~~~t\in (0,1).
$$
It follows that 
$$ d_N = 
\int_0^1 A_N(t)  ( t^{n-1} -(1-t)^{n-1} ) \d t.
$$
It is well known that, under Assumption \ref{assump:1}, as $N\to \infty$, $A_N$
converges to the Gaussian process $B/\tilde f$   in $L^\infty[\delta,1-\delta]$ for any $\delta>0$ (see e.g.,  \cite{DGU05}).
This yields 
 $$
\int_\delta^{1-\delta} A_N(t) ( t^{n-1} -(1-t)^{n-1} ) \d t \stackrel{\mathrm{d}}\rightarrow  
\int_\delta^{1-\delta}  \frac{B_t}{\tilde f(t)}( t^{n-1} -(1-t)^{n-1} )\d t.$$ 
Next, we  verify      \begin{equation} 
\label{eq:joint-conv3}
\int_\delta^{1-\delta} ( t^{n-1} -(1-t)^{n-1} ) \frac{B_t}{\tilde f(t)} \d t\to  \int_0^1 \frac{B_t}{\tilde f(t)}( t^{n-1} -(1-t)^{n-1} )\d t \mbox{~~~as $\delta \downarrow 0$}.
\end{equation}
Denote by $w_t=t(1-t)$. 
Since 
$X\in L^{\gamma}$, we have,  for some $C>0$, 
$$
|F^{-1}(t)| \le C w_t^{-1/\gamma}~~\text{and}~ \frac{1}{\tilde f(t) } =
\frac{\d F^{-1}(t)} {\d t} \le C  w_t ^{-1/\gamma -1}. 
$$ 
Note that $B_t=o_{\p}(w_t^{1/2-\epsilon})$ for any $\epsilon>0$ as $t\to 0$ or $1$. Hence,  
for some $\eta>0$,
$$
\left| \frac{B_t}{\tilde f(t)}( t^{n-1} -(1-t)^{n-1} )\right| =o_{\p} (w_t^{\eta-1}) \mbox{~~~for $t\in (0,1)$},
$$
and this 
guarantees \eqref{eq:joint-conv3}. 
Using the convariance property of the Brownian bridge, that is,
$\mathrm{Cov}(B_t, B_s)=s-s t$ for $s<t$, we have
\begin{align*}&\var\left[\int_0^1 \frac{B_s( s^{n-1} -(1-s)^{n-1} )}{\tilde f(s)} \mathrm{d} s\right]\\ & =\mathbb{E}\left[\int_0^1 \int_0^1 \frac{( s^{n-1} -(1-s)^{n-1} )( t^{n-1} -(1-t)^{n-1} )B_s B_t}{\tilde f(s) \tilde f(t)} \mathrm{d} t \mathrm{d} s\right]\\ & = \int_0^1 \int_0^1 \frac{( s^{n-1} -(1-s)^{n-1} )( t^{n-1} -(1-t)^{n-1} )(s \wedge t-s t)}{\tilde f(s) \tilde f(t)} \mathrm{d} t \mathrm{d} s. \end{align*} 
Thus,  $
\sqrt{N}(\widehat{\GD}_n(N)-\GD_n(X)) \stackrel{\mathrm{d}}{\rightarrow} \mathrm{N}(0, \sigma^2_{\GD_n})
$ in which $\sigma^2_{\GD_n}$ is given by \eqref{eq:sigma_GD}.

Note that $$g_N=\sqrt{N}\left(\frac{\widehat{\GD}_n(N)}{\widehat x_N}-\frac{\GD_n(X)}{\E[X]}\right).$$
Since $$ d_N = 
\int_0^1 A_N(t)  ( t^{n-1} -(1-t)^{n-1} ) \d t, ~~~ e_N = 
\int_0^1 A_N(t)  \d t,
$$
by Delta method, we have 
$$
g_N = \textbf{R}^\top \textbf{L}+o_p(1),
$$
where $$ \textbf{R}=\left(\frac{1}{\E[X]}, -\frac{\GD_n(X)}{(\E[X])^2}\right), ~~~~\textbf{L}= (d_N, e_N).$$ 
Since $$
(d_N,e_N)   \stackrel{\mathrm{d}}{\rightarrow}\left(\int_0^1 \frac{B_s }{\tilde f(s)} ( s^{n-1} -(1-s)^{n-1} ) \mathrm{d} s , \int_0^1 \frac{B_s}{\tilde f(s)} \mathrm{d} s\right),$$ 
we have $$g_N \stackrel{\mathrm{d}}{\rightarrow}\frac{1}{\E[X]} \int_0^1 \frac{B_s }{\tilde f(s)} \left( s^{n-1} -(1-s)^{n-1}-\GC_n(X)\right) \mathrm{d} s.$$
Using the convariance property of the Brownian bridge,  we have
\begin{align*}&\var\left[\frac{1}{\E[X]} \int_0^1 \frac{B_s }{\tilde f(s)} \left( s^{n-1} -(1-s)^{n-1}- \GC_n(X)\right) \mathrm{d} s\right]\\ & =\mathbb{E}\left[\int_0^1 \int_0^1 \frac{( s^{n-1} -(1-s)^{n-1} -\GC_n(X))( t^{n-1} -(1-t)^{n-1}- \GC_n(X))B_s B_t}{(\E[X])^2\tilde f(s) \tilde f(t)} \mathrm{d} t \mathrm{d} s\right]\\ & = \int_0^1 \int_0^1 \frac{( s^{n-1} -(1-s)^{n-1}-\GC_n(X) )( t^{n-1} -(1-t)^{n-1}-\GC_n(X) )(s \wedge t-s t)}{(\E[X])^2\tilde f(s) \tilde f(t)} \mathrm{d} t \mathrm{d} s. \end{align*} 
Thus,  $
\sqrt{N}(\widehat{\GC}_n(N)-\GC_n(X)) \stackrel{\mathrm{d}}{\rightarrow} \mathrm{N}(0, \sigma^2_{\GC_n})
$ in which $\sigma^2_{\GC_n}$ is given by \eqref{eq:sigma_GC}.
\end{proof}

\section{Explicit GD and GC formulas   for parametric distributions}\label{sec:7}

In this section, we derive explicit formulas or integral representations for $\GD_n$ and $\GC_n$ for several commonly used parametric distributions, including Bernoulli,  Beta, log-normal, exponential, and Pareto distributions.
\begin{itemize}
\item [(i)] Bernoulli distribution:
Let $X\sim \mathrm B(p)$, where $p\in(0,1)$. Then $$
\GD_n(X) = \int_{1-p}^1  ( t^{n-1} - (1 - t)^{n-1} ) \mathrm{d}t=\frac{1}{n}(1-p^n-(1-p)^n).
$$
Since $\E[X]=p$, giving 
$$
\GC_n(X) =\frac{1-p^n-(1-p)^n}{np}.
$$  Figure \ref{fig:Ber} presents the curves of  $\mathrm{GD}_n$ and $\mathrm{GC}_n$ as 
$p$ and 
$n$ vary. 
\begin{figure}[t!]
\centering
 \includegraphics[width=16cm]{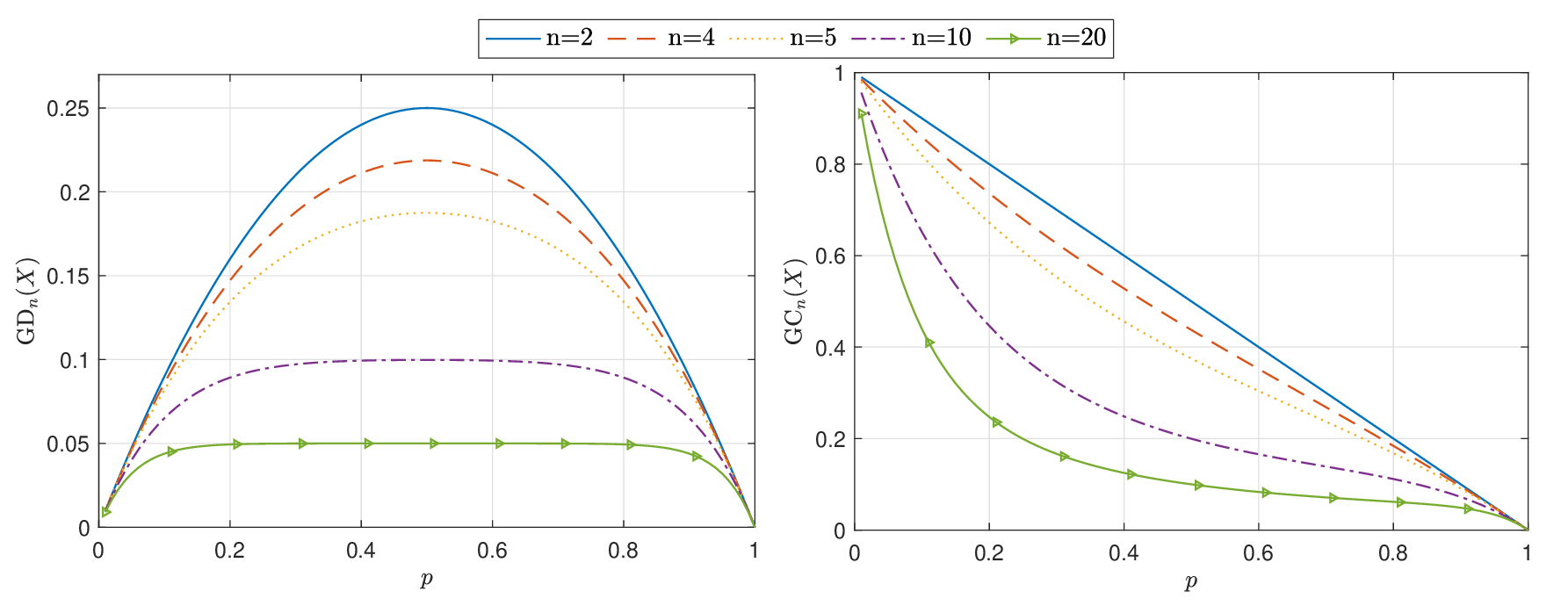}
 \captionsetup{font=small}
 \caption{ \small   Values of  $\mathrm {GD}_n$ and $\mathrm {GC}_n$ for  
$X \sim \mathrm B(p)$  with $p \in(0, 1)$}\label{fig:Ber}
\end{figure}
\item[(ii)] Beta distribution: 
For a Beta-distributed random variable $X \sim \text{Beta}(a, b)$ with shape parameters $a, b > 0$, the CDF is given by the regularized incomplete Beta function:
$$
F_X(x) = I_x(a, b) = \frac{\int_0^x u^{a-1} (1 - u)^{b-1} \mathrm{d}u}{B(a, b)},
$$
where $B(a, b) = \int_0^1 u^{a-1} (1 - u)^{b-1} \mathrm{d}u $ is the Beta function. The quantile function is denoted as:
$
F_X^{-1}(t) = I_t^{-1}(a, b),
$
where $I_t^{-1}(a, b)$  is the inverse regularized incomplete Beta function. Thus, $\GD_n(X)$ for the Beta distribution is:
$$
\GD_n(X) = \int_0^1 I_t^{-1}(a, b) \left( t^{n-1} - (1 - t)^{n-1} \right) \mathrm{d}t.
$$
Since $\mathbb{E}[X] = a/(a + b)$, we obtain
$$
\GC_n(X) = \frac{\GD_n(X)(a+b)}{a}.
$$ Figures \ref{fig:beta1}  and \ref{fig:beta2} present the curves of  $\mathrm{GD}_n$ and $\mathrm{GC}_n$ as 
$a$ 
$b$, and $n$ vary. 

\begin{figure}[t!]
\centering
 \includegraphics[width=16cm]{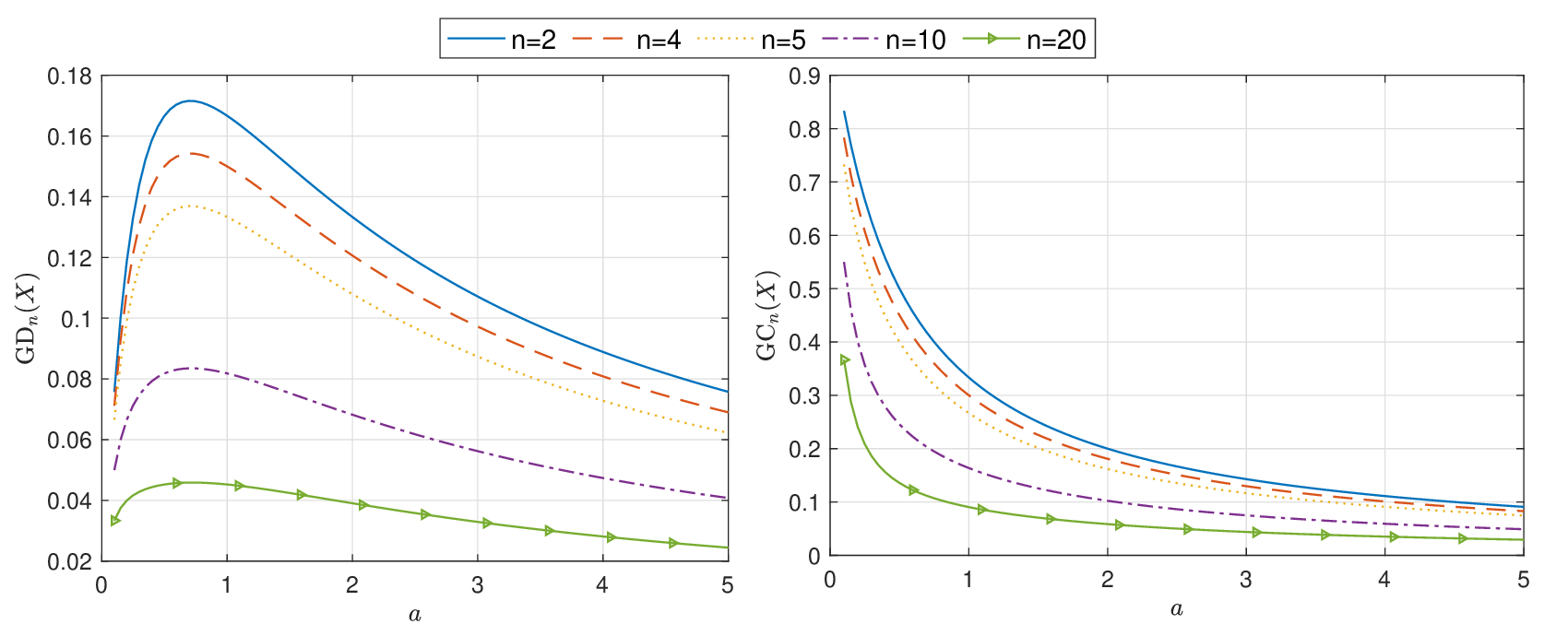}
 \captionsetup{font=small}
 \caption{ \small  Values of  $\mathrm {GD}_n$ and $\mathrm {GC}_n$ for  
$X \sim \text{Beta}(a, 1)$  with $a\in(0, 5)$}\label{fig:beta1}
\end{figure}
\begin{figure}[t!]
\centering
 \includegraphics[width=16cm]{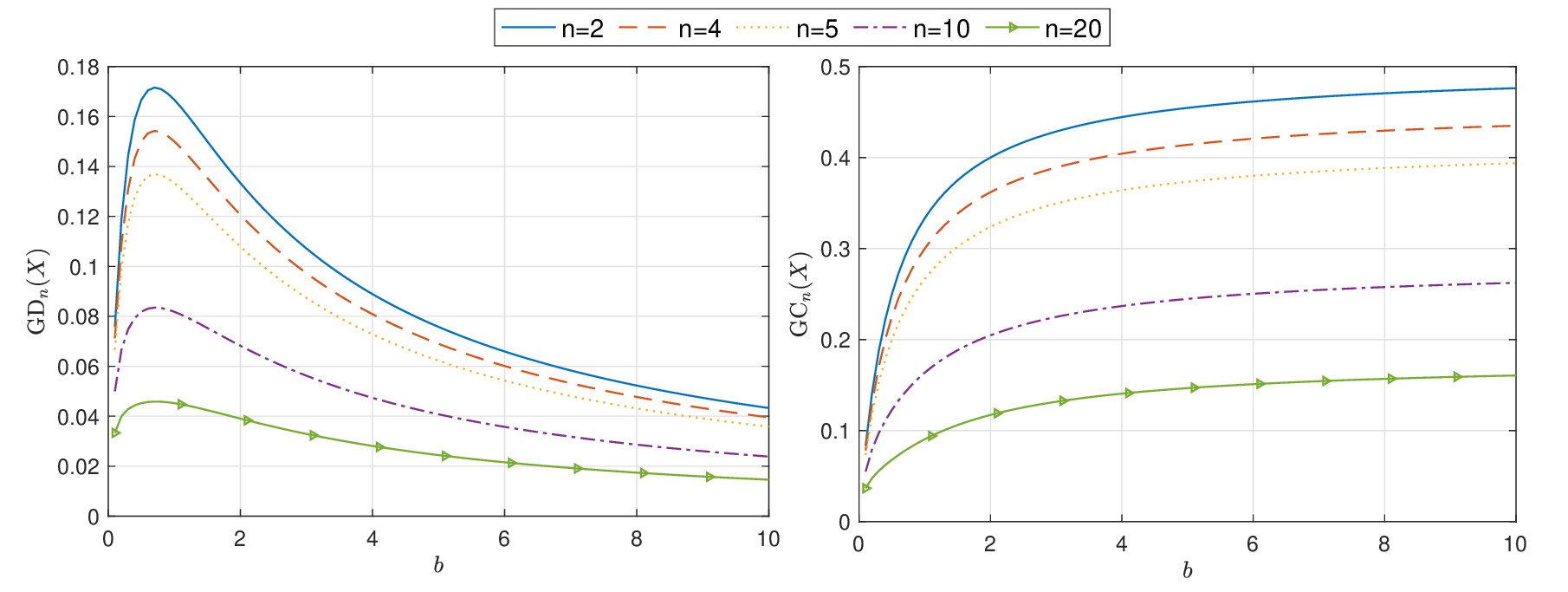}
 \captionsetup{font=small}
 \caption{ \small   Values of  $\mathrm {GD}_n$ and $\mathrm {GC}_n$ for  
$X \sim \text{Beta}(1, b)$  with $b\in(0, 10)$}\label{fig:beta2}
\end{figure}
\item[(iii)]  Log-Normal distribution:
For a log-normal random variable $X \sim \mathrm{LN}(\mu, \sigma^2)$, the CDF is given by
$$
F_X(x) = \Phi\left( \frac{\ln x - \mu}{\sigma} \right),
$$
where $\Phi$ is the standard normal CDF. The quantile function  is given by
$$
F_X^{-1}(t) = \exp(\mu + \sigma \Phi^{-1}(t)).
$$
Thus, $\GD_n(X)$ is
$$
\GD_n(X) =  \int_0^1 \exp(\mu + \sigma \Phi^{-1}(t)) \left( t^{n-1} - (1 - t)^{n-1} \right) \mathrm{d}t.
$$
The expectation of $X$ is $\mathbb{E}[X] = e^{\mu + \sigma^2/2}$, giving
$$
\GC_n(X) = \frac{\GD_n(X)}{e^{\mu + \sigma^2/2}}=\int_0^1 \exp\left( \sigma \Phi^{-1}(t)- \frac{\sigma^2} {2}\right) \left( t^{n-1} - (1 - t)^{n-1} \right) \mathrm{d}t,
$$
which is independent in $\mu$. Figures \ref{fig:LN1}  and \ref{fig:LN2} present the curves of  $\mathrm{GD}_n$ and $\mathrm{GC}_n$ as 
$\mu$ 
$\sigma$, and $n$ vary. 
\begin{figure}[t!]
\centering
 \includegraphics[width=16cm]{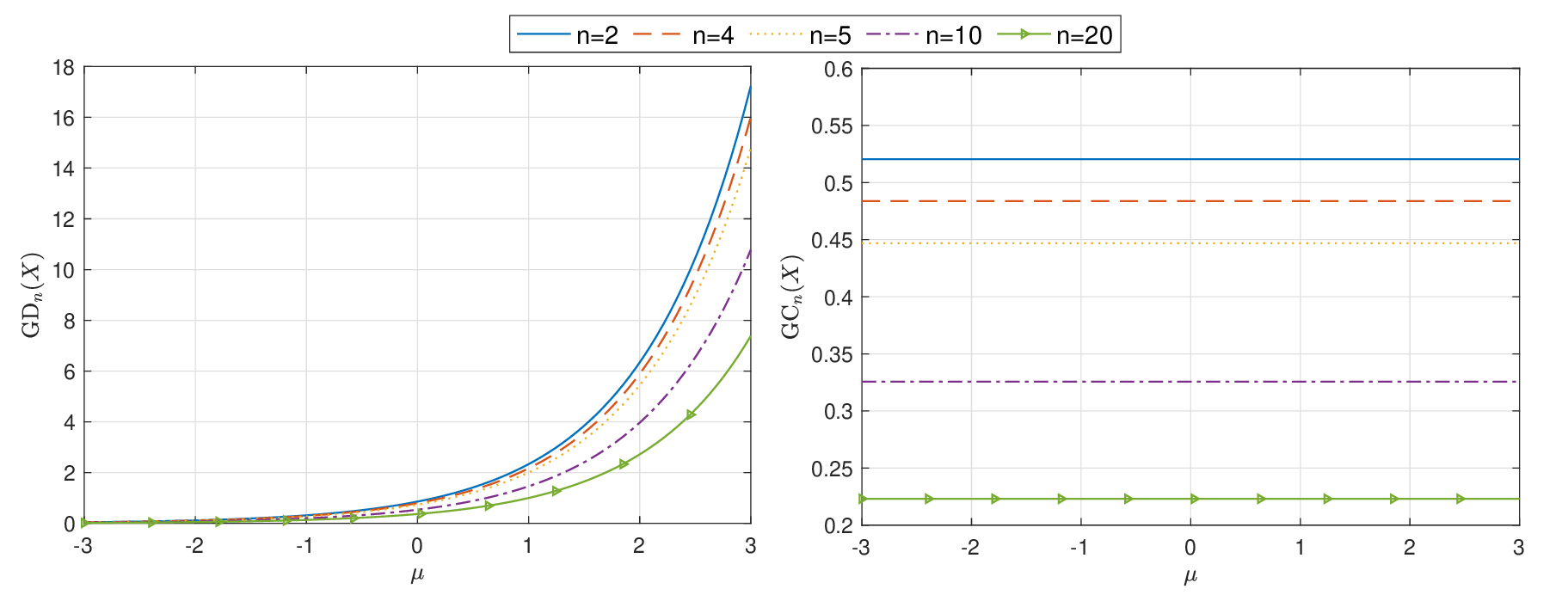}
 \captionsetup{font=small}
 \caption{ \small  Values of  $\mathrm {GD}_n$ and $\mathrm {GC}_n$ for  
$X \sim \mathrm{LN}(\mu, 1)$  with $\mu\in(-3, 3)$}\label{fig:LN1}
\end{figure}
\begin{figure}[t!]
\centering
 \includegraphics[width=16cm]{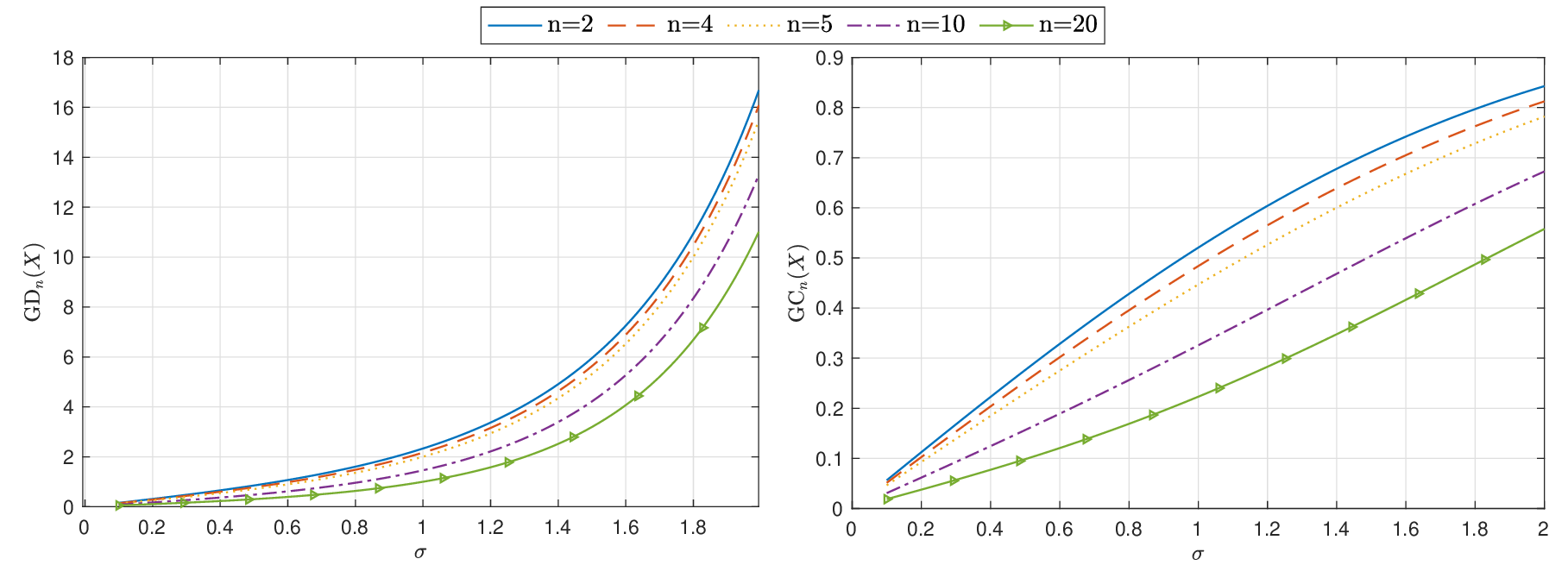}
 \captionsetup{font=small}
 \caption{ \small   Values of  $\mathrm {GD}_n$ and $\mathrm {GC}_n$ for  
$X \sim \mathrm{LN}(1, \sigma^2)$  with $\sigma\in(0, 2)$}\label{fig:LN2}
\end{figure}
\item[(iv)] Exponential distribution:  For an exponential random variable $ X \sim \text{exp}(\lambda)$, the CDF is given by
$
F_X(x) = 1 - e^{-\lambda x}, ~x \geq 0.$
The quantile function  is
$$
F_X^{-1}(t) = -\frac{1}{\lambda} \ln(1 - t), \quad t \in [0, 1).
$$
Thus, we have 
$$\begin{aligned}
\mathrm{GD}_n(X) &= \int_0^1 \left( -\frac{1}{\lambda} \ln(1 - t) \right) \left( t^{n-1} - (1 - t)^{n-1} \right)  \d t\\&=\frac {1} {n\lambda}  \sum_{k=1}^{n-1} \frac {1} {k}.\end{aligned}
$$
The expectation of $X$ is $\mathbb{E}[X] = 1/\lambda$, giving
$$
\GC_n(X) = \frac {1} {n} \sum_{k=1}^{n-1} \frac {1} {k},
$$
which is independent in $\lambda$. 
 Figure \ref{fig:exp} presents the curves of  $\mathrm{GD}_n$ and $\mathrm{GC}_n$ as 
$\lambda$ 
 and $n$ vary.  
\begin{figure}[t!]
\centering
 \includegraphics[width=16cm]{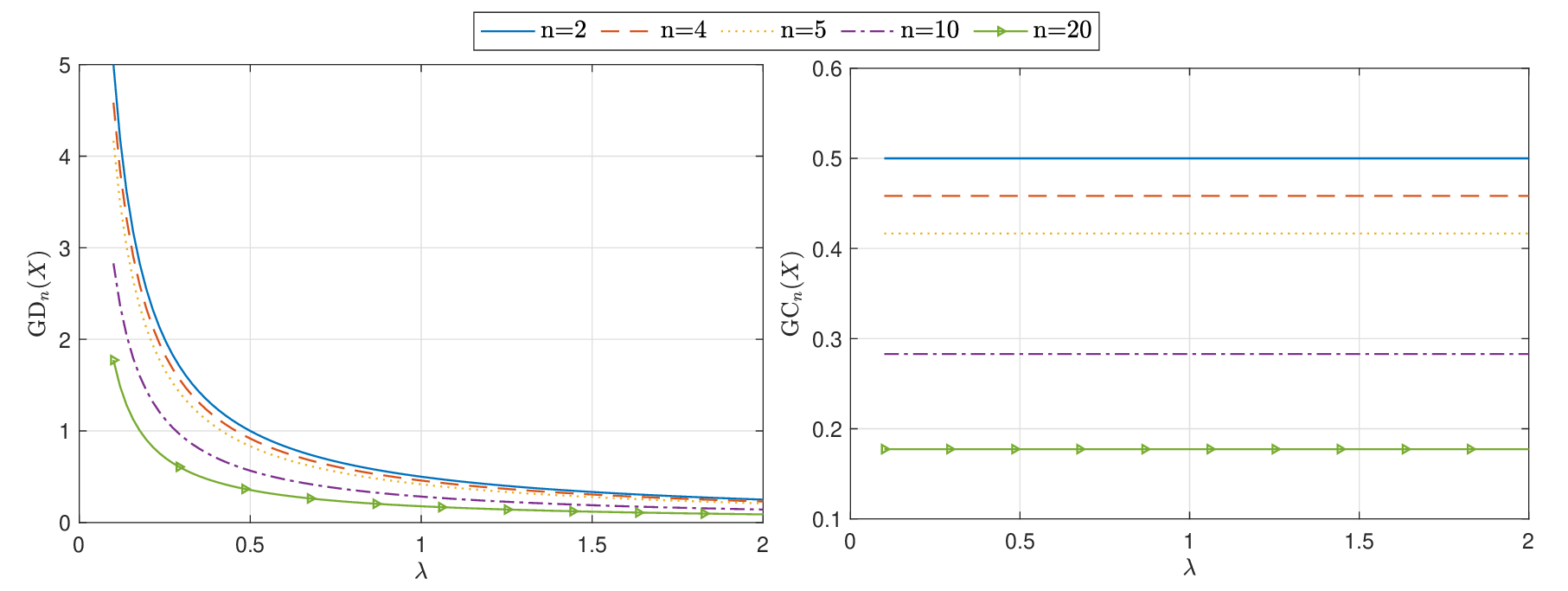}
 \captionsetup{font=small}
 \caption{ \small   Values of  $\mathrm {GD}_n$ and $\mathrm {GC}_n$ for  
$X \sim \text{exp}(\lambda)$  with $\lambda\in(0, 2)$}\label{fig:exp}
\end{figure}
\item[(v)] Pareto distribution:  
For a Pareto-distributed random variable $X \sim \text{Pareto}(\alpha, x_m)$, the  CDF is given by  
$$
F_X(x) = 1 - \left(\frac{x_m}{x}\right)^{\alpha}, \quad x \geq x_m.
$$ 
The quantile function   is  
$$
F_X^{-1}(t) = x_m (1 - t)^{-1/\alpha}, \quad t \in [0, 1).
$$
Thus, we have  
$$
\begin{aligned}
\mathrm{GD}_n(X) &= \int_0^1 \left( x_m (1 - t)^{-1/\alpha} \right) \left( t^{n-1} - (1 - t)^{n-1} \right) \d t \\
&=  x_m\left(B(n, 1-1/\alpha)-\frac{1}{n-1/\alpha}\right).
\end{aligned}
$$
The expectation of $X$ is $\mathbb{E}[X] = \alpha x_m/(\alpha - 1)$ for $\alpha > 1$ (for $ \alpha \leq 1 $, $ \mathbb{E}[Y] $ is infinite), giving 
$$
\GC_n(X) = \frac{\alpha-1}{\alpha}\left(B(n, 1-1/\alpha)-\frac{1}{n-1/\alpha}\right), 
$$
where $B(n,1-1/\alpha)$ is the Beta function.  It is clear that $\GC_n(X)$  is independent in $x_m$.  Figures \ref{fig:par1} and \ref{fig:par2} present the curves of  $\mathrm{GD}_n$ and $\mathrm{GC}_n$ as 
$\alpha$ 
 and $n$ vary.  
\begin{figure}[t!]
\centering
 \includegraphics[width=16cm]{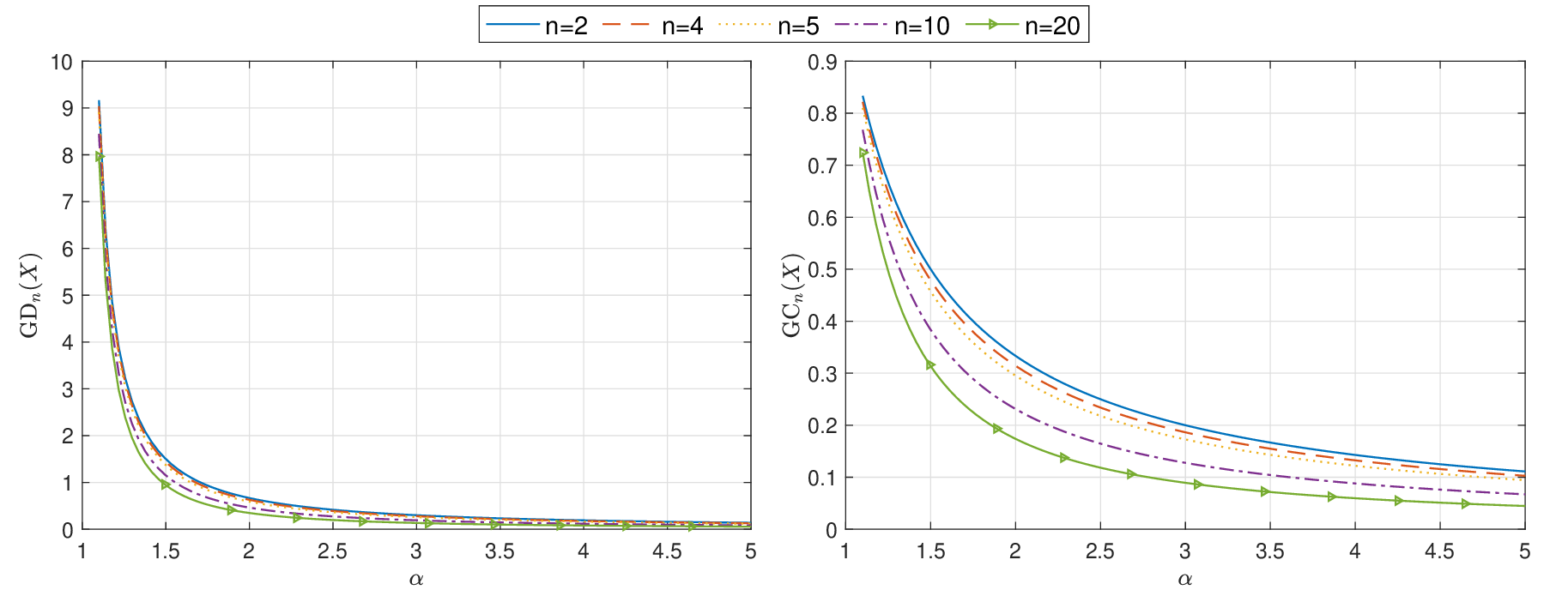}
 \captionsetup{font=small}
 \caption{ \small  Values of  $\mathrm {GD}_n$ and $\mathrm {GC}_n$ for  
$X \sim\text{Pareto}(\alpha, 1)$   with $\alpha\in(1, 5)$}\label{fig:par1}
\end{figure}
\begin{figure}[t!]
\centering
 \includegraphics[width=16cm]{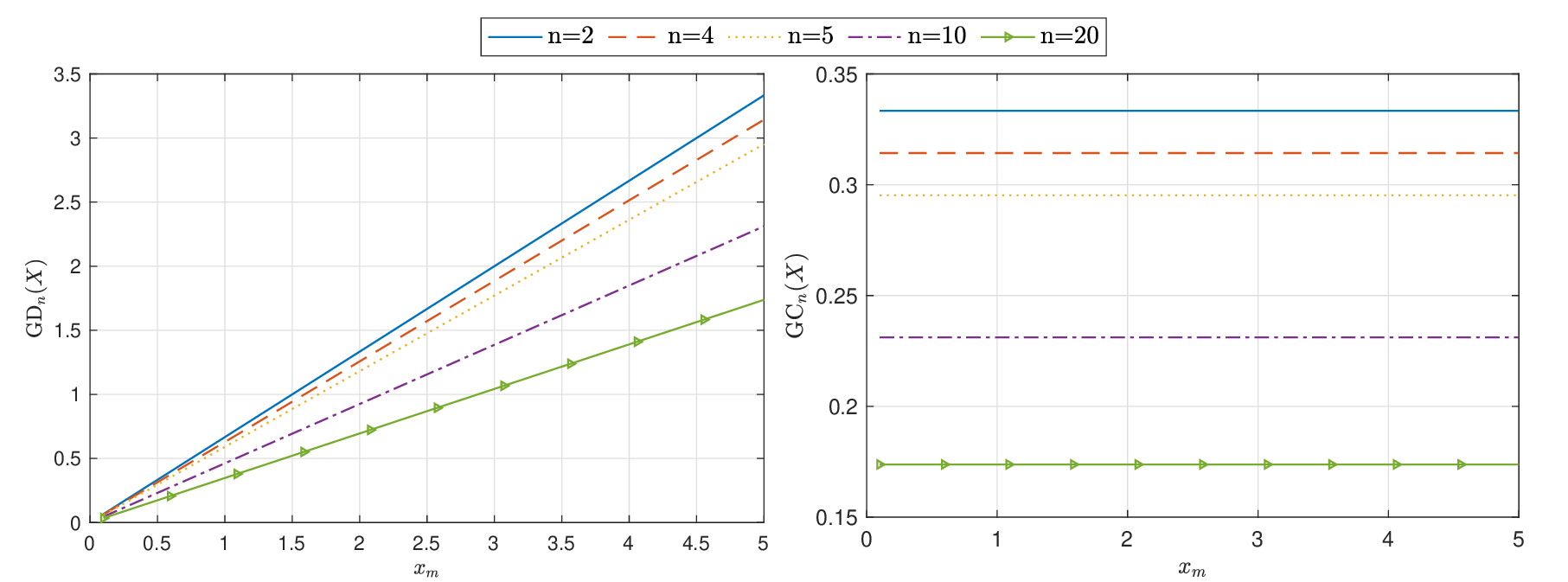}
 \captionsetup{font=small}
 \caption{ \small   Values of  $\mathrm {GD}_n$ and $\mathrm {GC}_n$ for  
$X \sim\text{Pareto}(2, x_m)$  with $x_m\in(0, 5)$}\label{fig:par2}
\end{figure}
\end{itemize}

\section{Monotonicity with respect to the order}
\label{sec:monotone}
In this section, we discuss how $\GD_n$ and $\GC_n$ changes as the parameter $n$ varies. 
Recall that $\GD_3=\GD_2$.  
  The situation for $n \ge 4$ is however different. We get from \eqref{eq:h}, $$h_4(t)=  \frac 12 t (1-t) 
  (2-t+t^2)  \le h_3(t)=h_2(t).$$ Therefore, $\GD_4\le \GD_3=\GD_2$, and $\GD_4\ne \GD_3$.
  On the other hand, since $2-t+t^2\ge 7/4$ for $t\in [0,1]$, we have 
  $
  \GD_4(X)/\GD_3(X)  \in [7/8,1]
  $
  whenever $\GD_3(X) \ne 0$ (see also Proposition \ref{prop:3} below).  
In the following proposition, we can show that $\GD_n$ decreases in $n$, and this result was obtained by \cite{GROG24} with a different proof.
  \begin{proposition}\label{pro:1}
For $X\in L^1$ (resp.~$X\in L^1_+$), 
 $\GD_n(X)$ (resp.~$\GC_n(X)$) decreases   to $0$ as $n\to\infty$. 
  \end{proposition}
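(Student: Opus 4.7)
The plan is to work through the signed Choquet representation $h_n(t)=\frac{1}{n}(1-t^n-(1-t)^n)$ of $\GD_n$ given in Theorem~\ref{th:basic}, establish that $h_n$ dominates $h_{n+1}$ pointwise, and then invoke the quantile representation and dominated convergence for the limit.

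First I would reduce monotonicity to a pointwise inequality between distortions. Since $h_n(1)=0$, one has
\[
\GD_n(X)-\GD_{n+1}(X)=\int_{-\infty}^{\infty}\bigl(h_n(\p(X>x))-h_{n+1}(\p(X>x))\bigr)\d x,
\]
and both integrals are finite for $X\in L^1$ because $|h_n'|\le 1$ forces $|\GD_n(X)|\le \E|X|$. So it suffices to prove $h_n(t)\ge h_{n+1}(t)$ for all $t\in[0,1]$. Writing $u=t$, $v=1-t$, this reduces after clearing denominators to showing
\[
u^n(1+nv)+v^n(1+nu)\le 1 = (u+v)^{n+1}.
\]
The main calculation is to expand the right-hand side by the binomial theorem and subtract: the $u^{n+1},v^{n+1}$ terms combine with the $(n+1)u^nv,(n+1)uv^n$ terms through the identity $u+v=1$ to cancel exactly against $u^n$, $v^n$, $nu^nv$, $nuv^n$, leaving
\[
1-u^n(1+nv)-v^n(1+nu)=\sum_{k=2}^{n-1}\binom{n+1}{k}u^k v^{n+1-k}\ge 0.
\]
For $n=2$ the sum is empty, recovering the known equality $\GD_2=\GD_3$; for $n\ge 3$ the sum is strictly positive on $(0,1)$. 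This algebraic reorganization is the only nontrivial step, and I expect it to be the main obstacle since the naive attempt to compare $h_n$ and $h_{n+1}$ directly does not telescope; one needs to recognize the remainder as an incomplete binomial expansion of $(u+v)^{n+1}$.

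Once $h_n\ge h_{n+1}$ pointwise is established, monotonicity of $(\GD_n(X))_{n\ge 2}$ follows immediately, and the monotonicity of $\GC_n(X)=\GD_n(X)/\E[X]$ follows by dividing by the $n$-independent positive constant $\E[X]$.

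For the limit, I would use the quantile representation
\[
\GD_n(X)=\int_0^1 F_X^{-1}(t)\bigl(t^{n-1}-(1-t)^{n-1}\bigr)\d t.
\]
The integrand is dominated by $|F_X^{-1}(t)|\in L^1((0,1))$ (since $X\in L^1$), and for each $t\in(0,1)$ both $t^{n-1}$ and $(1-t)^{n-1}$ tend to $0$. Dominated convergence then gives $\GD_n(X)\to 0$, whence $\GC_n(X)\to 0$ as well. Combined with nonnegativity (which holds since $h_n$ is concave with $h_n(0)=h_n(1)=0$), this completes the proof.
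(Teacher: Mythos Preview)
Your proof is correct but proceeds by a genuinely different route from the paper's. For the limit $\GD_n(X)\to 0$ both arguments coincide (dominated convergence on the quantile representation), but for monotonicity the paper gives a probabilistic proof: with $R_k=\max\{X_1,\dots,X_k\}-\min\{X_1,\dots,X_k\}$, it conditions on $(\max,\min)$ of $n{+}1$ iid copies, uses the exchangeability identity $\E[R_n\mid A_{n+1},B_{n+1}]=\E[\max_{j\ne i}X_j-\min_{j\ne i}X_j\mid A_{n+1},B_{n+1}]$, sums over $i$, and observes that the resulting sum equals $nR_{n+1}+(X_{2:n+1}-X_{n:n+1})\ge nR_{n+1}$, whence $\frac{1}{n}\E[R_n]\ge\frac{1}{n+1}\E[R_{n+1}]$. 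Your argument instead stays entirely within the distortion representation and reduces $h_n\ge h_{n+1}$ to the neat binomial identity
\[
1-u^n(1+nv)-v^n(1+nu)=\sum_{k=2}^{n-1}\binom{n+1}{k}u^kv^{n+1-k}\ge 0,\qquad u+v=1.
\]
The paper's approach is self-contained at the level of order statistics and does not invoke Theorem~\ref{th:basic}; yours is more elementary once the Choquet representation is in hand, yields an explicit nonnegative expression for $h_n-h_{n+1}$ (which in particular recovers the equality $\GD_2=\GD_3$ when the sum is empty), and connects naturally with the ratio bounds in Proposition~\ref{prop:3} via $h_n/h_m$.
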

  \begin{proof}
  We only prove the statement for $\GD_n$, as the proof for $\GC_n$ is identical. 
  Let   $n \ge2 $ and  $X_1,\dots,X_{n+1}$ be iid copies of $X$. 
  Define $A_k = \max \left\{X_1,  \dots X_k\right\} $  and $B_k = \min \left\{X_1, \dots X_k\right\} $ for $k\in [n+1]$. By symmetry,     for each $i \in [n+1]$,  we have 
$$\begin{aligned} &\mathbb E\left[ A_n-B_n \mid  A_{n+1}, B_{n+1} \right]
 =\mathbb E\left[\max_{j\in [n+1]\setminus \{i\}} X_j -\min_{j\in [n+1]\setminus \{i\}} X_j   \mid A_{n+1}, B_{n+1}\right].
 \end{aligned}$$ By setting $R_n = A_n - B_n$, we have  
$$  
\sum_{i=1}^{n+1} \E\left[\max_{j\in [n+1]\setminus \{i\}} X_j -\min_{j\in [n+1]\setminus \{i\}} X_j   \mid  A_{n+1}, B_{n+1} \right] =\E[(n+1) R_n\mid  A_{n+1}, B_{n+1} ]. 
$$
Note 
that 
$$
\sum_{i=1}^{n+1} \max_{j\in [n+1]\setminus \{i\}} X_j
= n A_{n+1} + X_{n:n+1}  
\mbox{~~~and~~~}\sum_{i=1}^{n+1} \max_{j\in [n+1]\setminus \{i\}} X_j
= n B_{n+1} + X_{2:n+1}, 
$$
where $X_{n:n+1}$ (resp.~$X_{2:n+1}$) is the second largest (resp.~second smallest) value in $X_1,\dots,X_{n+1}$.
Since $X_{n:n+1}\ge X_{2:n+1}$, 
We have $\E\left[(n+1) R_n \mid A_{n+1}, B_{n+1}\right]\geq\E\left[n R_{n+1} \mid A_{n+1}, B_{n+1}\right]=n R_{n+1}$.  It  then follows that  $$ 
 \E\left[\left.\frac{1}{n} R_n \,\right\rvert\,  A_{n+1}, B_{n+1}\right]\geq\frac{1}{n+1} R_{n+1},$$ and thus $$
 \frac{1}{n} \E [R_n]\geq \frac{1}{n+1} \E [R_{n+1}],
$$ which completes the proof of the first statement. Furthermore, by the dominated convergence theorem, we have
$$
\lim_{n \to \infty} \GD_n(X) = \int_0^1 F_X^{-1}(t) \lim_{n \to \infty} \left( t^{n-1} - (1-t)^{n-1} \right) \d t = 0,
$$
showing the second statement. 
\end{proof}
In the following example, we compare the behavior of $ \GC $  and $ \GC_n$  for two groups of different distributions: (i) $ X \sim \mathrm{LN}(\mu,\sigma)$ and   $Y \sim \text{Pareto}(\alpha, x_m)$, and (ii) $ X \sim \text{Beta}(a,b)$ and   $Y \sim \text{Pareto}(\alpha, x_m)$. The specific distribution details and the expression for computing $\GC_n$
  are provided in Appendix \ref{sec:7}. Note that $\GC_n$ is independent of $\mu$ for the log-normal distribution and independent of $x_m$ for the Pareto distribution. 
\begin{figure}[t!]
\centering
 \includegraphics[width=16cm]{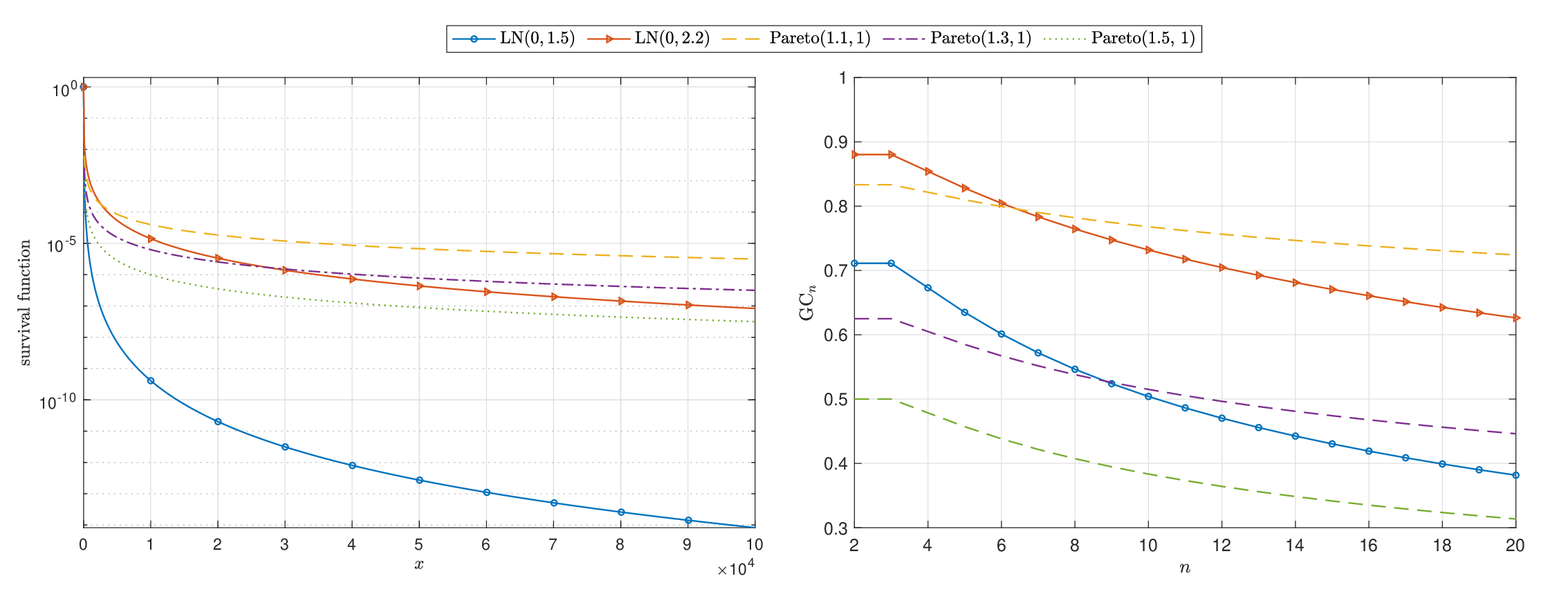}
 \captionsetup{font=small}
 \caption{ \small Survival functions (left panel)  and  values of  $\mathrm {GC}_n$ (right panel) for 
$\mathrm{LN}(0,\sigma)$ and $\mathrm{Pareto}(\alpha,1)$}\label{fig:1}
\end{figure}

\begin{figure}[t!]
\centering
 \includegraphics[width=16cm]{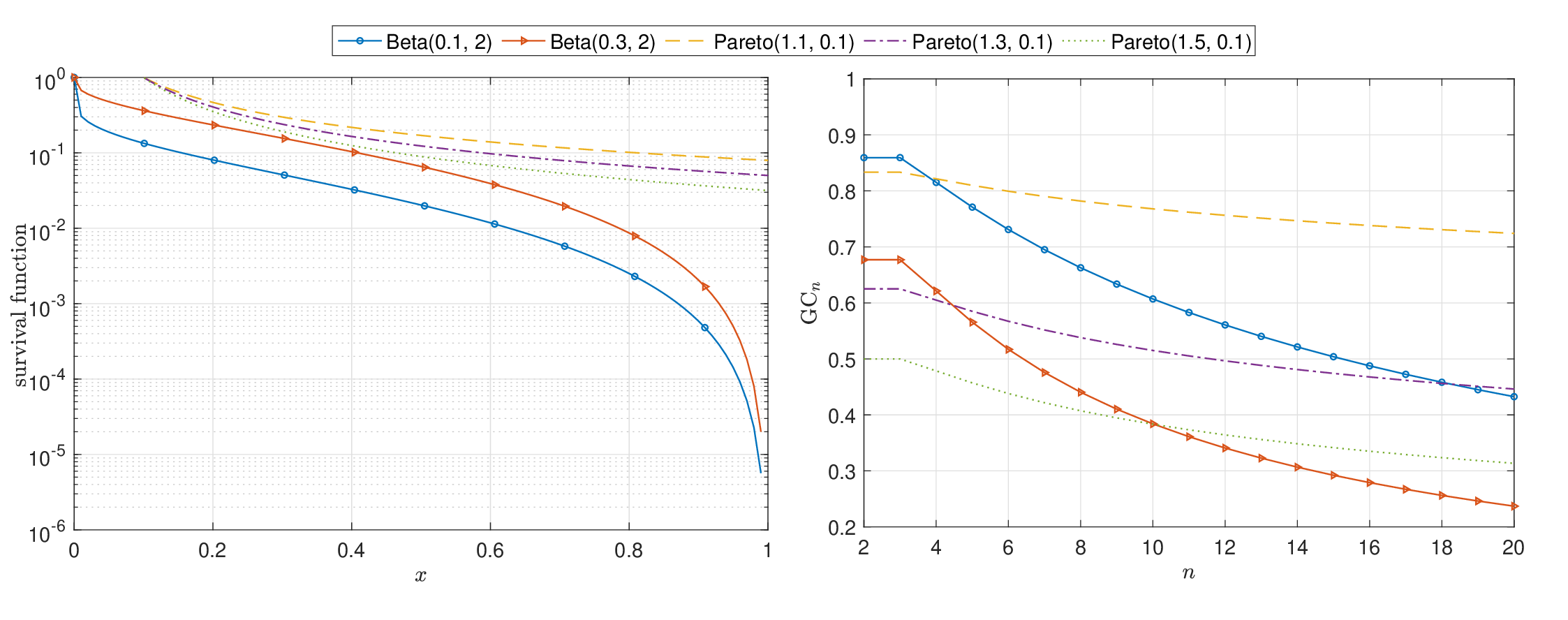}
 \captionsetup{font=small}
 \caption{ \small Survival functions (left panel)  and  values of  $\mathrm {GC}_n$ (right panel) for 
$\mathrm{Beta}(a,2)$ and $\mathrm{Pareto}(\alpha,0.1)$}\label{fig:2}
\end{figure}

On  the left panel of Figure \ref{fig:1}, we show  the survival functions of the Pareto and log-normal distributions. Clearly, the Pareto distribution exhibits a slower tail decay, meaning it has a heavier tail compared to the log-normal distribution. On the right panel  of Figure \ref{fig:1}, we illustrate $\GC_n$ as a function of $n$ for both distributions.  Our findings reveal an interesting pattern:  increasing $n$ enhances the sensitivity of $\GC_n$ to tail risk, capturing the disparity between distributions with different tail behaviors. Specifically, when $n = 2$, $\GC_n$ for the Pareto distribution is lower than that of the log-normal distribution. However, as $n$ increases, $\GC_n$ for the Pareto distribution eventually surpasses that of the log-normal distribution.  This behavior can be understood from the definition of $\GC_n$. The term $\GD_n(X)$  is influenced by higher-order differences in the distribution function, making it more sensitive to tail behavior as $n$ increases. Since the Pareto distribution has a heavier tail than the log-normal distribution, the contribution of extreme values becomes more pronounced for larger $n$, leading to a higher $\GC_n$.  The same phenomenon also occurs for the Pareto and Beta distributions in Figure \ref{fig:2}. 

From these numerical results, we can see that for a moderate choice of $n$ such as $n\le 20$, the value of $\GC_n(X)$ is typically not close to $0$, so that the limiting behavior $\GC_n(X)\to0$ in Proposition \ref{pro:1} does not hurt the practical applicability of $\GC_n$.  
Also note that by combining Theorem \ref{th:chara} and  Proposition \ref{pro:1}, we have   two asymptotic results: 
$$
\lim_{n\to\infty} \sup_{X\in L^1_+} \GC_n (X)= 1 ~~ \mbox{ and } ~~
\sup_{X\in L^1_+}  \lim_{n\to\infty} \GC_n (X)= 0.
$$
That is, the limit and the supremum cannot be interchanged. 
Even for  large values of $n$, there exists some random variables in $L^1_+$ for which $\GC_n$ is not close to $0$, and the index $\GC_n$ can be used to analyze the inequality in some extreme distributions.

{
\small

}


\begin{thebibliography}{10}
\bibitem[\protect\citeauthoryear{Artzner et al.}{Artzner et al.}{1999}]{ADEH99}
{Artzner, P., Delbaen, F., Eber, J.-M. and Heath, D.} (1999). Coherent measures of risk. \emph{Mathematical Finance}, \textbf{9}(3), 203--228.
 \bibitem[\protect\citeauthoryear{Atkinson and Bourguignon}{Atkinson and Bourguignon}{2014}]{AB14}
Atkinson, A. B. and Bourguignon, F.  (2014). \emph{Handbook of income distribution}. Elsevier.


 \bibitem[\protect\citeauthoryear{Atkinson and Piketty}{Atkinson and Piketty}{2007}]{AP07}
 Atkinson, A. B. and Piketty T. (2007). \emph{Top Incomes Over the Twentieth Century: A Contrast between
Continental European and English-Speaking Countries}. Oxford University Press.





\bibitem[\protect\citeauthoryear{Banerjee and Duflo}{Banerjee and Duflo}{2011}]{BD11}
Banerjee, A.V. and Duflo, E. (2011). \emph{Poor Economics: A Radical Rethinking of the Way to Fight Global Poverty.}
New York: Public Affairs.

\bibitem[\protect\citeauthoryear{Boyd and Vandenberghe}{Boyd and Vandenberghe}{2004}]{BV04}
Boyd, S. P. and Vandenberghe, L. (2004). \emph{Convex Optimization.} Cambridge university press.

 \bibitem[\protect\citeauthoryear{Casalaina-Martin et al.}{Casalaina-Martin et al.}{2017}]{CFMW17} 
   Casalaina-Martin, S., Frongillo, R., Morgan, T. and Waggoner, B. (2017). Multi-observation elicitation.  
    \emph{Proceedings of the 2017 Conference on Learning Theory,} PMLR \textbf{65}, 449--464.  


\bibitem[\protect\citeauthoryear{Ceriani and Verme}{Ceriani and Verme}{2012}]{CV12}
Ceriani, L. and Verme, P. (2012). The origins of the Gini index: extracts from Variabilit\`a e Mutabilit\`a (1912) by Corrado Gini. \emph{The Journal of Economic Inequality}, \textbf{10}(3), 421--443.



\bibitem[\protect\citeauthoryear{Clarke and Kopczuk}{Clarke and Kopczuk}{2025}]{CK25}
 Clarke, C. and Kopczuk, W. (2025). Measuring income and income inequality. \emph{Journal of Economic Perspectives}, \textbf{39}(2), 103--126.
 
 


\bibitem[\protect\citeauthoryear{Del Barrio et al.}{2005}]{DGU05}
Del Barrio, E., Gin\'e, E. and Utzet, F. (2005). Asymptotics for $L_2$ functionals of the empirical quantile process, with applications to tests of fit based on weighted Wasserstein distances. \emph{Bernoulli}, \textbf{11}(1), 131--189.

\bibitem[\protect\citeauthoryear{Denneberg}{Denneberg}{1990}]{D90}
{Denneberg, D.} (1990). Premium calculation: why standard deviation should be replaced by absolute
deviation. \emph{ASTIN Bulletin}, \textbf{20}, 181--190.

\bibitem[\protect\citeauthoryear{Duclos}{Duclos}{2000}]{D00}
Duclos, J. Y. (2000). Gini indices and the redistribution of income. \emph{International Tax and Public Finance}, \textbf{7}(2), 141--162.

\bibitem[\protect\citeauthoryear{Eeckhoudt and Laeven}{2021}]{EL21}
 Eeckhoudt, L. and Laeven, R. (2021). Dual moments and risk attitudes.
 \emph{Operations Research}, \textbf{70}(3), 1330--1341.
 
   
   

 \bibitem[\protect\citeauthoryear{F\"ollmer and Schied}{F\"ollmer and Schied}{2016}]{FS16} F\"ollmer, H.~and Schied, A.~(2016). \emph{Stochastic Finance. An Introduction in Discrete Time}. Fourth Edition.  {Walter de Gruyter, Berlin}.

\bibitem[\protect\citeauthoryear{Frongillo and Kash}{Frongillo and Kash}{2021}]{FK21}Frongillo, R. M. and Kash, I. A. (2021). Elicitation complexity of statistical properties. \emph{Biometrika},
\textbf{108}(4), 857--879.
 
 \bibitem[\protect\citeauthoryear{Frongillo et al.}{2017}]{FMMW17}
 Frongillo, R., Mehta, N. A., Morgan, T. and Waggoner, B. Multi-observation regression. \emph{Proceedings of the 22nd International Conference on Artificial Intelligence and Statistics}, PMLR \textbf{89}, 2691--2700.

 \bibitem[\protect\citeauthoryear{Furman et al.}{Furman et al.}{2017}]{FWZ17}
{Furman, E., Wang, R. and Zitikis, R.} (2017). Gini-type measures of risk and variability: Gini shortfall, capital allocation and heavy-tailed risks. \emph{Journal of Banking and Finance}, \textbf{83}, 70--84.

\bibitem[\protect\citeauthoryear{Furman and Zitikis}{Furman and Zitikis}{2017}]{FZ17}
Furman, E. and Zitikis, R. (2017). Beyond the Pearson correlation: Heavy-tailed risks, weighted Gini correlations, and a Gini-type weighted insurance pricing model. \emph{ASTIN Bulletin}, \textbf{47}(3), 919--942.

\bibitem[\protect\citeauthoryear{Gastwirth}{1971}]{G71}
Gastwirth, J. L. (1971). A general definition of the Lorenz curve. \emph{Econometrica}, \textbf{39}(6), 1037--1039.

\bibitem[\protect\citeauthoryear{Gastwirth}{1972}]{G72}
Gastwirth, J. L. (1972). The estimation of the Lorenz curve and Gini index. \emph{The Review of Economics and Statistics}, \textbf{54}(3),  
306---316.

\bibitem[\protect\citeauthoryear{Gavilan-Ruiz et al.}{2024}]{GROG24}
Gavilan-Ruiz, J. M., Ruiz-Gándara, Á., Ortega-Irizo, F. J. and Gonzalez-Abril, L. (2024). Some notes on the Gini index and new inequality measures: The nth Gini index. \emph{Stats}, \textbf{7}(4), 1354--1365.




\bibitem[\protect\citeauthoryear{Gershkov et al.}{2025}]{GMSZ25}
Gershkov, A., Moldovanu, B., Strack, P. and Zhang, M. (2025). Optimal security design for risk-averse investors. \emph{American Economic Review}, \textbf{115}(6), 2050--2092.


 \bibitem[\protect\citeauthoryear{Gilboa and Schmeidler}{Gilboa and Schmeidler}{1989}]{GS89}
 Gilboa, I. and  Schmeidler, D. (1989). Maxmin expected utility with non-unique prior. \emph{Journal of  Mathematical  Economics},  \textbf{18}(2), 141--153.


\bibitem[\protect\citeauthoryear{Gini}{Gini}{1912}]{G12}
Gini, C. (1912). \emph{Variabilit\`a e Mutabilit\`a: Contributo allo Studio delle Distribuzioni e delle Relazioni Statistiche.}
Bologna: Tipografia di Paolo Cuppini.

\bibitem[\protect\citeauthoryear{Gini}{Gini}{1921}]{G21}
Gini, C. (1921). Measurement of inequality of incomes. \emph{The Economic Journal}, \textbf{31}(121), 124--125.


\bibitem[\protect\citeauthoryear{Glasser}{1962}]{G62}
 Glasser, G. (1962). Variance formulas for the mean difference and coefficient of concentration. \emph{Journal of the American Statistical Association}, \textbf{57}(299), 648--654.


\bibitem[\protect\citeauthoryear{Gneiting}{2011}]{G11}
Gneiting, T. (2011). Making and evaluating point forecasts. \emph{Journal of the American Statistical Association}, \textbf{106}(494), 746--762.



\bibitem[\protect\citeauthoryear{Hartmann and Kauffeldt}{2025}]{HK25}
Hartmann, L. and Kauffeldt, T. F. (2025). Binary diversification characterizes exact capacities. \emph{Operations Research}, forthcoming.





\bibitem[\protect\citeauthoryear{Huber and Ronchetti}{Huber and
  Ronchetti}{2009}]{HR09}
Huber, P.~J. and Ronchetti E.~M.  (2009).
  {\em Robust Statistics.\/} Second ed.,
 Wiley Series in Probability and Statistics. Wiley, New Jersey. First Ed.: Huber, P. (1981).




\bibitem[\protect\citeauthoryear{Kenworthy and Pontusson}{2005}]{KP05}
Kenworthy, L. and Pontusson, J. (2005). Rising inequality and the politics of redistribution in affluent countries. \emph{Perspectives on Politics}, \textbf{3}(3), 449--471.


\bibitem[\protect\citeauthoryear{Kopczuk et al.}{2010}]{KSS10}
Kopczuk, W., Saez, E. and Song, J. (2010). Earnings inequality and mobility in the United States: Evidence from social security data since 1937. \emph{The Quarterly Journal of Economics}, \textbf{125}(1), 91--128.



\bibitem[\protect\citeauthoryear{Kr\"atschmer et al.}{2014}]{KSZ14} Kr\"atschmer, V., Schied, A. and Z\"ahle, H. (2014). Comparative and qualitative robustness for law-invariant risk measures. \emph{Finance and Stochastics}, \textbf{18}(2), 271--295.



  
  
  
  
\bibitem[\protect\citeauthoryear{Lerman and Yitzhaki}{1984}]{LY84}   
	Lerman, R. I. and Yitzhaki, S. (1984). A note on the calculation and interpretation of the Gini index. \emph{Economics Letters}, \textbf{15}(3-4), 363--368.	
 
  
  
%

   



\bibitem[\protect\citeauthoryear{McNeil et al.}{McNeil et al.}{2015}]{MFE15}
{McNeil, A. J., Frey, R. and Embrechts, P.} (2015). \emph{Quantitative
Risk Management: Concepts, Techniques and Tools}. Revised Edition.  Princeton, NJ:
Princeton University Press.


 \bibitem[\protect\citeauthoryear{Nolde and Ziegel}{Nolde and Ziegel}{2020}]{NZ17}
 Nolde, N. and Ziegel, J. F. (2017). Elicitability and backtesting: Perspectives for banking regulation. \emph{Annals of Applied Statistics},  \textbf{11}(4), 1833--1874.


 
\bibitem[\protect\citeauthoryear{Pesenti et al.}{Pesenti et al.}{2025}]{PWW25}
Pesenti, G., Wang, Q. and Wang, R. (2025). Optimizing distortion riskmetrics with distributional uncertainty. \emph{Mathematical Programming}, forthcoming. 

  \bibitem[\protect\citeauthoryear{Quiggin}{Quiggin}{1982}]{Q82}
Quiggin, J. (1982).  A theory of anticipated utility. \emph{Journal of  Economic Behavior and  Organization},  \textbf{3}(4), 323--343.


 \bibitem[\protect\citeauthoryear{Ringuest et al.}{Ringuest et al.}{2004}]{RGC04}
 Ringuest, J. L., Graves, S. B. and  Case, R. H. (2004). Mean–Gini analysis in R\&D portfolio selection. \emph{European Journal of Operational Research}, \textbf{154}(1), 157--169.

 
 \bibitem[\protect\citeauthoryear{Rockafellar et al.}{Rockafellar et al.}{2006}]{RUZ06}
Rockafellar, R. T., Uryasev, S. and Zabarankin, M. (2006). Generalized deviation in risk analysis. \emph{Finance and Stochastics}, \textbf{10}, 51--74.


\bibitem[\protect\citeauthoryear{Rothschild and Stiglitz}{Rothschild and Stiglitz}{1970}]{RS70}
Rothschild, M. and Stiglitz, J. E. (1970). Increasing risk: I. A definition. \emph{Journal of Economic Theory}, \textbf{2}(3), 225--243.





\bibitem[\protect\citeauthoryear{Schmeidler}{1989}]{Schmeidler89}
Schmeidler, D. (1989). Subjective probability and expected utility without additivity. \emph{Econometrica}, \textbf{57}(3), 571--587.



\bibitem[\protect\citeauthoryear{Schmidt and Zank}{2008}]{SZ08}
Schmidt, U. and Zank, H. (2008). Risk aversion in cumulative prospect theory. \emph{Management Science}, \textbf{54}(1), 208--216.



\bibitem[\protect\citeauthoryear{Strack and Wambach}{2017}]{StrackWambach17}
Strack, P. and Wambach, A. (2017). Incentives for advice in a dual utility framework. \emph{Theoretical Economics}, \textbf{12}(1), 1--38.



\bibitem[\protect\citeauthoryear{Vila and  Saulo}{Vila and  Saulo}{2025}]{VS25}
Vila, R. and  Saulo, H. (2025). Novel measures and estimators of income inequality. \emph{arXiv}: 2508.02965.

 \bibitem[\protect\citeauthoryear{Wang et al.}{Wang et al.}{2020a}]{WWW20a}
Wang, Q., Wang, R. and Wei, Y.  (2020a). Distortion riskmetrics on general spaces. \emph{ASTIN Bulletin}, \textbf{50}(4), 827--851. 


\bibitem[\protect\citeauthoryear{Wang and Wei}{Wang and Wei}{2020}]{WW20}
Wang, R. and Wei, Y. (2020). Risk functionals with convex level sets. \emph{Mathematical Finance}, \textbf{30}(4), 1337--367.

 \bibitem[\protect\citeauthoryear{Wang et al.}{Wang et al.}{2020b}]{WWW20}
Wang, R., Wei, Y. and Willmot, G. E. (2020b). Characterization, robustness and aggregation of signed Choquet integrals. \emph{Mathematics of Operations Research}, \textbf{45}(3), 993--1015.



 \bibitem[\protect\citeauthoryear{Wang and Wu}{Wang and Wu}{2025}]{WW25}
 Wang, R. and Wu, Q. (2025). Probabilistic risk aversion for generalized rank-dependent functions. \emph{Economic Theory}, \textbf{79}, 1055--1082.
 


\bibitem[\protect\citeauthoryear{Yaari}{Yaari}{1987}]{Y87}
Yaari, M. E. (1987). The dual theory of choice under risk. \emph{Econometrica}, \textbf{55}(1), 95--115.

\bibitem[\protect\citeauthoryear{Yitzhaki}{Yitzhaki}{1982}]{Y82}
Yitzhaki, S. (1982). Stochastic dominance, mean variance, and Gini's mean difference. \emph{The American Economic Review}, \textbf{72}(1), 178--185.

\bibitem[\protect\citeauthoryear{Yitzhaki}{Yitzhaki}{1998}]{Y98}
Yitzhaki, S. (1998). More than a dozen alternative ways of spelling Gini. \emph{Research on Economic Inequality}, \textbf{8},
13--30.


\bibitem[\protect\citeauthoryear{Yitzhaki}{Yitzhaki}{2003}]{Y03}
Yitzhaki, S. (2003). Gini's mean difference: A superior measure of variability for non-normal distributions.
\emph{Metron}, \textbf{51}, 285–16.


\bibitem[\protect\citeauthoryear{Yitzhaki and Schechtman}{Yitzhaki and Schechtman}{2012}]{YS12}
Yitzhaki, S. and Schechtman, E. (2012). More than a dozen alternative ways of spelling Gini. 
In \emph{The Gini Methodology: A Primer on a Statistical Methodology} (pp.~11--31). 
New York: Springer.

\bibitem[\protect\citeauthoryear{Zheng et al.}{Zheng et al.}{2000}]{ZDT00}
Zheng, F., Xu, L. D. and Tang, B. (2000). Forecasting regional income inequality in China. \emph{European Journal of Operational Research}, 124(2), 243--254.
\end{thebibliography}
\end{document}